\numberwithin{equation}{section}
\newtheorem{theoreme}{Theorem}[section]
\newtheorem{lemma}[theoreme]{Lemma}
\theoremstyle{definition}
\newtheorem{remark}[theoreme]{Remark}
\newcommand{\suma}[2]{\sum\limits_{#1}^{#2}}
\newcounter{smallarabics}
\newenvironment{arabicenumerate}
{\begin{list}{{\normalfont\textrm{(\arabic{smallarabics})}}}
  {\usecounter{smallarabics}\setlength{\itemindent}{0cm}
   \setlength{\leftmargin}{5ex}\setlength{\labelwidth}{4ex}
   \setlength{\topsep}{0.75\parsep}\setlength{\partopsep}{0ex}
   \setlength{\itemsep}{0ex}}}
{\end{list}}
\newcounter{smallroman}
\newcommand{\ben}{\begin{arabicenumerate}}
\newcommand{\een}{\end{arabicenumerate}}
\def\rr{{\mathbb R}}
\def\zz{{\mathbb Z}}
\def\cc{{\mathbb C}}
\def\nn{{\mathbb N}}
\def\ii{{\rm i}}
\def\oplusal{\mathop{\hbox{\raise 1.5 ex
  \hbox{$\scriptscriptstyle\rm al$}
\kern -0.92 em \hbox{$\oplus$}}}}
\def\otimesal{\mathop{\hbox{\raise 1.5 ex
  \hbox{$\scriptscriptstyle\rm al$}
\kern -0.92 em \hbox{$\otimes$}}}}
\def\Gammal{\hbox{\raise 1.68 ex 
\hbox{$\scriptscriptstyle\rm al$}\kern -0.50 em $\Gamma$}}
\def\bar{\overline}
\def\loc{{\rm loc}}
\renewcommand\Re{{\rm Re}}
\def\i{{\rm i}}
\def\sgn{{\rm sgn}}
\def\Tr{{\rm Tr}}
\def\e{{\rm e}}
\def\d{{\rm d}}
\def\cD{{\cal D}}
\def\cH{{\cal  H}}
\def\cJ{{\cal J}}
\def\cI{{\cal I}}
\def\cN{{\cal N}}
\def\cK{{\cal K}}
\def\cB{{\cal B}}
\def\cW{{\cal W}}
\def\cU{{\cal U}}
\def\Im{{\rm Im}}
\def\wlim{{\rm w-}\lim}
\def\12{\tfrac{1}{2}}
\def\32{\tfrac{3}{2}}
\def\52{\tfrac{5}{2}}
\def\p{ \partial}
\def\rN{\mathrm{N}}
\def\rD{\mathrm{D}}
\def\isoI{\mathbb{I}}
\def\isoK{\mathbb{K}}
\begin{document}
\title{Exactly  solvable Schr\"odinger operators \\ related to the confluent equation}

\author{
	Jan Derezi\'{n}ski\\
	\small Department of Mathematical Methods in Physics,\\
	\small Faculty of Physics, University of Warsaw, \\
	\small Pasteura 5, 02-093 Warszawa, Poland\\
	\small email: \texttt{jan.derezinski@fuw.edu.pl}\\
	\and
	Jinyeop Lee\\
	\small Department of Mathematics and Computer Science,\\
	\small University of Basel,\\
	\small Spiegelgasse 1, CH-4051, Basel, Switzerland\\
	\small email: \texttt{jinyeop.lee@unibas.ch}\\
}

\maketitle

\begin{abstract}
Our paper investigates one-dimensional Schr\"odinger operators defined as closed operators on $L^2(\rr)$ or $L^2(\rr_+)$ that are exactly solvable in terms of confluent functions (or, equivalently, Whittaker functions). We allow the potentials to be complex. They fall into three families:  Whittaker operators (or radial Coulomb Hamiltonians), Schr\"odinger operators with Morse potentials and isotonic oscillators. For each of them, we discuss  the corresponding basic holomorphic family of closed operators and the integral kernel of their resolvents. We also describe transmutation identities that relate these resolvents. These identities interchange spectral parameters with coupling constants across different operator families.

A similar analysis is performed for one-dimensional Schr\"odinger operators solvable in terms of Bessel functions (which are reducible to  special cases of Whittaker functions). They fall into  two families: Bessel operators and Schr\"odinger operators with exponential potentials.

To make our presentation self-contained, we include a short summary of  the theory of closed one-dimensional Schr\"odinger operators with singular boundary conditions. We also provide a concise review of special functions that we use.
\end{abstract}

\section{Introduction}

{\em One-dimensional Schr\"odinger operators} are operators of the form
\begin{equation}
	\label{oper} L := -\partial_x^2 + V(x),
\end{equation}
where $V(x)$ is the {\em potential}, which in this paper is allowed to be a complex-valued function.
In some rare cases,  (not necessarily square integrable) eigenfunctions of \eqref{oper} can be  computed in terms of standard special functions. We call such operators {\em exactly solvable}.

Our paper is devoted to several families of operators of the form \eqref{oper}, interpreted as {\em closed operators} on $L^2(]a,b[)$ for appropriate $-\infty\leq a<b\leq+\infty$, which are exactly solvable in terms of confluent functions, or equivalently, Whittaker functions. 

We study three categories of one-dimensional Schr\"odinger operators:
\begin{enumerate}
	\item[(1)] those reducible to the Bessel equation, having two families: the Bessel operator and the Schr\"odinger operator with an exponential potential \cite{DeAlfaroRegge,DW1,DunfordSchwartz,EdmundsEvans,Kato,Pankrashkin};
	\item[(2)] those associated with the Whittaker equation, which includes three families: the Whittaker operator, the isotonic oscillator, and the Schr\"odinger operator with Morse potential \cite{DW1,Morse}; 
	\item[(3)] finally, for completeness, we include the (well-known) harmonic oscillator, reducible to the Weber equation.
\end{enumerate}

Note that Bessel and Weber functions can be reduced to subclasses of Whittaker functions. 
One can also remark that there exist several classes of one-dimensional Schr\"odinger equations exactly solved in terms of the Gauss hypergeometric function \cite{bose,DW1,GPS,whittaker1962course}, which are not considered in this paper.

On the algebraic level, all these families are discussed in many references, e.g., \cite{Chuan1991,Cooper1987,DW1,Dutt1988}. 
They are usually treated as formal differential expressions without a functional-analytic setting. 
In this paper, we consider them as closed operators on an appropriate Hilbert space. 
For exactly solvable Schr\"odinger operators, we are able to express their resolvent in terms of special functions. 
We can do the same with eigenprojections and the spectral measure. 
Sometimes we can also compute other related operators, such as their exponential or Møller (wave) operators. 

Exactly solvable Schr\"odinger operators, interpreted as closed (usually self-adjoint) operators, are essential in applications, serving as reference models for various perturbation and scattering problems. 
They are widely used by theoretical physicists to model real quantum systems and for instructional purposes in quantum mechanics (see, e.g., \cite{Flugge,GK}).

	{\em Sturm--Liouville operators}, which are given by an expression of the form
	\begin{equation}
		-\frac{1}{w(x)}\partial_x p(x)\partial_x + \frac{q(x)}{w(x)},
	\end{equation}
can be reduced to one-dimensional Schr\"odinger operators by the so-called  {\em Liouville transformation} (at least for real $\frac{p(x)}{w(x)}$) \cite{Ency,Liouville} (see also \cite{DW1} Subsections 2.1 and 2.3).

Therefore, our paper is also related to many works describing Sturm--Liouville operators,   such as \cite{Everitt,GeZin,GTV,ReedSimon,Sch}, which, however, are usually restricted to real potentials.

We always choose the Hilbert space to be $L^2(]a,b[)$, where $a$ and $b$ are singular points of the corresponding eigenvalue equation, with the possibility that $a = -\infty$  and $b = +\infty$. With these endpoints, for most parameters, in order to define a closed realization of \eqref{oper}, there is no need to choose  boundary conditions (b.c.). There are, however, exceptional parameter ranges where b.c.\ must be selected. 
Following, e.g., \cite{DeGe}, we will say that the endpoint has index $0$ if b.c.\ are not needed. We will say that it has index $2$ otherwise.

\begin{remark}
	Throughout the paper, we use terminology appropriate for complex potentials.
	Most readers are probably more familiar with related concepts applicable to {\em real} potentials,
	where one is usually interested in finding {\em self-adjoint} realizations of \eqref{oper}.
	For real potentials, the case of index $2$ is often called ``limit circle,'' and the case of 
	index $0$ ``limit point,'' which is not quite appropriate for real potentials (see Remark \ref{limitcircle}).
	
	Note also that for real potentials the operator \eqref{oper} restricted to 
	$C_\mathrm{c}^\infty(]a,b[)$ is {\em Hermitian} (or, as it is commonly termed, {\em symmetric}). 
	The three cases — both endpoints have index 0, one of them has index $0$ and the other index $2$, 
	and both have index $2$ — correspond to the deficiency indices $(0,0)$, $(1,1)$, and $(2,2)$, respectively.
	However, if the potential is not real, then this operator is not Hermitian, and deficiency indices 
	are, in principle, not well-defined.
\end{remark}

The operators that we study depend on (complex) parameters. We try to organize them into {\em holomorphic families of closed operators}, as advocated, e.g., in \cite{Derezinski2, DW2}. To find such families, first, we identify a subset of parameters that uniquely determine a closed extension. In the terminology we use, $L$ has index zero at both endpoints. In all cases that we consider, this subset forms a large set of parameters with a non-empty interior. We obtain a holomorphic family of closed realizations of $L$, which is then extended to its largest possible domain of holomorphy. The holomorphic family obtained in this way will be called {\em basic}. In Table \ref{tab:all-operators}, we describe all basic holomorphic families of Schr\"odinger operators considered in this paper.

\begin{table}[ht]
	\caption{Families of operators  covered in this paper}
	\resizebox{\textwidth}{!}{
		\renewcommand{\arraystretch}{1.5}
		\begin{tabular}{@{}llrll@{}}
			\toprule
			\textbf{Operator or Potential} & \textbf{Symbol and Formula} & \textbf{\makecell[l]{Spectral\\ Parameter}} & \textbf{\makecell[l]{Domain of\\ Holomorphy}} & $L^2(]a,b[)$ \\
			\midrule
			\textbf{\textit{Bessel equation:}} & & & & \\
			\addlinespace
			Bessel operator & 
			$H_m := -\partial_r^2 + \left( m^2 - \frac{1}{4} \right) \frac{1}{r^2}$ & 
			$k^2$ \phantom{  }& $\Re(m) > -1$ & $L^2(\mathbb{R}_+)$ \\
			\addlinespace
			Exponential potential & 
			$M_k := -\partial_x^2 + k^2 \e^{2x}$ & 
			$m^2$ \phantom{  }& $\Re(k) > 0$
			& $L^2(\mathbb{R})$ \\
			\midrule
			\textbf{\textit{Whittaker equation:}} & & & & \\
			\addlinespace
			Whittaker operator & 
			$H_{\beta,m} := -\partial_r^2 + \left( m^2 - \tfrac{1}{4} \right) \frac{1}{r^2} - \frac{\beta}{r}$ & 
			$k^2$ \phantom{  }& \makecell[l]{$\Re(m) > -1$,\\ $(\beta, m) \neq (0, -\tfrac{1}{2})$} & $L^2(\mathbb{R}_+)$ \\
			\addlinespace
			Morse potential & 
			$M_{\beta,k} := -\partial_x^2 + k^2 \e^{2x} - \beta \e^x$ & 
			$m^2$ \phantom{  }& \makecell[l]{$\Re(k) > 0$ 
				\\ $\beta \in \cc\setminus \rr_{+}$} & $L^2(\mathbb{R})$ \\
			\addlinespace
			Isotonic oscillator & 
			$N_{k,m} := -\partial_v^2 + \left(m^2-\tfrac{1}{4}\right)\frac{1}{v^2} + k^2 v^2$ & 
			$-2\beta$ \phantom{  }&\makecell[l]{ $\Re(m) > -1$,\\ $\Re(k)
				> 0$, 
			} & $L^2(\mathbb{R}_+)$ \\
			\midrule
			\textbf{\textit{Weber equation:}} & & & & \\
			\addlinespace
			Harmonic oscillator & 
			$N_k := -\partial_r^2 +k^2 r^2$ & $-2\beta$ \phantom{  }&
			$\Re(k) > 0$
			& $L^2(\mathbb{R})$ \\
			\bottomrule
		\end{tabular}
	}
	\label{tab:all-operators}
\end{table}

Schr\"odinger operators with potentials from Table \ref{tab:all-operators} can have realizations that do not belong to their basic holomorphic families. This occurs, in particular, when the
index of an endpoint is 2 for a given parameter. 
In such cases, there is a whole family of closed realizations of a single formal expression, of which at most two are basic. 
 Boundary conditions that define a closed realization that does not belong to a basic holomorphic family will be called {\em mixed b.c.}
In particular, Bessel and Whittaker operators, as well as the isotonic oscillator, can have mixed b.c.\ at $0$ for $-1<\Re(m)<1$. 
They are not discussed in this paper. For further details, see \cite{Derezinski1} for Bessel operators and \cite{DFNR} for Whittaker operators.

The domain of holomorphy, by definition, is an open set. Schr\"odinger operators with exponential and Morse potentials, as well as the harmonic and isotonic potentials, have a domain of holomorphy with a boundary at $\Re(k)=0$. The basic holomorphic family can be extended by continuity to this boundary. In all cases, operators with parameters on this boundary must be treated separately. We discuss them all, with the exception of Morse potentials.

\medskip

If $A$ is an operator on $L^2(]a,b[)$, then the integral (or possibly distributional) kernel of the operator $A$ will be denoted $A(x,y), \; x,y\in\,]a,b[$. I.e.,
\begin{equation} 
	(f|Ag) = \int \bar{f(x)}A(x,y)g(y) \, \d x \d y.
\end{equation}
As mentioned above, for all operators from Table
\ref{tab:all-operators}, we will compute their spectrum $\sigma$, their point spectrum $\sigma_\mathrm{p}$, and for parameters outside their spectrum the integral
kernel of their resolvent. 
(By the point spectrum we mean the set of eigenvalues with square integrable eigenfunctions).
We will see that Hamiltonians with very different properties can be solved in terms of the same class of special functions. 
It is curious to observe that they are linked by somewhat mysterious identities involving their resolvents.
In these identities, we can observe a ``transmutation'' of spectral parameters  where the resolvents are evaluated into coupling constants.

Here is the transmutation identity involving the resolvent of the Bessel operator and of the Schr\"odinger operator with the exponential potential:
\begin{equation}\label{regge}
(M_k+m^2)^{-1}(x,y)
	=\e^{-\frac{x}{2}}(H_m+k^2)^{-1}(\e^x,\e^y)\,\e^{-\frac{y}{2}}.
\end{equation}
Note that on the left
$m^2$ is the spectral parameter, and on the right it is the coupling constant. On the left   $k^2$ is the coupling constant, on the right it is the spectral parameter. Thus, their roles are curiously interchanged.

For the three families solved in terms of Whittaker functions, we have transmutations involving three parameters: $m$, $k$, and $\beta$:
\begin{align}
	(N_{k,m}-2\beta)^{-1}(u,v) 
	&=2 \Big(\frac{u^2}2\Big)^{-\frac14} (H_{\beta,\frac{m}{2}}+k^2)^{-1}\Big(\frac{u^2}{2},\frac{v^2}{2}\Big)
	\Big(\frac{v^2}{2}\Big)^{-\frac14},\\
	(M_{\beta,k}+m^2)^{-1}(x,y)
	&=\e^{-\frac{x}{2}}
(H_{\beta,m}+k^2)^{-1}(\e^x,\e^y)
	\e^{-\frac{y}{2}},\\
	(N_{k,m}-2\beta)^{-1}(u,v)
	&=2 \Big(\frac{u^2}2\Big)^{\frac14} 
\Big(M_{\beta,k} + \left(\frac{m}{2}\right)^2\Big)^{-1}
	\Big(\log\frac{u^2}{2},\log\frac{v^2}{2}\Big)
	\Big(\frac{v^2}2\Big)^{\frac14}.
\end{align}

The case $\Re(k)=0$ of the exponential potential is quite curious
and analyzed in our paper. Setting $k=\i\ell$, we can rewrite it as:
\begin{equation}  
	M^\gamma_{\i\ell}:=-\partial_x^2-\ell^2\e^{2x}.
\end{equation}
In this operator, we need to fix a b.c.\ at $\infty$, which can be naturally parametrized by $\gamma\in\cc\cup\{\infty\}$.
The parameter $\ell > 0$ is not very interesting --- in fact, the translation by $\ln\ell$ yields unitary
equivalence with the case $\ell = 1$. 
Note that $\gamma\mapsto M^\gamma_{\i\ell}$ is a family of operators holomorphic away from $0$ or $\infty$. 
We prove that $M^\gamma_{\i\ell}$ with $\gamma = 0,\infty$ can be reached as limits of the family $M_k$.
The transmutation of the resolvents for $M^\gamma_{\i\ell}$ into resolvents of $H_m$ is quite interesting:
\begin{equation}\label{regge1}
	\begin{split}
	(M^\gamma_{\i\ell}+m^2)^{-1}(x,y)
		&=\frac{\e^{\i m\pi}}{(\e^{\i m\pi}-\gamma)} \e^{-\frac{x}{2}}{\ell}{(-\ell^2+\i0+H_m)^{-1}}( \e^x, \e^y)\e^{-\frac{y}{2}}\\
		&\quad+\frac{\gamma}{(\e^{\i m\pi}-\gamma)}
		\e^{-\frac{x}{2}}\ell(-\ell^2-\i0+H_m)^{-1}( \e^x,
		\e^y)\e^{-\frac{y}{2}}.
	\end{split}
\end{equation}
Note also that $M_{\i\ell}^\gamma$ for $\gamma\neq0,\infty$ possess point
spectrum at $\e^{\i m\pi}=\gamma$. This spectrum was recently described independently in \cite{Stempak}.

Let us briefly describe the structure of the paper.
We begin with a concise overview of the basic theory of 1-dimensional Schr\"odinger operators, presented in Section \ref{sec:BasicTheory-1D-SchOp}. 
This is a classic subject covered in various textbooks. 
We primarily follow the presentation in \cite{DeGe}. 
We explain how to determine when boundary conditions are needed to define a closed realization with a non-empty resolvent set. 
We also demonstrate how to fix b.c.\ with the help of the Wronskian. 
Finally, we describe the integral kernel for a candidate of the resolvent. 

In Section \ref{sec:Bessel}, we discuss families of closed operators solved in terms of Bessel functions; 
in Section \ref{sec:whittaker}, the families solved in terms of Whittaker functions; and in Section \ref{sec:harmonic}, the harmonic oscillator, solved in terms of Weber functions.

In the appendices, we concisely describe elements of the theory of special functions used in our paper.
In Appendix \ref{Bessel equation}, we discuss various kinds of Bessel functions, as well as the ${}_0F_1$ functions.
First of all, one can distinguish between the trigonometric and hyperbolic Bessel equations. The former has oscillating solutions, while the solutions of the latter behave exponentially. 
One can pass from the former to the latter by rotating the complex plane by the angle $\pm\frac{\pi}{2}$. 
Secondly, both the trigonometric and hyperbolic Bessel equations have forms adapted to all dimensions.
In most of the literature, the standard Bessel equation, which is 2-dimensional and trigonometric, and the modified Bessel equation, which is 2-dimensional and hyperbolic, are considered.
In our paper, it is also convenient to use functions that solve the 1-dimensional Bessel equation, both trigonometric and hyperbolic.
All these variations of the Bessel equation are equivalent to one another, and they are also equivalent to the so-called ${}_0F_1$ equation.
They can also be reduced to a subclass of the confluent equation.

The Whittaker equation is equivalent to the confluent equation, which exists in two equivalent variants: ${}_1F_1$ and ${}_2F_0$. 
Therefore, the {\em confluent equation}, which we put in the title of this manuscript, can be replaced by the {\em Whittaker  equation}. 
There are also varieties of the Whittaker equation for any dimension. 
In our paper, we use those corresponding to $d=1$ and $d=2$. 
We also introduce modifications of Whittaker functions adapted to the isotonic oscillator. All of this is described in Appendix \ref{Whittaker equation}.

Finally, in Appendix \ref{sec:weber}, we briefly describe the Weber equation, which is equivalent to the Hermite equation and to a subclass of the Whittaker equation.

In order to define closed realizations of \eqref{oper}, we will specify their {\em operator domain} and compute their resolvents. 
There exists a different strategy for dealing with unbounded operators: one can try to specify their {\em form domain}. Strictly speaking, the latter strategy, in its orthodox version,
is limited to certain classes of operators, such as positive operators that are bounded from below \cite{ReedSimon,Sch}.

In our paper, we do not discuss the form domains of the operators under study.
We expect that this topic will be addressed in a separate paper, generalizing the analysis of Bessel operators defined as bilinear forms in \cite{DeGe2}.

\section{Basic theory of 1-dimensional Schr\"odinger operators}\label{sec:BasicTheory-1D-SchOp}

\subsection{Boundary conditions}

Let us sketch the theory of closed realizations of 1-dimensional Schr\"odinger operators with complex potentials. 
This is a classic subject, discussed in various textbooks and presented in several forms.  
Our presentation follows mostly \cite{DeGe}. 
Note that we avoid using the so-called boundary triplets and Krein-type formulas.

Consider an open interval $]a,b[$, where $-\infty \leq a < b \leq +\infty$.
(In our paper, we will consider only $\rr$ and $\rr_+$; 
however, the theory in this section applies to an arbitrary interval $]a,b[$). 
Consider a complex function $V \in L_\loc^1(]a,b[)$.
Suppose $L$ is formally given by
\begin{equation}
	\label{oper1} L:=-\partial_x^2+V(x).
\end{equation}
We would like to describe  closed realizations of $L$ on $L^2(]a,b[)$ 
possessing a non-empty resolvent set and compute its resolvent.
(A linear operator possessing a non-empty resolvent set is sometimes called {\em well-posed} \cite{EdmundsEvans,DeGe}).

Note that in most of the literature only  real $V$ are considered, and the authors are interested only in self-adjoint realizations of $L$. 
If $L_\bullet$ is such a realization of $L$, its spectrum, denoted $\sigma(L_\bullet)$, is contained in $\rr$. 
Thus its resolvent set is automatically non-empty.
In our paper, however, we will consider also complex potentials and non-self-adjoint  realizations.

Let $z\in\cc$. Let $AC^1(]a,b[)$ denote the set of functions on the open interval $]a,b[$ whose derivative is absolutely continuous. 
The space 
\begin{equation}
	\cN(L-z):=\big\{\Psi\in AC^1(]a,b[)\ | \ (L-z)\Psi(x)=0 \big\}
\end{equation}
is 2-dimensional. Let $\cU_a(z)$ denote the subspace of $\cN(L-z)$ solutions square integrable near $a$.
One can show that 
one of the following holds:
\begin{align}
	\text{ either }\dim\cU_a(z)=2\text{ for all }z\in\cc;\\
	\text{ or }\dim\cU_a(z)\leq 1\text{ for all }z\in\cc.
\end{align}
In the first case we say that the index of the endpoint $a$, denoted $\nu_a(L)$, is $2$, and in the second case it is $0$.
There are analogous definitions associated to the second endpoint $b$.

\begin{remark}  \label{limitcircle}
	In the context of real potentials, the case $ \nu_a(L) = 2 $ is called “limit circle” and $ \nu_a(L) = 0 $ is called “limit point”. 
	 This terminology goes back to an old paper by H. Weyl \cite{Weyl} and is explained, e.g., in \cite[Appendix to X.1.]{ReedSimon}. 
		These names are not fully appropriate for complex potentials. 
		For example, if $\Im V < 0$, then an analysis similar to Weyl's yields a trichotomy instead of a dichotomy \cite{Sims}; see also \cite[Subsection 8.5]{DeGe}.
\end{remark}

Let us define
\begin{align}\label{piuo}
	\cD(L^{\max})
	& := \big \{ f \in L^2(]a,b[) \, \cap \, AC^1(]0,\infty[) \; \mid\; L f \in L^2(]a,b[)\; \big \} ,\\
	\cD(L^{\min})
	&:=\text{the closure in the graph norm of }\{f\in\cD(L^{\max})\; \mid\; f=0\text{ near }a\text{ and }b\}.
\end{align}
(The graph norm is $\|f\|_A:=\sqrt{\|Lf\|^2+\|f\|^2}$.)
One can show that
\begin{equation}
	\dim\cD(L^{\max})/\cD(L^{\min})=\nu_a(L)+\nu_b(L).\label{endpo}
\end{equation}
We define the maximal and minimal realization of $L$:
\begin{equation}
	L^{\max}:=L\Big|_{\cD(L^{\max})},\quad
	L^{\min}:=L\Big|_{\cD(L^{\min})}.
\end{equation}

In what follows, we will need the Wronskian for two functions $\Phi$ and $\Xi$, defined as 
\begin{equation}\label{def:wronskian}
	\cW(\Phi,\Xi)(x):=\Phi(x)\Xi'(x)-\Phi'(x)\Xi(x). 
\end{equation}
Note that if $\Phi,\Xi\in\cN(L-z)$, then $\cW(\Phi,\Xi)(x)$ is a constant. 
In this case, we will simply write $\cW(\Phi,\Xi)$ without specifying the argument.
Besides, if $\Phi,\Xi\in\cD(L^{\max})$, then the Wronskian, a priori defined in $]a,b[$, can be extended to the endpoints:
\begin{equation}
	\cW(\Phi,\Xi)(a):=	\lim_{x\searrow a}\cW(\Phi,\Xi)(x),\quad \cW(\Phi,\Xi)(b):=	\lim_{x\nearrow b}\cW(\Phi,\Xi)(x).
\end{equation}
We also can equip  $\cD(L^{\max}) / \cD(L^{\min})$ with a non-degenerate bilinear antisymmetric form:
\begin{equation}\label{lagrange}
	\cW(\Phi,\Xi)(a)-\cW(\Phi,\Xi)(b)=-(\bar{L\Phi}|\Xi)+(\bar\Phi|L\Psi).
\end{equation}
(The identity in \eqref{lagrange} is sometimes called the {\em Lagrange identity} or then {\em integrated Green identity}.)

If \eqref{endpo}$=0$, then there exists a unique closed realization of $L$, which coincides with $L^{\max}=L^{\min}$. 
Otherwise $\sigma(L^{\max})=\sigma(L^{\min})=\cc$, and therefore the operators $L^{\max}$ and $L^{\min}$ are not well-posed. 
In order to define operators that may have a non-empty resolvent set, one needs to  select a space $\cD(L_\bullet)$ such that
\begin{align} 
	\cD(L^{\min})\subset\cD(L_\bullet)
	&\subset\cD(L^{\max}),\\
	\dim\cD(L^{\max})/\cD(L_\bullet)
	&=\dim\cD(L_\bullet)/\cD(L^{\min})
	= \frac{1}{2}(\nu_a+\nu_b).\label{endpo1}
\end{align}
Then we set $L_\bullet:=L^{\max}\Big|_{\cD(L_\bullet)}$.

To do this it is convenient to introduce the {\em boundary space}
\begin{equation}
	\cB:=\big(\cD(L^{\max})/\cD(L^{\min})\big)',\label{boundary}
\end{equation}
where the prime denotes the dual. Clearly $\dim\cB=2\nu_a+2\nu_b$.

If $\dim\cB=2$, in order to define $L_\bullet$ satisfying \eqref{endpo1} we need to fix a single nonzero functional $\phi_\bullet\in\cB$ to define the domain of an operator
\begin{equation}
	\cD(L_\bullet)=\{\Xi\in\cD(L^{\max})\ |\ \phi_\bullet(\Xi)=0\} .\label{wronski0}
      \end{equation}
      
This corresponds to two possibilities. If $\nu_a(L)=2$ and $\nu_b(L)=0$, a convenient way to define a functional $\phi_\bullet$ is to choose $\Phi_a\neq0$,  which near $a$ belongs to $\cD(L^{\max})$ but does not belong to $\cD(L^{\min})$, and then to set
\begin{equation}
	\phi(\Xi):=\cW(\Phi_a,\Xi)(a)=0 .\label{wronski}
\end{equation}
A good choice for  $\Phi_a$ is an element of $\cU_a(z)$ for some $z\in\cc$. 
(Usually, $z=0$ is  most convenient). 
The condition \eqref{wronski} will be called {\em the boundary condition (b.c.) at $a$ set by $\Phi_a$.}

Note that what is important in \eqref{wronski} is a nonzero functional on $\cD(L^{\max})$ vanishing on $\cD(L^{\min})$, which depends only on the behavior of $\Xi$ near $a$.
Sometimes for this end it is convenient to use a well chosen $\Phi_a$, which does not belong to $\cD(L^{\max})$.  We will see an example of this in the definition of the Whittaker operator.

Similarly, we proceed if $\nu_a(L)=0$ and $\nu_b(L)=2$. 
We  select $\Phi_b\neq0$ which near $b$ belongs to $\cD(L^{\max})$ but not $\cD(L^{\min})$, and set
\begin{equation}
	\phi_\bullet(\Xi):= \cW(\Phi_b,\Xi)(b).
\end{equation}

In our paper, we will not consider operators with $\dim\cB=4$, that is, $\nu_a(L)=\nu_b(L)=2$.
Nevertheless, for completeness let us discuss briefly this case.
To define a realization satisfying \eqref{endpo1}, we need to fix two linearly independent functionals $\phi_{\bullet 1},\phi_{\bullet 2}\in\cB$ and set
\begin{equation}
	\cD(L_\bullet)=\big\{\Xi\in\cD(L^{\max})\ |\ \phi_{\bullet1}(\Xi)=\phi_{\bullet2}(\Xi)=0\big\} .
\end{equation}
In particular, often one considers the so-called {\em separated boundary conditions}, where $\phi_{\bullet1}$ is a b.c.\ at $a$ and  $\phi_{\bullet1}$ is a b.c.\ at $b$.

\subsection{Candidate for resolvent}

Suppose now that
we want to describe the spectrum and resolvent of $L_\bullet$. 
First consider the case $\nu_a(L)=\nu_b(L)=0$.
Suppose $z\in\cc$ satisfies $\dim\cU_a(z)\geq 1$ and $\dim\cU_b(z)\geq 1$. 
We select 
\[
	\Psi_a(z,\cdot)\in\cU_a(z) \backslash \{0\},\quad 
	\Psi_b(z,\cdot)\in\cU_b(z)\backslash\{0\},\quad 
	\text{and set}\quad 
	\cW(z):=\cW\big(\Psi_b(z,\cdot),\Psi_a(z,\cdot)\big).   
\]
Then, we define
\begin{align}\label{eq:k}
	R_\bullet(z;x,y):= &\frac{1}{\cW(z)}
	\begin{cases}
		\Psi_a(z,x)\;\! \Psi_b(z,y) & \hbox{ if }  a<x < y<b, \\
		\Psi_a(z,y)\;\! \Psi_b(z,x) & \hbox{ if }  a<y < x<b. 
	\end{cases}
\end{align}

If $\nu_a(L)=2$, $\nu_b(L)=0$, then we need to select $\Phi_a$, 
setting the b.c.\ at  $a$,  and we define
\begin{equation} \label{bc-a}
	\cD(L_\bullet)=\{\Xi\in\cD(L^{\max})\ |\ \cW(\Xi,\Phi_a)(a)=0\} . 
\end{equation}
Similarly, if $\nu_a(L)=0$, $\nu_b(L)=2$, then we need to select $\Phi_b$, setting the b.c.\ at  $b$, so that
\begin{equation} \label{bc-b}
	\cD(L_\bullet)=\{\Xi\in\cD(L^{\max})\ |\ \cW(\Xi,\Phi_b)(b)=0\} . 
\end{equation}
If $\nu_a(L)=2$, $\nu_b(L)=2$ and we use separated b.c, then
\begin{equation} \label{bc-ab}
	\cD(L_\bullet)=\{\Xi\in\cD(L^{\max})\ |\
        \cW(\Xi,\Phi_a)(a)= \cW(\Xi,\Phi_b)(b)=0\} .
\end{equation}

Note that $\Psi_a$ and $\Psi_b$ above are defined uniquely up to a multiplicative constant.
$R_\bullet(z;x,y)$ does not depend on this choice.
The operator $R_\bullet(z)$ defined by the kernel $R_\bullet(z;x,y)$  sends $C_\mathrm{c}(]a,b[)$ into functions  in $L^2(]a,b[)$ satisfying the b.c.. 
Besides, we have
\begin{equation}
	\big(-\partial_x^2+V(x)-z\big)R_\bullet(z;x,y)=\big(-\partial_y^2+V(y)-z\big)R_\bullet(z;x,y)=\delta(x-y).
\end{equation}

The following theorem  is proven in \cite[Propositions 7.8 and 7.9]{DeGe}:
\begin{theoreme}  
	The following conditions are equivalent:
	\begin{enumerate}
		\item ${\dim}\,\cU_a(z)\geq1$, $\dim\, \cU_b(z)\geq1$ and $R_\bullet(z)$ is bounded;
		\item $z\in\cc\backslash\sigma(L_\bullet)$.
	\end{enumerate}
	If the above conditions hold, then
	\begin{equation}
		(L_\bullet-z)^{-1}(x,y)=	R_\bullet(z;x,y).\end{equation}
\end{theoreme}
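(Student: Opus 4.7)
The plan is to prove the two implications separately; together with the injectivity of $L_\bullet - z$ they identify $R_\bullet(z)$ with $(L_\bullet - z)^{-1}$. For $(1)\Rightarrow(2)$, note first that boundedness of $R_\bullet(z)$ presupposes $\cW(z)\neq 0$, so the kernel \eqref{eq:k} is unambiguously defined. For $f\in C_\mathrm{c}(]a,b[)$ I would split the kernel integral at $x$ to obtain
\begin{equation*}
(R_\bullet(z)f)(x) = \frac{\Psi_b(z,x)}{\cW(z)}\int_a^x\Psi_a(z,y)f(y)\,\mathrm{d}y + \frac{\Psi_a(z,x)}{\cW(z)}\int_x^b\Psi_b(z,y)f(y)\,\mathrm{d}y,
\end{equation*}
differentiate twice — the boundary contributions from the moving endpoints cancel in $\partial_x$ — and use $(L-z)\Psi_{a/b}=0$ together with $\cW(\Psi_b,\Psi_a)=\cW(z)$ to obtain $(L-z)R_\bullet(z)f = f$ pointwise. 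Since $f$ has compact support in $]a,b[$, for $x$ to the left of $\supp f$ one of the integrals vanishes and the other is a constant, so $R_\bullet(z)f$ is a scalar multiple of $\Psi_a(z,\cdot)$ near $a$; this secures square integrability near $a$, and when $\nu_a(L)=2$ the prescription \eqref{bc-a}, baked into the choice of $\Psi_a$, enforces the prescribed boundary functional. The picture near $b$ is symmetric, so $R_\bullet(z)f\in\cD(L_\bullet)$, and closedness of $L_\bullet$ combined with boundedness of $R_\bullet(z)$ extends $(L_\bullet - z)R_\bullet(z) = \mathrm{id}$ from $C_\mathrm{c}$ to all of $L^2(]a,b[)$.

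Injectivity of $L_\bullet - z$ completes this direction. Any $\Psi\in\ker(L_\bullet-z)$ sits in $\cN(L-z)\cap L^2(]a,b[) = \cU_a(z)\cap\cU_b(z)$ and satisfies whatever boundary conditions are encoded in $\cD(L_\bullet)$; a dimension count using linear independence of $\Psi_a,\Psi_b$ in $\cN(L-z)$ — equivalent to $\cW(z)\neq 0$ — together with \eqref{bc-a}--\eqref{bc-ab} forces $\Psi=0$. Combined with the surjectivity established above, this yields $R_\bullet(z) = (L_\bullet - z)^{-1}$.

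For $(2)\Rightarrow(1)$, set $R := (L_\bullet - z)^{-1}$ and fix $f\in C_\mathrm{c}(]a',b'[)$ with $a<a'<b'<b$. Because $(L-z)(Rf)=0$ on $]a,a'[$, the restriction $(Rf)|_{]a,a'[}$ extends uniquely by the ODE to an element $\Psi_0\in\cN(L-z)$; since $Rf\in L^2$, one has $\Psi_0\in\cU_a(z)$. To produce a nonzero such $\Psi_0$, I would argue by contradiction: if $(Rf)|_{]a,a'[}$ vanished for every admissible $f$, the kernel of $R$ would vanish on $]a,a'[\times]a',b'[$, and letting $a'\to a$ would force $R(x,y)=0$ whenever $x<y$; together with the symmetric statement at $b$, this collapses $R$ into a one-sided Volterra-type construction incompatible with the surjectivity of $L_\bullet-z$ onto $L^2(]a,b[)$. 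The symmetric argument yields $\dim\cU_b(z)\geq 1$, and $R_\bullet(z)$ is bounded because the forward direction identifies it with $R$. The main obstacle I anticipate is this contradiction step: articulating precisely why the two-endpoint structure of a second-order Green's function rules out a one-sided resolvent on a potentially non-compact interval. Invoking the two-dimensionality of $\cN(L-z)$ and surjectivity of the resolvent gives the cleanest path, but making the argument watertight in the presence of complex potentials is where the delicate work lies.
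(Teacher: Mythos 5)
First, a point of reference: the paper does not prove this theorem itself but quotes it from \cite[Propositions 7.8 and 7.9]{DeGe}, so the comparison below is against the standard argument given there. Your implication $(1)\Rightarrow(2)$ follows that argument and is essentially sound: the Green's-function computation, the observation that $R_\bullet(z)f$ coincides with a multiple of $\Psi_a(z,\cdot)$ (resp.\ $\Psi_b(z,\cdot)$) to the left (resp.\ right) of $\supp f$, so that $R_\bullet(z)f\in\cD(L_\bullet)$, the extension of $(L_\bullet-z)R_\bullet(z)=\mathrm{id}$ by closedness and boundedness, and injectivity of $L_\bullet-z$. The one step you leave implicit is why, when $\nu_a(L)=2$, the boundary functional $\Xi\mapsto\cW(\Xi,\Phi_a)(a)$ cannot vanish identically on $\cN(L-z)$ — your ``dimension count'' killing $\ker(L_\bullet-z)$ needs this, and it rests on the non-degeneracy of the form \eqref{lagrange} on $\cD(L^{\max})/\cD(L^{\min})$; it should be proved or cited.

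The genuine gap is in $(2)\Rightarrow(1)$, exactly where you flag it, and it is twofold. (i) Logically, the negation of condition (1) is that \emph{at least one} of $\dim\cU_a(z)$, $\dim\cU_b(z)$ vanishes, so you are not entitled to combine your statement with ``the symmetric statement at $b$'': you must reach a contradiction from the one-sided hypothesis alone. (ii) A one-sided (Volterra) kernel is not in conflict with \emph{surjectivity} of $L_\bullet-z$ — surjectivity is automatic once $R=(L_\bullet-z)^{-1}$ exists; the correct contradiction is with the requirement $Rf\in\cD(L_\bullet)$. Concretely: if $Rf$ vanished identically to the left of $\supp f$ for every $f\in C_{\mathrm{c}}(]a',b'[)$, then $Rf$ and $(Rf)'$ vanish at some $c<a'$, so $Rf$ is the unique solution of the initial value problem at $c$ and is given by the explicit Volterra formula; to the right of $\supp f$ it equals $c_1(f)\phi_1+c_2(f)\phi_2$ for a basis $\phi_1,\phi_2$ of $\cN(L-z)$, with the coefficients (essentially $\int\phi_j f$) ranging over all of $\cc^2$ as $f$ varies. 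One then chooses $f$ so that this combination either fails to be square integrable near $b$ (if $\nu_b(L)=0$) or violates the boundary condition at $b$ (if $\nu_b(L)=2$, again via non-degeneracy of \eqref{lagrange}); either way $Rf\notin\cD(L_\bullet)$, a contradiction. You should also record explicitly that $\cW(z)\neq0$ under hypothesis (2) — otherwise $\Psi_a\propto\Psi_b$ would produce a nonzero element of $\ker(L_\bullet-z)$ — since condition (1) tacitly requires \eqref{eq:k} to be well defined. With these repairs your outline closes along the lines of \cite{DeGe}.
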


Hence, the strategy for studying 1-dimensional Schr\"odinger operators on $L^2(]a,b[)$ given by the expression \eqref{oper1} is the following:
\begin{enumerate}
	\item For each $z\in\cc$, determine $\cN(L-z)$.
	\item Determine $\cU_a(z)$ and $\cU_b(z)$.
	\item If  $\nu_a(L)=2$ or $\nu_b(L)=2$, fix  b.c defining $L_\bullet$.
	\item For these b.c.\ (if needed), 
	and if ${\dim}\,\cU_a(z)\geq1$ and $\dim\, \cU_b(z)\geq 1$,
	write down $R_\bullet(z)$, using equation \eqref{eq:k},
	which is a candidate for the resolvent $(L_\bullet-z)^{-1}$.
	\item Check whether $R_\bullet(z)$ is bounded. 
	If so, $z\in\cc\backslash\sigma(L_\bullet)$ and $R_\bullet=(L_\bullet-z)^{-1}$.
\end{enumerate}

\subsection{Boundedness of resolvent}
\label{Boundedness of resolvent}

Let us quote two useful lemmas for proving the boundedness of an operator $K$ on $L^2(]a,b[)$ given by the integral kernel $K(x,y)$.
\begin{lemma} 
	The Hilbert--Schmidt norm of $K$ is given by
	\begin{equation}
		\|K\|_2:=\sqrt{\Tr K^*K}=\Big(\int |K(x,y)|^2 \,\d x\d y\Big)^{\frac12},
	\end{equation}
	and we have $\|K\|\leq \|K\|_2$.
	\label{hilberschmidt}
\end{lemma}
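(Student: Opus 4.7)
The plan is to prove the two assertions in sequence, treating the integral representation of $\|K\|_2^2$ first and then deducing the operator-norm bound as a Cauchy--Schwarz consequence.

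First, I would pick an orthonormal basis $\{e_n\}_{n\in\nn}$ of $L^2(]a,b[)$ and write $\|K\|_2^2 = \Tr(K^*K) = \sum_n \|K e_n\|^2 = \sum_n \int_a^b |(Ke_n)(x)|^2 \, \d x$. For fixed $x\in\,]a,b[$, the function $y\mapsto K(x,y)$ may be viewed as (the complex conjugate of) an element of $L^2(]a,b[)$ whenever the right-hand side $\int|K(x,y)|^2\d y$ is finite. In that case, Parseval's identity applied to the expansion $(Ke_n)(x) = \int K(x,y)e_n(y)\,\d y = \bigl(\overline{K(x,\cdot)}\bigm| e_n\bigr)$ yields
\begin{equation}
\sum_n |(Ke_n)(x)|^2 = \int_a^b |K(x,y)|^2 \, \d y.
\end{equation}
Integrating in $x$ and exchanging sum and integral by Tonelli's theorem (all terms are non-negative) gives $\|K\|_2^2 = \int\!\!\int |K(x,y)|^2\,\d x\d y$. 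If instead this double integral is infinite, the same Tonelli argument shows $\sum_n\|Ke_n\|^2 = +\infty$, so both sides agree as extended real numbers, and the identity is independent of the chosen basis.

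Next, for the inequality $\|K\|\leq\|K\|_2$, I would test $K$ against an arbitrary $f\in L^2(]a,b[)$ and estimate pointwise. By the Cauchy--Schwarz inequality,
\begin{equation}
|(Kf)(x)|^2 = \Bigl|\int_a^b K(x,y)f(y)\,\d y\Bigr|^2 \leq \Bigl(\int_a^b |K(x,y)|^2 \,\d y\Bigr) \|f\|^2.
\end{equation}
Integrating in $x$ and using the integral formula proved above,
\begin{equation}
\|Kf\|^2 \leq \|f\|^2 \int\!\!\int |K(x,y)|^2 \,\d x\d y = \|K\|_2^2 \,\|f\|^2,
\end{equation}
which gives $\|K\|\leq\|K\|_2$.

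The only delicate point is the justification of the basis-independent identification of the trace-class quantity $\Tr(K^*K)$ with the double integral; the interchange of summation and integration is legitimate by Tonelli because every term is non-negative, so there is no real obstacle beyond noting that the identity holds as an equality in $[0,\infty]$. Everything else is a routine application of Parseval and Cauchy--Schwarz.
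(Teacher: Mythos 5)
Your proof is correct and is the standard argument (Parseval in the $y$-variable for a.e.\ fixed $x$, Tonelli to sum over the basis, then Cauchy--Schwarz for the operator-norm bound); the paper simply quotes this lemma as a known fact without supplying a proof, so there is nothing to compare against. The one point you rightly flag --- that the identity is to be read in $[0,\infty]$ and that Fubini--Tonelli guarantees $K(x,\cdot)\in L^2$ for a.e.\ $x$ when the double integral is finite --- is handled adequately.
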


\begin{lemma}[Special case of Schur's Criterion] \label{lem:schur}
	Suppose that
	\begin{equation}
		\sup\limits_{x\in\,\left]a,b\right[}\int_a^b|K(x,y)|\,\d y=c_1,\quad\sup_{y\in\,\left]a,b\right[}\int_a^b|K(x,y)|\,\d x=c_2.
	\end{equation}
	Then $\|K\|\leq\sqrt{c_1c_2}$.
\end{lemma}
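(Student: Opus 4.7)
The plan is to control the bilinear form $(f\,|\,Kg)$ directly using a symmetric Cauchy--Schwarz splitting of the kernel, and then identify the resulting integrals via Fubini. For arbitrary $f,g\in L^2(]a,b[)$, I would start from
\begin{equation}
|(f\,|\,Kg)|\;\le\;\int_a^b\!\int_a^b |K(x,y)|\,|f(x)|\,|g(y)|\,\d x\,\d y,
\end{equation}
and write the integrand as a product of two ``square root'' factors
\begin{equation}
|K(x,y)|\,|f(x)|\,|g(y)| \;=\; \bigl(|K(x,y)|^{1/2}|f(x)|\bigr)\cdot\bigl(|K(x,y)|^{1/2}|g(y)|\bigr).
\end{equation}

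Next I would apply the Cauchy--Schwarz inequality in $L^2(]a,b[\times ]a,b[)$ to these two factors, obtaining
\begin{equation}
|(f\,|\,Kg)|^{2}\;\le\;\Bigl(\int_a^b\!\int_a^b |K(x,y)|\,|f(x)|^2\,\d x\,\d y\Bigr)\Bigl(\int_a^b\!\int_a^b |K(x,y)|\,|g(y)|^2\,\d x\,\d y\Bigr).
\end{equation}
Then by Tonelli, I would perform the $y$-integration first in the first factor and the $x$-integration first in the second factor, and bound the inner integrals by the suprema $c_1$ and $c_2$ respectively:
\begin{equation}
\int_a^b |K(x,y)|\,\d y\le c_1,\qquad \int_a^b |K(x,y)|\,\d x\le c_2.
\end{equation}
This yields $|(f\,|\,Kg)|^{2}\le c_1\,c_2\,\|f\|^2\,\|g\|^2$, hence $\|K\|\le\sqrt{c_1 c_2}$ by duality.

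The argument is essentially routine, so there is no genuine obstacle; the only care needed is to define $K$ unambiguously (e.g.\ first on a dense subspace such as $C_{\mathrm c}(]a,b[)$ on which the integrals converge absolutely), and to verify that the use of Tonelli is legitimate, which follows from nonnegativity of the integrands. Once boundedness is established on this dense subspace, the operator extends by continuity to all of $L^2(]a,b[)$ with the same bound.
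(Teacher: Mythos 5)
Your proof is correct: it is the standard derivation of the Schur test, splitting $|K(x,y)|\,|f(x)|\,|g(y)|$ into two square-root factors, applying Cauchy--Schwarz on the product space, and using Tonelli to bound the two resulting integrals by $c_1\|f\|^2$ and $c_2\|g\|^2$. The paper simply quotes this lemma as a known criterion without supplying a proof, so there is nothing to compare against; your argument is the canonical one and complete.
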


Note that, if applicable, Lemma \ref{hilberschmidt} is superior to Lemma \ref{lem:schur}, because it estimates a stronger norm.
For translation-invariant operators, we can also use a special case of the Young inequality, which actually follows from Schur's criterion:
\begin{lemma} 
	If $K(x,y)=f(x-y)$, then
	\begin{equation} 
		\|K\|\leq\int |f(x)|\,\d x.
	\end{equation}
\end{lemma}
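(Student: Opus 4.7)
The plan is to apply Schur's criterion (the preceding Lemma \ref{lem:schur}) directly, exploiting translation invariance of the kernel. I first note that the statement implicitly requires the underlying interval to be $\rr$ (or at least a translation-invariant subset), since otherwise $f(x-y)$ does not naturally arise as the kernel of an operator on $L^2(]a,b[)$; with this understood, we set $]a,b[\,=\rr$.

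Next I would evaluate the two suprema appearing in Schur's criterion. For fixed $x\in\rr$, the change of variables $u=x-y$, $\d u=-\d y$, gives
\begin{equation}
\int_\rr |K(x,y)|\,\d y=\int_\rr |f(x-y)|\,\d y=\int_\rr |f(u)|\,\d u,
\end{equation}
which is independent of $x$. An identical computation with $x$ treated as the integration variable for fixed $y$ gives the same value. Thus in the notation of Lemma \ref{lem:schur} we have $c_1=c_2=\int_\rr|f(x)|\,\d x$, and Schur's criterion yields
\begin{equation}
\|K\|\leq\sqrt{c_1 c_2}=\int_\rr|f(x)|\,\d x,
\end{equation}
as claimed.

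There is essentially no obstacle here: the statement is the $L^1\ast L^2\to L^2$ case of Young's inequality, and it drops out of Schur in a single line. The only subtlety worth flagging is the interpretation of the domain: if one wishes to state the bound for an interval $]a,b[\,\subsetneq\rr$ and view $K$ as a truncation of a convolution, the same argument goes through because truncating the region of integration can only decrease the suprema $c_1$ and $c_2$, so the bound by $\int_\rr|f(x)|\,\d x$ is preserved.
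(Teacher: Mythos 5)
Your proof is correct and follows exactly the route the paper indicates: the text states that this special case of Young's inequality ``actually follows from Schur's criterion,'' and your computation of $c_1=c_2=\int|f(u)|\,\d u$ via the substitution $u=x-y$ is precisely that argument. The remark about truncation to a subinterval only decreasing the suprema is a sensible extra touch but not needed beyond what the paper intends.
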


Let us make some remarks on how to check the boundedness of an operator of the form \eqref{eq:k}. 
More precisely, suppose that $R$ is an operator with the integral kernel
\begin{align}\label{eq:ker}
	R(x,y):= & 
	\begin{cases}
		\Psi_a(x)\;\! \Psi_b(y) & \hbox{ if }  a<x < y<b, \\
		\Psi_a(y)\;\! \Psi_b(x) & \hbox{ if }  a<y < x<b; 
	\end{cases}
\end{align}
where $\Psi_a$, resp. $\Psi_b$ is square integrable near $a$, resp. $b$.
Choose $c$ such that $a<c<b$. Then we can split the operator $R$ into the sum of four operators
\begin{equation}\label{split1}
	R=R_{--}+R_{-+}+R_{+-}+R_{++}
\end{equation}
with kernels
\begin{subequations}\label{split2}
\begin{align}
	R_{--}(x,y)&:=\theta(c-x)\theta(c-y)R(x,y),\\
	R_{+-}(x,y)&:=\theta(x-c)\theta(c-y)R(x,y),\\
	R_{-+}(x,y)&:=\theta(c-x)\theta(y-c)R(x,y),\\
	R_{++}(x,y)&:=\theta(x-c)\theta(y-c)R(x,y).
\end{align}
\end{subequations}

Now $R_{+-}$ and $R_{-+}$ are bounded because both $\|R_{+-}\|_2^2$ and  $\|R_{-+}\|_2^2$ can be estimated by
\begin{equation} 
	\int_a^c|\Psi_a(x)|^2\,\d x\;\int_c^b|\Psi_b(y)|^2\,\d y.
\end{equation}
Thus, we obtain the following criterion:

\begin{lemma}\label{lemma-bound}
	$R$ is bounded iff $R_{--}$ and $R_{++}$ are bounded. 
\end{lemma}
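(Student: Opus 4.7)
The plan is to deduce the equivalence from the localization structure already built into \eqref{split2}. The direction ``$R_{--}, R_{++}$ bounded $\Rightarrow R$ bounded'' is immediate: the discussion preceding the lemma has shown that $R_{-+}$ and $R_{+-}$ are Hilbert--Schmidt, so if $R_{--}$ and $R_{++}$ are bounded then $R$ is a sum of four bounded operators by \eqref{split1}, and the triangle inequality finishes this direction.

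For the converse, I would introduce the orthogonal projections $\chi_-$ and $\chi_+$ on $L^2(]a,b[)$ acting as multiplication by $\theta(c-x)$ and $\theta(x-c)$, respectively, so that $\chi_-+\chi_+=\bbbone$ and $\|\chi_\pm\|\le 1$. By inspection of the four kernels in \eqref{split2}, sandwiching the decomposition $R=R_{--}+R_{-+}+R_{+-}+R_{++}$ between $\chi_-$'s kills every term except $R_{--}$: for example the kernel of $\chi_- R_{-+}\chi_-$ carries the incompatible pair $\theta(y-c)\theta(c-y)$ and hence vanishes, while $\chi_- R_{+-}\chi_-$ vanishes via $\theta(c-x)\theta(x-c)$, and $\chi_- R_{++}\chi_-$ vanishes on either side. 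The same observation with the roles of $+$ and $-$ exchanged handles the other projection, so
\[
	\chi_- R\chi_- = R_{--}, \qquad \chi_+ R\chi_+ = R_{++}.
\]
Since $\|\chi_\pm\|\le 1$, this yields $\|R_{--}\|\le\|R\|$ and $\|R_{++}\|\le\|R\|$ as soon as $R$ is bounded.

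The argument presents no essential difficulty; the only thing to watch is the bookkeeping of Heaviside factors to confirm that every off-diagonal piece is annihilated by the appropriate projection on the appropriate side. Once that routine verification is carried out, the lemma reduces to the trivial fact that a bounded operator remains bounded after pre- and post-composition with an orthogonal projection.
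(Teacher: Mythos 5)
Your proof is correct and follows the same route the paper intends: the preceding discussion already establishes that $R_{+-}$ and $R_{-+}$ are Hilbert--Schmidt, which gives the forward direction, and the identities $\chi_\pm R\chi_\pm=R_{\pm\pm}$ give the converse. Nothing to add.
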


\subsection{Krein formula}

We will always use equation \eqref{eq:k} to construct the resolvent of a Schr\"odinger operator. 
In the literature, in similar situations, one often uses the so-called Krein formula and the formalism of {\em boundary triplets} \cite{Derkach,Boitsev}. 
In this subsection, we compare the two approaches. We will not use boundary triplets in this paper;
however, we  discuss them briefly for the convenience of the reader.

Suppose, for definiteness, that $\nu_a(L) = 2$ and $\nu_b(L) = 0$. 
Suppose we fix a basis $\{\phi^0,\phi^1\}$ of the boundary space $\cB$ defined in \eqref{boundary}. 
(This is equivalent to fixing two distinct b.c.\ at $a$.) 
Then, for any $\kappa \in \cc \cup \{\infty\}$, we can define a closed realization $L_\kappa$ of $L$ by setting
\begin{align}
	\cD(L_\kappa)&:=\{\Xi\in\cD(L^{\max})\ |\
	\phi^0(\Xi)+\kappa\phi^1(\Xi)=0\},\quad\kappa\in\cc;\\
	\cD(L_\infty)&:=\{\Xi\in\cD(L^{\max})\ |\
	\phi^1(\Xi)=0\},\quad\kappa=\infty.
\end{align}
Let $z\in\cc$ and $0\neq\Psi_b(z,\cdot)\in\cU_b(z)$.
Note that under our assumptions $\Psi_b(z,\cdot)\in L^2(]a,b[)$.

Let $0\neq\Psi_a^i(z,\cdot)\in\cU_a(z)=\cN(L-z)$ with $\phi_a^i\big(\Psi_a^i(z,\cdot)\big)=0$, $i=0,1$.
Set
\begin{equation}
	\cW^i(z):=\cW\big(\Psi_b(z,\cdot),\Psi_a^i(z,\cdot)\big),\quad i=0,1.
\end{equation}
It follows from \eqref{eq:k} that
\begin{align}\label{eq:k.}
	R_\kappa(z;x,y):= &\frac{1}{\big(\cW^0(z)+\kappa\cW^1(z)\big)}
	\begin{cases}
		\big( \Psi_a^0(z,x)+\kappa\Psi_a^1(z,x)\big)\;\! \Psi_b(z,y) & \hbox{ if }  a<x < y<b, \\
		\big( \Psi_a^0(z,y)+\kappa\Psi_a^1(z,y)\big)\;\! \Psi_b(z,x) & \hbox{ if }  a<y < x<b. 
	\end{cases}
\end{align}
is a candidate for the kernel of $R_\kappa(z):=(L_\kappa-z)^{-1}$.
It is easy to check that
\begin{equation}
	\cW\Big(\Psi_a^1(z,\cdot)-\frac{\cW^1(z)}{\cW^0(z)}\Psi_a^0(z,x),\Psi_b(z,\cdot)\Big) =0.
\end{equation} 
Therefore, changing if needed the normalization of $\Psi_b(z,\cdot)$, we can assume that
\begin{equation}
	\Psi_a^1(z,\cdot)-\frac{\cW^1(z)}{\cW^0(z)}\Psi_a^0(z,\cdot)=\Psi_b(z,\cdot).
\end{equation}
Using this we can rewrite \eqref{eq:k.} as
\begin{equation}\label{krein}
	R_\kappa(z;x,y)=R_0(z;x,y)+\frac{\Psi_b(z,x) \Psi_b(z,y)}{\kappa^{-1}\cW^0(z)+\cW^1(z)}.
\end{equation}
This is often called {\em Krein formula}, which is the basis of the boundary triplet approach. It expresses the resolvent with mixed b.c.\ by the resolvent with unperturbed b.c.\ plus a rank one perturbation. Note that if we check the boundedness of $R_0(z)$, then $R_\kappa(z)$ is well-defined and  bounded unless $\kappa^{-1}\cW^0(z)+\cW^1(z)=0$.

Thus, in the above approach, one introduces three objects: the space $\cB$ and a pair of its distinguished linearly independent elements $\phi^0$ and $\phi^1$. 
Jointly, they are often called a {\em boundary triplet} \cite{Derkach,Boitsev}.
If $V$ is integrable near $a \in \rr$, one usually chooses the Dirichlet and Neumann b.c., that is, $\phi^0(\Xi) = \Xi(a)$ and $\phi^1(\Xi) = \Xi'(a)$. 
More generally, in most (but not all) Hamiltonians considered in this paper, we have  a similar distinguished pair. 
For example, for the Bessel operator with $|\Re(m)| < 1$, $m\neq0$, these are the b.c.\ set by $r^{\frac{1}{2}+m}$ and $r^{\frac{1}{2}-m}$. 
For $m = 0$, one can choose $r^{\frac{1}{2}}$ and $r^{\frac{1}{2}} \ln r$; however, one can argue that only the first is truly distinguished.

\section{Schr\"odinger operators related to the Bessel equation}
\label{sec:Bessel}

\subsection{Bessel operator}

Various elements of the material of this subsection can be found in the literature, e.g., in \cite{GTV,Kovarik,Pankrashkin}. 
We treat \cite{Derezinski2,Derezinski1} as the main references.

Let $m\in\mathbb{C}$. The {\em Bessel operator} is formally given by
\begin{align}\label{fact++}
	H_m:=-\partial_r^2 + \Big{(} m^2 - \frac{1}{4}\Big{)} \frac{1}{r^2}. 
\end{align}
We would like to interpret it as a closed operator on $L^2(\rr_+)$.

Note that when we deal with \eqref{fact++}, it is more convenient to use 1d Bessel functions rather than the (usual) 2d Bessel functions.
Depending on the circumstances, one may prefer to use hyperbolic or elliptic Bessel functions — therefore, for some quantities, we provide expressions in terms of both. 
For more detail about 1d Bessel functions, see Appendix \ref{Bessel equation}, especially Subsections \ref{Hyperbolic 1d Bessel equation} and \ref{Trigonometric 1d Bessel equation}.

In the following table, for each parameter and for each eigenvalue, we provide a few functions that span the space of eigenfunctions of \eqref{fact++} (usually, but not always, a basis of this space).
\[
\renewcommand{\arraystretch}{1.5}
\begin{array}{l|l|l}
	\hline
	\text{eigenvalue} & \text{parameters} & \text{eigenfunctions} \\
	\hline
	-k^2 \text{ with } k \neq 0& & \cI_{\pm m}(kx), \quad \cK_m(kx) \\
	\phantom{-}0 &m\neq0& x^{\frac{1}{2} \pm m}\\
	\phantom{-}0 &m=0& x^{\frac12},\quad
	x^{\frac12}\ln x\\
	\hline
\end{array}
\]
After checking the square integrability of these functions near the endpoints, we see that the  endpoints have the following types:
\[
\renewcommand{\arraystretch}{1.5}
\begin{array}{r|l|r}
	\hline
	\text{endpoint} &\text{parameters} & 
	\text{index} \\
	\hline
	0 & |\Re(m)|<1 &  2 \\
	0 & |\Re(m)| \geq 1  & 0 \\
	+\infty & &  0\\
	\hline
\end{array}
\]

\vspace{1em}

The following theorem describes the basic holomorphic family of Bessel operators:
\begin{theoreme}  
	For $\Re(m)\geq1$, there exists  a unique closed operator $H_m$ in the sense of $L^2(\rr_+)$, which on $C_\mathrm{c}^\infty(]0,\infty[)$ is given by
	\eqref{fact++}. 
	The family $m\mapsto H_m$ is holomorphic and possesses a unique holomorphic extension to $\Re(m)>-1$.

	The  spectrum and the point spectrum of $H_m$ are
	\begin{equation}
          \sigma(H_m)=[0,\infty[ \,,\quad\sigma_\mathrm{p}(H_m)=\emptyset\; ,
    \end{equation}
	and its resolvent is
	\begin{align}\label{reso1}
          (H_m+k^2)^{-1}
		(x,y)&=
		\frac{1}{k}\left\{\begin{matrix}\! \cI_m(kx)\;\! \cK_m(ky) & \hbox{ if } 0 < x < y, \\
			\cI_m(ky)\;\! \cK_m(kx) & \hbox{ if } 0 < y < x,
		\end{matrix}\right.&\Re(k)>0;\\
		(H_m-k^2)^{-1}(x,y)&=\pm \frac{\i}{k}\left\{\begin{matrix}\! \cJ_m(kx)\;\!
			\cH_m^\pm(ky) & \hbox{ if } \ 0 < x < y, \\
			\cJ_m(ky)\;\! \cH_m^\pm (kx) & \hbox{ if } \ 0 < y < x,
		\end{matrix}\right.&\pm\Im(k)>0.
	\end{align}
        \label{besseloperator}
      \end{theoreme}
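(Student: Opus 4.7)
The plan is to follow the general scheme of Section \ref{sec:BasicTheory-1D-SchOp} step by step. First, I will classify the endpoints of $H_m$. For $z = -k^2$ with $k \neq 0$, the equation $(H_m + k^2)\Psi = 0$ is the 1d modified Bessel equation, with a distinguished pair $\cI_m(kx), \cK_m(kx)$ spanning $\cN(H_m + k^2)$ (Appendix \ref{Bessel equation}). Their asymptotics are $x^{1/2+m}$, resp.\ $x^{1/2-m}$, near $0$ (for generic $m$), and exponential near $+\infty$. Square integrability near $0$ of $x^{1/2\pm m}$ requires the exponent to exceed $-1/2$, so both are $L^2$ near $0$ iff $|\Re(m)| < 1$ (giving $\nu_0 = 2$), whereas for $\Re(m) \geq 1$ only $\cI_m(k\cdot)$ survives (giving $\nu_0 = 0$). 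At $+\infty$, for any $\Re(k) > 0$ only $\cK_m(k\cdot)$ is $L^2$, so $\nu_\infty = 0$. Therefore, for $\Re(m) \geq 1$ both indices vanish, and the general theory gives $L^{\max} = L^{\min}$ as the unique closed realization with non-empty resolvent set; this is our $H_m$.

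Next I will construct the resolvent using formula \eqref{eq:k}, choosing $\Psi_a(-k^2,\cdot) := \cI_m(k\cdot)$ and $\Psi_b(-k^2,\cdot) := \cK_m(k\cdot)$. The standard 1d Bessel Wronskian gives $\cW(\cK_m(k\cdot),\cI_m(k\cdot)) = k$, producing exactly the first kernel in the statement. To check boundedness I will apply Lemma \ref{lemma-bound}: fix some $c > 0$; the off-diagonal pieces $R_{-+}, R_{+-}$ are Hilbert--Schmidt since $\cI_m(k\cdot) \in L^2(\,]0,c])$ and $\cK_m(k\cdot) \in L^2([c,\infty[\,)$ for $\Re(k) > 0$; for the diagonal pieces $R_{--}, R_{++}$ I will use Schur's criterion (Lemma \ref{lem:schur}), with the power-law bound $|\cI_m(kx)\cK_m(ky)| \leq C\,x^{1/2+\Re(m)} y^{-1/2+\Re(m)}$ near $0$ and the exponential bound $|\cI_m(kx)\cK_m(ky)| \leq C\,e^{-\Re(k)(y-x)}$ near $+\infty$. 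This proves the first resolvent formula for $\Re(m) \geq 1$ and yields $\sigma(H_m) \subseteq [0,\infty)$; the trigonometric formula is obtained by analytic continuation in $k$ through $k \to \mp\i k$.

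For the holomorphic extension to $\Re(m) > -1$, I note that $\cI_m$ and $\cK_m$ depend holomorphically on $m$ and that $\cI_m(k\cdot)$ remains $L^2$ near $0$ throughout this range, so the kernel (reso1) is a holomorphic family of bounded operators there. In the sub-range $-1 < \Re(m) < 1$ where $\nu_0 = 2$, I define $H_m$ by imposing the b.c.\ at $0$ set by $\Phi_0(x) := x^{1/2+m}$, which is precisely the behavior of $\cI_m(kx)$ at the origin; a direct verification shows that $R_\bullet(-k^2)$ with this formula maps into $\cD(L^{\max})$ and satisfies this b.c., so it is indeed the resolvent of the closed operator so defined, and the family is holomorphic. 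Uniqueness of the extension follows by analytic continuation of the resolvent. Finally, $\sigma_\mathrm{p}(H_m) = \emptyset$ because any $L^2$ eigenvector at $z$ would lie in $\cU_0(z) \cap \cU_\infty(z)$: for $z = -k^2$ with $\Re(k) > 0$ the only $\cU_\infty$ direction, $\cK_m(k\cdot)$, behaves as $x^{1/2-m}$ at the origin and hence violates the chosen b.c.\ (or fails $L^2$ for $\Re(m) \geq 1$); for real $z \geq 0$ no Bessel solution is $L^2$ at $+\infty$. The inclusion $[0,\infty) \subseteq \sigma(H_m)$ follows because the two boundary values $\pm\i 0$ of the resolvent in the second formula disagree on $]0,\infty[$ (equivalently, by constructing approximate eigenvectors from $\cJ_m(k\cdot)$).

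The main obstacle is the boundedness verification via Schur / Hilbert--Schmidt estimates: these must be carried out carefully enough to produce uniform bounds on any compact subset of $\{\Re(m) > -1\} \times \{\Re(k) > 0\}$, in particular near the boundary $\Re(m) = -1$, and must handle separately the degenerate values $m \in \tfrac{1}{2}\zz_{\geq 0}$ where $\cI_{\pm m}$ become linearly dependent and the $(\cI_m, \cK_m)$ basis acquires logarithmic corrections in the asymptotics at $0$.
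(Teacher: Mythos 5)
Your proposal follows essentially the same route as the paper's (quite terse) proof: define $H_m$ for $\Re(m)>-1$ by the boundary condition at $0$ set by $x^{\frac12+m}$, observe that no b.c.\ is needed for $\Re(m)\geq1$, build the candidate resolvent from \eqref{eq:k} with $\Psi_a=\cI_m(k\cdot)$, $\Psi_b=\cK_m(k\cdot)$ and the Wronskian $\pm k$, verify boundedness by the splitting of Lemma \ref{lemma-bound} with Schur/Hilbert--Schmidt estimates, and conclude holomorphy in $m$ on $\{\Re(m)>-1\}$; your extra detail on $\sigma$, $\sigma_\mathrm{p}$ and the trigonometric formula via $k\to\mp\i k$ is consistent with what the paper delegates to \cite{Derezinski2}. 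One small slip in your closing caveat: the values where $\cI_{\pm m}$ degenerate are $m\in\zz$ (where $\cW(\cI_m,\cI_{-m})=-\sin\pi m$ vanishes), not $m\in\tfrac12\zz_{\geq0}$, and since your basis is $(\cI_m,\cK_m)$ this degeneracy never enters the argument anyway --- only the logarithmic correction to $\cK_0$ near $0$ needs a separate (harmless) estimate.
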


\begin{proof}
For $\Re(m) > -1$, we define $H_m$ to be the closed realization of \eqref{fact++} with the b.c.\ at 0 given by $x^{\frac{1}{2} + m}$. From the table above, we see that for $\Re(m) \geq 1$, it is the unique realization of \eqref{fact++}.

We check that, for such $\Re(m)>-1$,
\begin{equation}
	\lim_{x\to0}  \cW(\cI_m(kx),x^{\frac12+m})=0,
\end{equation}
and $\cI_m(kx)$ is square integrable near zero.
Moreover, $\cK_m(kx)$ is square integrable near $+\infty$ if and only if $\Re(k)>0$. We also find
\begin{equation}
	\cW\big(\cI_m(kx),\cK_m(kx))=k.
\end{equation}
Next we apply \eqref{eq:k}, which yields the kernel on the rhs of \eqref{reso1} as a candidate for the resolvent of $H_m$.
We check that it is bounded. Hence it equals $(H_m+k^2)^{-1}$.
We also verify that it depends holomorphically on $m$ on a larger domain $\{\Re(m)>-1\}$.
Therefore, $\{\Re(m)>-1\}\ni m\mapsto H_m$ is a holomorphic family.

See \cite{Derezinski2} for the details.
\end{proof}

 Here is a description of the basic family of Bessel operators with real $m$. Again, its proof can be found, e.g., in \cite{Derezinski2}.
\begin{theoreme}
	For $m \in \left]-1,\infty\right[$, the operator $H_m$ defined in Theorem \ref{besseloperator} is self-adjoint. For $m \in [1,\infty[$, $H_m$ is essentially self-adjoint on $C_\mathrm{c}^\infty(]0,\infty[)$. For $m \in \left[0,\infty\right[$, it is the Friedrichs extension of \eqref{fact++} restricted to $C_\mathrm{c}^\infty(\left]0,\infty\right[)$. For $m \in \left]-1,0\right]$, it is the Krein extension of \eqref{fact++} restricted to $C_\mathrm{c}^\infty(\left]0,\infty\right[)$.
	\label{besseloperator2}
\end{theoreme}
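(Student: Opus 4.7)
The plan is to verify each assertion separately using the framework of Section \ref{sec:BasicTheory-1D-SchOp} together with the Friedrichs/Krein theory for positive symmetric operators. For self-adjointness of $H_m$ with $m \in ]-1, \infty[$, the cleanest route is to observe that for real $m$ and real $k > 0$ the resolvent kernel $(H_m + k^2)^{-1}(x,y)$ given by \eqref{reso1} is manifestly real and symmetric in $(x,y)$, because $\cI_m$, $\cK_m$ and the Wronskian $k$ are real. Hence $(H_m+k^2)^{-1}$ is a bounded self-adjoint operator, which forces $H_m = H_m^*$.

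For essential self-adjointness on $C_\mathrm{c}^\infty(]0,\infty[)$ when $m \geq 1$, I would exploit that both endpoints have index $0$ in the sense of the table in Section \ref{sec:Bessel}. By \eqref{endpo}, $H_m^{\min} = H_m^{\max}$, so there is a unique closed realization. A standard cut-off-and-mollification argument then shows that $C_\mathrm{c}^\infty(]0,\infty[)$ is dense in the graph norm of $H_m^{\min}$, giving essential self-adjointness of the restriction of \eqref{fact++} to $C_\mathrm{c}^\infty(]0,\infty[)$.

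For the Friedrichs identification on $m \in [0, \infty[$, I would first use Hardy's inequality $\int_0^\infty |f'|^2\,\d r \geq \tfrac{1}{4}\int_0^\infty |f|^2 / r^2\,\d r$ to confirm $H_m|_{C_\mathrm{c}^\infty} \geq 0$. The Friedrichs extension is characterized as the non-negative self-adjoint extension whose form domain is the completion of $C_\mathrm{c}^\infty(]0,\infty[)$ in the form norm $\sqrt{(f, H_m f) + \|f\|^2}$. The key computation is that a cut-off version of $r^{1/2+m}$ lies in this completion for $m \geq 0$ (the integrals $\int_0 (r^{-1/2+m})^2\,\d r$ and $\int_0 (r^{-1/2+m})^2\,\d r$ converge), while a cut-off version of $r^{1/2-m}$ does not (the derivative squared integrates as $\int_0 r^{-1-2m}\,\d r$, which diverges for $m \geq 0$). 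Consequently any element of the Friedrichs form domain must have the asymptotic $\Xi(r) = O(r^{1/2+m})$ near $0$, and this is exactly the content of the b.c.\ $\cW(\Xi, r^{1/2+m})(0) = 0$ defining $H_m$ in Theorem \ref{besseloperator}.

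For $m \in ]-1, 0]$ the exponent $1/2+m$ is smaller than $1/2-m$, so the analogous form-domain computation selects $r^{1/2-m}$, and the Friedrichs b.c.\ is $\cW(\Xi, r^{1/2-m})(0) = 0$, which is the opposite of the b.c.\ defining $H_m$. Hence $H_m$ must be the other self-adjoint extension in the one-parameter family; I would identify it as the Krein (largest non-negative) extension by invoking the general Krein--von Neumann characterization and verifying that the Friedrichs extension corresponds to the smaller form domain. The main obstacle I expect is the marginal case $m = 0$, where the two null solutions are $r^{1/2}$ and $r^{1/2}\log r$ and Friedrichs and Krein are required to coincide. Here I would compute directly that $(r^{1/2}\log r)'{}^2 \sim r^{-1}$ near $0$, so $r^{1/2}\log r$ is excluded from any non-negative form extension; thus the only non-negative self-adjoint extension at $m=0$ is $H_0$, forcing Friedrichs $=$ Krein at this boundary value and ensuring the two assertions of the theorem match.
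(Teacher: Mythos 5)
The paper itself offers no proof of this theorem, deferring entirely to \cite{Derezinski2}, so there is nothing to compare line by line; I can only assess your argument on its own terms. Your first two parts are fine: for real $m>-1$ and real $k>0$ the kernel \eqref{reso1} is real and symmetric, so $(H_m+k^2)^{-1}$ is a bounded self-adjoint operator whose inverse $H_m+k^2$ is self-adjoint; and for $m\geq1$ both endpoints have index $0$, so $H_m^{\min}=H_m^{\max}$ and the mollification argument gives essential self-adjointness on $C_\mathrm{c}^\infty(]0,\infty[)$. These are correct and standard.

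The Friedrichs/Krein identification, however, contains genuine gaps. (i) You decide membership in the form-domain completion by testing whether the naive energy integrals of $r^{\frac12\pm m}$ converge near $0$, but for this operator the quadratic form is $\int|f'|^2+(m^2-\tfrac14)\int|f|^2r^{-2}$, a combination of two terms that diverge separately and cancel: at $m=0$ your claimed convergent integral $\int_0 r^{-1+2m}\,\d r$ actually diverges, and only the cancellation makes the form of a cut-off $r^{1/2}$ finite; at $m=\tfrac12$ your claimed divergent integral $\int_0 r^{-1-2m}\,\d r$ carries the coefficient $(\tfrac12-m)^2=0$, so the cut-off of $r^{1/2-m}\equiv1$ has finite form norm yet still lies outside the completion (it fails to be approximable by $C_\mathrm{c}^\infty$, exactly as the constant fails to lie in $H_0^1$). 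Finiteness of the naive energy is neither necessary nor sufficient here, so the argument that the Friedrichs b.c.\ is $\cW(\Xi,r^{1/2+m})(0)=0$ is not established; the same cancellation defeats your exclusion of $r^{1/2}\ln r$ at $m=0$, whose energy density is \emph{not} $\sim r^{-1}$ after the potential term is included. (ii) For $-1<m<0$ you conclude that $H_m$ "must be the other self-adjoint extension" and is therefore Krein, but for $|m|<1$ there is a whole circle of self-adjoint extensions, an interval of which are non-negative, so being different from Friedrichs does not identify the Krein extension. Moreover $\inf\sigma(H_m)=0$ is not an isolated eigenvalue and $\cN(H^{\max})\cap L^2(\rr_+)=\{0\}$, so the usual Krein--von Neumann domain formula $\cD(S_K)=\cD(\bar S)\dotplus\Ker S^*$ cannot simply be "invoked." The identification in \cite{Derezinski2} instead exploits homogeneity: among all self-adjoint realizations only $H_m$ and $H_{-m}$ are scaling-covariant, and since both the Friedrichs and Krein extensions must be scaling-covariant, one is $H_{|m|}$ and the other $H_{-|m|}$; a separate argument (e.g.\ via the factorization suggested by the label of \eqref{fact++}, or comparison of form domains) then pins down which is which. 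Some such mechanism is needed to close your parts 3 and 4.
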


Let us go back to complex $m$ satisfying $\Re(m)>-1$.
The Bessel operator is exactly solvable in a very strong sense. Besides the resolvent, one can also compute the integral kernel of the holomorphic semigroup generated by $H_m$, see, e.g., \cite{Kovarik}. It can also be written in two equivalent ways.
\begin{align}
	\e^{-\frac{t}{2}H_m}(x,y)
	&=\sqrt{\frac{2}{\pi t}}\,\cI_m\Big(\frac{xy}{t}\Big)\,\e^{-\frac{x^2+y^2}{2t}},&\Re(t) \geq0;\\  
	\e^{\pm\frac{\i t}{2}H_m}(x,y)
	&=\e^{\pm\i\frac{\pi}{2}(m+1)}\sqrt{\frac{2}{\pi t}}\,\cJ_m\Big(\frac{xy}{t}\Big)\,\e^{\frac{\mp\i x^2\mp\i y^2}{2t}},&\pm\Im(t)\geq0.\label{newa}
\end{align}

As noted in \cite{Derezinski2,Pankrashkin}, we have the identity
\begin{equation}
	H_m = \Xi_m^{-1}(A) K^{-1}\Xi_m(A),\label{diago}
\end{equation}
where
\begin{align}
	\Xi_m(t)&:=\e^{\i\ln(2)t}\frac{\Gamma(\frac{m+1+\i t}{2})}{\Gamma(\frac{m+1-\i t}{2})},\quad
	A:=\frac{1}{2\i}(x\partial_x+\partial_xx),\quad\text{and}\quad
	K:=x^2.
\end{align}
The operator $\Xi_m(A)$ is called the {\em Hankel transformation}. Its integral kernel  can be also computed:
\begin{equation}
	\Xi_m(A)(x,y)=\sqrt{\frac{2}{\pi}}\cJ_m(xy).
\end{equation}
Note that \eqref{diago} describes diagonalization of the Bessel operator --- transforming it into the multiplication operator $K^{-1}$.

\begin{remark}
For $-1<\Re(m)<1$, we can also consider mixed b.c..  For more details, see e.g. \cite{Derezinski1}.
\end{remark}

\subsection{Exponential potentials}\label{sec:exp}

For $k\in\mathbb{C}$, the {\em Schr\"odinger operator with the  exponential potential} is
formally given by
\begin{equation} \label{expo}
	M_k:=-\partial_x^2+k^2\e^{2x}. 
\end{equation}
We will interpret it as a closed operator on $L^2(\rr)$.
Without restricting the generality we can assume that $\Re(k)\geq0$. 

Note that $M_k$ is very common in the literature. 
For example, it has important applications in Liouville CFT, see, e.g., \cite[Chapter 4]{Seiberg}, and on hyperbolic manifolds.

For $r=\e^x$, we have the following formal identity: 
\begin{equation} \label{eq:exp-formal-id}
	r^2\Big(-\partial_r^2+\big(m^2-\tfrac14\big)\frac1{r^2}+k^2\Big)
	=\e^{\frac{x}{2}}\Big(-\partial_x^2+k^2\e^{2x}+m^2\Big)\e^{-\frac{x}{2}}. 
\end{equation} 
Using this, we can express  eigenfunctions of $M_k$ in terms of Bessel functions. In this case, 2-dimensional Bessel functions are more convenient than 1-dimensional ones, see Appendix, Section \ref{Hyperbolic 2d Bessel equation}.

Eigenvalues and corresponding eigenfunctions of \eqref{expo} are given by:
\[
\renewcommand{\arraystretch}{1.5}
\begin{array}{l|l|l}
	\hline
	\text{eigenvalue} & \text{parameters}&\text{eigenfunctions} \\
	\hline
	-m^2 &\Re(k)\geq0,\quad  k \neq 0 & I_{\pm m}(k\e^x), \quad K_m(k\e^x) \\
	-m^2 \quad m\neq0& k = 0 & \e^{\pm mx}\\                                         
	\phantom{-}0 \quad& k = 0 & 1,\quad x \\
	\hline
\end{array}
\]
After checking the square integrability of these functions near the endpoints, we see that the  endpoints have the following indices:
\[
\renewcommand{\arraystretch}{1.5}
\begin{array}{r|l|r}
	\hline
	\text{endpoint} & 
	\text{parameters} & \text{index} \\
	\hline
	-\infty &\Re(k)\geq0 &  0 \\
	+\infty & \Re(k)>0 \text{ or } k=0  & 0 \\
	+\infty & \Re(k)=0, \quad k\neq 0  & 2 \\
	\hline
\end{array}
\]

\vspace{1em}

The following theorem describes the basic holomorphic family of Schr\"odinger operators with the exponential potential:

\begin{theoreme} \label{thm:exponential} 
	For $\Re(k)>0$ or $k=0$, the expression \eqref{expo} defines a unique closed operator on $L^2(\rr)$, which will be denoted $M_k$.
The spectrum and point spectrum are  $\sigma(M_k)=[0,\infty[$, $\sigma_\mathrm{p}(M_k)=\emptyset$.
	
	Moreover, $\{\Re(k)>0\}\ni k\mapsto
	M_k$ is a holomorphic family of closed operators, and
	for $\Re(m)>0$,  the resolvent is given by
	\begin{align}\label{eq:resolventofMk}
		(M_k+m^2)^{-1}(x,y)&= 
		\begin{cases}
			I_m(k\e^x)\;\! K_m(ke^y) & \hbox{ if }  x < y, \\
			I_m(k\e^y)\;\! K_m(k\e^x) & \hbox{ if }  y < x.
		\end{cases}
	\end{align}
\end{theoreme}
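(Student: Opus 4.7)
The argument follows the five-step roadmap given at the end of Section \ref{sec:BasicTheory-1D-SchOp}. Since both endpoints of $\rr$ carry index $0$ when $\Re(k)>0$, equation \eqref{endpo} yields $\cD(L^{\max})=\cD(L^{\min})$, so \eqref{expo} determines a unique closed realization $M_k$ on $L^2(\rr)$ possessing a non-empty resolvent set. Fixing $m$ with $\Re(m)>0$, I would identify $\cN(M_k+m^2)$ via the substitution $r=ke^x$, which turns $(M_k+m^2)\psi=0$ into the $2$d modified Bessel equation of order $m$, so $\{I_m(ke^\cdot),K_m(ke^\cdot)\}$ is a basis of this two-dimensional space. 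The small-argument asymptotic $I_m(z)\sim(z/2)^m/\Gamma(m+1)$ together with $\Re(m)>0$ shows that $I_m(ke^\cdot)$ spans $\cU_{-\infty}(-m^2)$, while the large-argument asymptotic $K_m(z)\sim\sqrt{\pi/(2z)}\,\e^{-z}$ together with $\Re(k)>0$ shows that $K_m(ke^\cdot)$ spans $\cU_{+\infty}(-m^2)$.

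Combining the classical Wronskian identity $I_m(z)K_m'(z)-I_m'(z)K_m(z)=-z^{-1}$ with the chain rule gives
\[
  \cW\bigl(K_m(ke^\cdot),I_m(ke^\cdot)\bigr)(x)=ke^x\bigl(K_m(ke^x)I_m'(ke^x)-K_m'(ke^x)I_m(ke^x)\bigr)=1,
\]
so formula \eqref{eq:k} yields exactly the candidate kernel in \eqref{eq:resolventofMk}. The main analytic step is to verify boundedness of this candidate on $L^2(\rr)$; the theorem recalled after \eqref{eq:k} then identifies it as $(M_k+m^2)^{-1}$. I would apply Lemma \ref{lemma-bound} with some $c\in\rr$: the off-diagonal blocks $R_{\pm\mp}$ are Hilbert--Schmidt by Lemma \ref{hilberschmidt} and the square integrabilities established above. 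For $R_{--}$, the small-argument Bessel estimates give $|I_m(ke^x)K_m(ke^y)|\leq C\,\e^{-\Re(m)|y-x|}$ on $]-\infty,c[^2$, so Schur's criterion (Lemma \ref{lem:schur}) applies immediately. For $R_{++}$, Schur is applied using the doubly-exponential decay $|K_m(ke^y)|\leq C\,\e^{-\Re(k)\e^y}/\sqrt{\e^y}$, which dominates any power or exponential growth of $I_m$.

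Since $\{-m^2:\Re(m)>0\}=\cc\setminus[0,\infty[$, this construction yields $\sigma(M_k)\subseteq[0,\infty[$. The reverse inclusion is standard: for $\lambda\geq 0$ and $c_n\to-\infty$, the Weyl sequence $\phi_n(x)=n^{-1/2}\chi\bigl((x-c_n)/n\bigr)\e^{\i\sqrt{\lambda}\,x}$ with $\chi\in C_\mathrm{c}^\infty(\rr)$ has constant $L^2$-norm and satisfies $(M_k-\lambda)\phi_n\to 0$ in $L^2$, since $k^2\e^{2x}\to 0$ on the translated support of $\phi_n$. Emptiness of $\sigma_\mathrm{p}(M_k)$ follows from transversality: any $L^2$ eigenfunction at eigenvalue $-m^2$ must simultaneously lie in the one-dimensional spaces $\cU_{-\infty}(-m^2)$ and $\cU_{+\infty}(-m^2)$, which are linearly independent thanks to the nonvanishing Wronskian computed above; the case $-m^2\in[0,\infty[$ is treated analogously using the oscillating behaviour of $I_{\i\mu}$ at $-\infty$ and the exponential decay of $K_{\i\mu}$ at $+\infty$. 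Finally, holomorphy of $k\mapsto M_k$ on $\{\Re(k)>0\}$ follows from the joint holomorphy in $(k,m,x,y)$ of the kernel \eqref{eq:resolventofMk}, via the standard characterization of holomorphic families of closed operators by norm-holomorphy of the resolvent \cite{Derezinski2}. The main technical obstacle is the Schur estimate for $R_{++}$: the double exponential $\e^{-\Re(k)\e^y}$ must be handled carefully, though it ultimately produces very strong integrability.
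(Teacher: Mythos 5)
Your proposal is correct and follows essentially the same route as the paper's: uniqueness from index $0$ at both endpoints, the modified Bessel basis with Wronskian $\cW\big(K_m(k\e^\cdot),I_m(k\e^\cdot)\big)=1$, the splitting of Lemma \ref{lemma-bound} with Schur/Hilbert--Schmidt estimates on the diagonal blocks, and holomorphy in $k$ deduced from holomorphy of the kernel. Your only additions are the Weyl-sequence and transversality arguments establishing $\sigma(M_k)=[0,\infty[$ and $\sigma_\mathrm{p}(M_k)=\emptyset$, which the paper leaves implicit; just note that $c_n$ must tend to $-\infty$ fast enough that $c_n+n\sup\supp\chi\to-\infty$, so that $k^2\e^{2x}$ is in fact uniformly small on $\supp\phi_n$.
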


\begin{proof}
The case $k=0$ is well known, let us restrict ourselves to $\Re(k)>0$.
We check that for $\Re(m)>0$ $I_m(k\e^x)$ is up to a multiplicative constant the only eigensolution square integrable close to $-\infty$. Similarly for $K_m(k\e^x)$  close to $+\infty$. 
We check that
\begin{equation}
	\cW\big(I_m(k\e^x),K_m(k\e^x))=1.
\end{equation}
Next, we apply \eqref{eq:k}, which yields the kernel on the rhs of \eqref{eq:resolventofMk} as a candidate for the resolvent of $M_k$. 
Let us denote it by $R_k(-m^2;x,y)$.
In Lemma \ref{lem:holomorphic-Rkm2}, following the strategy given in Section \ref{Boundedness of resolvent}, we check the boundedness of $R_{k}(-m^{2})$ and its analyticity wrt $k$. This ends the proof of the theorem.
\end{proof}

\begin{lemma}[Boundedness of Kernel]\label{lem:holomorphic-Rkm2} 
	Let $\Re(m)>0$ and $\Re(k)>0$.
	Let $R_{k}(-m^{2})$ be the operator with kernel \eqref{eq:resolventofMk}.
	Then $R_{k}(-m^{2})$ is a bounded operator and the map $k\mapsto R_{k}(-m^{2})$ is a holomorphic family of bounded operators, which does not have a holomorphic extension to a larger subset of the complex plane.
	\label{popo3}
\end{lemma}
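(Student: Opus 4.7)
The plan is to follow the strategy outlined in Section \ref{Boundedness of resolvent}. I would pick a splitting point $c \in \rr$ (say $c = 0$) and decompose
\begin{equation*}
R_{k}(-m^{2}) = R_{--} + R_{-+} + R_{+-} + R_{++}
\end{equation*}
as in \eqref{split1}--\eqref{split2}, so that by Lemma \ref{lemma-bound} it suffices to bound the four summands. The key inputs are the standard asymptotic expansions of the 2d Bessel functions: as $z \to 0$ with $\Re(m)>0$,
\begin{equation*}
I_m(z) = \frac{(z/2)^m}{\Gamma(m+1)} \bigl(1+o(1)\bigr), \qquad K_m(z) = \frac{\Gamma(m)}{2}(z/2)^{-m}\bigl(1+o(1)\bigr),
\end{equation*}
and as $|z|\to\infty$ in $\{\Re(z)>0\}$,
\begin{equation*}
I_m(z) \sim \frac{\e^{z}}{\sqrt{2\pi z}}, \qquad K_m(z) \sim \sqrt{\tfrac{\pi}{2z}}\,\e^{-z}.
\end{equation*}

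Next I would dispose of the two off-diagonal blocks $R_{-+}$ and $R_{+-}$ by a Hilbert--Schmidt estimate (Lemma \ref{hilberschmidt}). In these blocks the variables are separated, so $\|R_{+-}\|_{2}^{2}$ factorizes into $\int_{-\infty}^{c}|I_m(k\e^y)|^2\,\d y \cdot \int_{c}^{\infty}|K_m(k\e^x)|^2\,\d x$: the first integral converges near $-\infty$ thanks to the power-law decay $|I_m(k\e^y)|^2 \lesssim \e^{2\Re(m)y}$ (using $\Re(m)>0$), and the second converges near $+\infty$ because $|K_m(k\e^x)|^2 \lesssim \e^{-x}\e^{-2\Re(k)\e^x}$ (using $\Re(k)>0$). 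The block $R_{-+}$ is treated symmetrically.

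For the diagonal blocks $R_{--}$ and $R_{++}$, Hilbert--Schmidt fails and I would invoke Schur's criterion (Lemma \ref{lem:schur}). For $R_{--}$, using the $z\to 0$ asymptotics one finds $|R_{--}(x,y)| \lesssim \e^{\Re(m)(x_{<}-x_{>})}$, and a direct computation of $\int_{-\infty}^{c}|R_{--}(x,y)|\,\d y$ (splitting at $y=x$) yields a bound of order $1/\Re(m)$ uniformly in $x$, and symmetrically in $y$. For $R_{++}$, using the large-argument asymptotics one gets $|R_{++}(x,y)|\lesssim \frac{1}{|k|}\e^{-(x+y)/2}\,\e^{\Re(k)(\e^{x_{<}}-\e^{x_{>}})}$; substituting $u = \e^{x_{>}}-\e^{x_{<}}$ gives uniform-in-$x$ bounds decaying like $\e^{-2x}$ as $x\to+\infty$, and likewise in $y$. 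This proves boundedness of $R_{k}(-m^{2})$.

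Holomorphicity in $k$ on $\{\Re(k)>0\}$ follows because the kernel is jointly holomorphic in $k$ and entrywise and the above bounds are locally uniform on compacts of $\{\Re(k)>0\}$; one then applies Morera together with weak-operator continuity, or simply differentiates the kernel. For the maximality of the domain: on the boundary $\Re(k)=0$, $k\neq0$, the asymptotics change character — $K_m(\ii\ell \e^x)$ becomes oscillatory and is no longer square integrable near $+\infty$, so the endpoint $+\infty$ acquires index $2$, no \emph{canonical} closed realization exists, and the operator $M^\gamma_{\i\ell}$ discussed in the introduction exhibits point spectrum at parameters satisfying $\e^{\i m\pi}=\gamma$. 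Hence any purported holomorphic extension would have to produce poles in $m$ on the boundary, contradicting boundedness for fixed $m$ with $\Re(m)>0$. The main obstacle I anticipate is carrying out the Schur estimate for $R_{++}$ cleanly, since one is integrating functions of the form $\e^{\Re(k)(\e^{x}-\e^{y})}$ and the effective exponential cutoff is at scale $\e^{x}$, requiring a careful change of variables rather than naive bounds.
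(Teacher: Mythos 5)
The boundedness and holomorphy parts of your proposal follow essentially the same route as the paper: split at $c=0$ via \eqref{split1}--\eqref{split2}, kill the off-diagonal blocks by Hilbert--Schmidt, use the small- and large-argument asymptotics of $I_m,K_m$, and treat $R_{--}$ by Schur's criterion. The only cosmetic difference is $R_{++}$: the paper simply discards the factor $\e^{-\Re(k)|\e^x-\e^y|}\leq 1$ and observes that $\e^{-(x+y)/2}$ on $[0,\infty[^2$ is already a Hilbert--Schmidt kernel, which is slightly cleaner than your Schur estimate with the substitution $u=\e^{x_>}-\e^{x_<}$; both work.

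The non-extendability part, however, has a genuine gap. First, your key factual claim is backwards: for $k=\i\ell$ the function $K_m(\i\ell\e^x)$ has modulus $\sim C\e^{-x/2}$ and therefore \emph{is} square integrable near $+\infty$; the endpoint acquires index $2$ there precisely because \emph{all} solutions become $L^2$, not because $K_m$ ceases to be. Second, the argument via ``poles in $m$'' does not address the statement: $m$ is fixed in this lemma, and the claim is that the map $k\mapsto R_k(-m^2)$ cannot be continued holomorphically past the boundary $\Re(k)=0$ into an open set meeting $\{\Re(k)<0\}$. The existence of several closed realizations $M^\gamma_{\i\ell}$ on the boundary is suggestive but does not by itself rule out a holomorphic extension of the bounded-operator-valued map (the extension need not be the resolvent of any realization of the differential expression). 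The paper's argument is the one you need: fix $g\in C_{\mathrm{c}}^\infty(\rr)$; then $k\mapsto R_k(-m^2)g$ is entire as an $L^2_{\mathrm{loc}}$-valued map, so by uniqueness of analytic continuation any bounded holomorphic extension would have to act on $g$ by the same kernel formula; but away from the support of $g$ this produces a multiple of $I_m(k\e^x)$ (resp.\ $K_m(k\e^x)$), and for $\Re(k)<0$ these grow doubly exponentially as $x\to+\infty$ since $\Re(k\e^x)\to-\infty$, so the image is not in $L^2(\rr)$ — contradicting boundedness. You should replace your concluding paragraph with this continuation argument.
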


\begin{proof}
Since (a) both modified Bessel functions $ I_{m}$ and $  K_{m}$ are analytic for fixed $m$ with $\Re(m)>0$ and (b) the function $k\e^{x}$ is analytic in $k$, the kernel $R_{k}(-m^{2};x,y)$ is an analytic function of parameter $k$. 
It is easy to see that for any $f,g\in C_\mathrm{c}^{\infty}(\rr)$, the quantity 
\[
	( f | R_{k}(-m^{2})\,g )
	=\int\overline{f(x)} \, R_{k}(-m^{2};x,y) \, g(y)\,\mathrm{d}x\mathrm{d}y
\]
is analytic in $k$. 
Since $C_\mathrm{c}^{\infty}(\rr)$ is a dense subset of $L^{2}(\rr)$, it remains to prove that $R_{k}(-m^{2};x,y)$ is locally bounded in $k$.

To proceed further, we use the method of Section \ref{Boundedness of resolvent}.  
We split the resolvent as in \eqref{split1} and \eqref{split2} with $c=0$. 
By Lemma \ref{lemma-bound}, we need to prove the boundedness of $R_{--}$ and $R_{++}$.

We have, for $x\to0$,
\begin{align*}
	I_{m}(x) & \sim\frac{1}{\Gamma(m+1)}\left(\frac{x}{2}\right)^{m},\quad m\neq-1,-2,\dots,\\
	K_{m}(x) & \sim
	\begin{cases}
		\Re(\Gamma(m)\left(\frac{2}{x}\right)^{m}) & \text{if }\Re(m)=0,m\neq0,\\
		-\ln\left(\frac{x}{2}\right)-\gamma & \text{if }m=0,\\
		\frac{\Gamma(m)}{2}\left(\frac{2}{x}\right)^{m} & \text{if }\Re(m)>0,\\
		\frac{\Gamma(-m)}{2}\left(\frac{2}{x}\right)^{m} & \text{if }\Re(m)<0,
	\end{cases}
\end{align*}
and, for $x\to\infty,$
\[
	I_{m}(x)\sim\frac{1}{\sqrt{2\pi x}}\e^{x}\quad\text{and}\quad  K_{m}(x)\sim\sqrt{\frac{\pi}{2x}}\e^{-x}.
\]
We check
\begin{align*}
	&\left|R_{k}^{--}(-m^{2};x,y)\right|\\
	&\quad\leq C_{m,k}\left(\e^{-\Re(m)x}\e^{\Re(m)y}\mathds{1}_{-\infty<y<x<0}(x,y)+\e^{\Re(m)x}\e^{-\Re(m)y}\mathds{1}_{-\infty<x<y<0}(x,y)\right).
\end{align*}
Since
\begin{align*}
	&\sup_{x\in\,]-\infty,0]}\int\e^{-\Re(m)x}\e^{\Re(m)y}\mathds{1}_{-\infty<y<x<0}(x,y)\,\mathrm{d}y \\
	&\quad=\sup_{x\in\,]-\infty,0]}\e^{-\Re(m)x}\int_{-\infty}^{x}\e^{\Re(m)y}\,\mathrm{d}y=\sup_{x\in\,]-\infty,0]}\e^{-\Re(m)x}\frac{1}{\Re(m)}\e^{\Re(m)x}=\frac{1}{\Re(m)},\\
	&\sup_{y\in\,]-\infty,0]}\int\e^{-\Re(m)x}\e^{\Re(m)y}\mathds{1}_{-\infty<y<x<0}(x,y)\,\mathrm{d}x \\
	&\quad=\sup_{y\in\,]-\infty,0]}\e^{\Re(m)y}\int_{y}^{0}\e^{-\Re(m)x}\,\mathrm{d}x=\sup_{y\in\,]-\infty,0]}\e^{\Re(m)y}\frac{1}{\Re(m)}(-\int_{0}^{-y}\e^{\Re(m)x}\,\mathrm{d}x)=\frac{1}{\Re(m)},
\end{align*}
by Schur's criterion we obtain 
\[
	\left\Vert R_{k}^{--}(-m^{2})\right\Vert \leq2C_{m,k}.
\]

For the $R^{++}_{k}(-m^{2})$, we have the following: Since $\Re(k) > 0$, we obtain
\[
	\left|R^{++}_{k}(-m^{2};x,y)\right|\leq C_{m,k}\e^{-\frac{x+y}{2}}\e^{-\Re(k)|\e^{x}-\e^{y}|}
	\leq C_{m,k}\e^{-\frac{x+y}{2}}.
\]
Therefore, the Hilbert--Schmidt norm of $R^{++}_{k}(-m^{2})$ is finite.

Now, we prove that $R_{k}(-m^{2})$ cannot be extended to a holomorphic family of bounded operators beyond the axis $\ensuremath{\Re(k)=0}$.
Let us fix $g\in C_\mathrm{c}^{\infty}(\rr)$.
For $k\mapsto R_{k}(-m^{2})g$, with values in $L_{\mathrm{loc}}^{2}(\rr)$ is entire analytic.
If $R_{k}(-m^{2})$ could be extended to a holomorphic family of bounded operators, when applied to the function $g$ this extension should coincide with $R_{k}(-m^{2})g$. 
Then, for $x$ below the support of $g$, we have 
\[
	R_{k}(-m^{2})g(x)=\int  I_{m}(k\e^{x}) K_{m}(k\e^{y})g(y)\,\mathrm{d}y
	= I_{m}(k\e^{x})\int K_{m}(k\e^{y})g(y)\,\mathrm{d}y=C_{m,k} I_{m}(k\e^{x})
\]
and
\[
	R_{k}(-m^{2})g(x)=\int  K_{m}(k\e^{x}) I_{m}(k\e^{y})g(y)\,\mathrm{d}y
	= K_{m}(k\e^{x})\int  I_{m}(k\e^{y})g(y)\,\mathrm{d}y=C_{m,k} I_{m}(k\e^{x})
\]
for some constant $C_{m,k}$.

If $\Re(k)< 0$, then $ I_{m}(k\e^{x})\not\in L^2(\rr)$  because $ | I_{m}(x)| $ diverges as $|\Re(x)|\to \infty$.			
Hence, the map cannot be extend to $\Re(k)<0$.
\end{proof}

\begin{remark}
	One of applications of perturbed Bessel operators to quantum physics
	is the concept of {\em Regge poles} \cite{Regge}. 
	They are defined as  poles of the holomorphic function $m\mapsto (H_m + V + k^2)^{-1}$, where $V$ is typically a short-range potential.
	Substituting $r=\e^x$ similarly to \eqref{eq:exp-formal-id}, we obtain
	\[
	r^2 (H_m + V(r) + k^2 ) = \e^{\frac{x}{2}} \Big( -\partial_x^2 + k\e^x + \e^{2x} V(e^x) + m^2 \Big) \e^{-\frac{x}{2}}.
	\]
	This substitution is quite useful in many cases and sometimes called the {\em Langer substitution} \cite[Eq. (22)]{Langer}.
	Set $W(x):=\e^{2x}V(\e^x)$.
	Then, the transmutation property \eqref{regge} can be generalized to include a potential:
	\begin{equation}
		(M_k+W+m^2)^{-1}(x,y)
		=\e^{-\frac{x}{2}}(H_m+V+k^2)^{-1}(\e^x,\e^y)\,\e^{-\frac{y}{2}}.
	\end{equation}
	Thus Regge poles coincide with the poles of $m\mapsto(M_k + W + m^2)^{-1}$.
	Therefore, we have another equivalent definition 
	of  Regge poles, which is used e.g. in \cite{BBD}.
\end{remark}

\begin{figure}[ht]
	\centering
	\begin{subfigure}[b]{0.48\textwidth}
		\includegraphics[width=\textwidth]{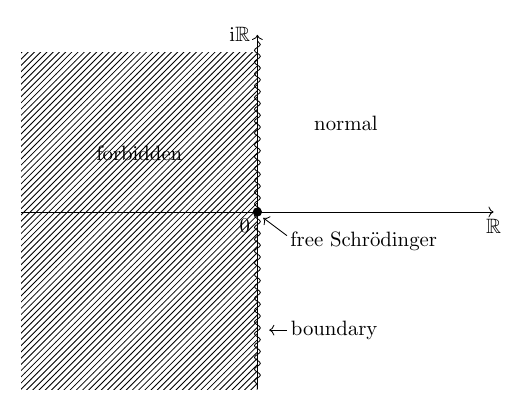}
		\caption{For $M_k:= -\partial_x^2 + k^2 e^{2x}$ with $k \in \mathbb{C}$}
		\label{fig:exp_plot_k}
	\end{subfigure}
	\begin{subfigure}[b]{0.48\textwidth}
		\centering
		\includegraphics[width=\textwidth]{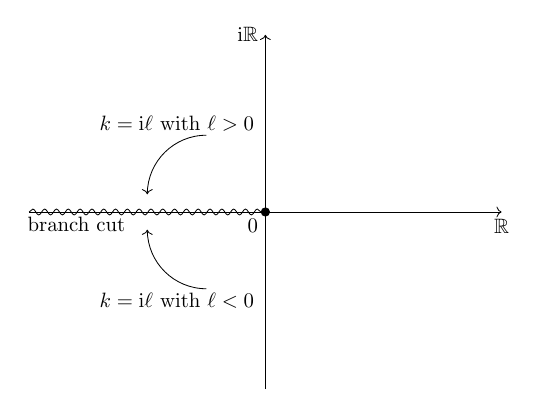}
		\caption{For $-\partial_x^2 + c \, e^{2x}$ with $c \in \mathbb{C}$ }
		\label{fig:exp_plot_c}
	\end{subfigure}
	\caption{Domains of the parameters $k$ and $c=k^2$}
\end{figure}

\subsection{Negative exponential potential}\label{sec:negative-exp}

The previous subsection covered  the case $\Re(k)>0$. 
In this subsection, we consider the case $\Re(k)=0$, $k\neq0$, that is, the {\em Schr\"odinger operator with a negative exponential potential}. 
Surprisingly, it appears in interesting physical applications, e.g., it is the main ingredient of the Feynman propagator on the Poincaré patch of the de Sitter and anti-de Sitter space. 
Clearly, it defines a  Hermitian operator, which possesses a 1-parameter family of closed realizations on $L^2(\rr)$.

In this section, for convenience, we introduce the parameter $\ell > 0$, so that $k =\ii\ell$, and we consider the formal expression 
\begin{equation} \label{eq:neg-exp}
	M_{\ii\ell} := -\partial_x^2 - \ell^2 \e^{2x}.
\end{equation} 
The corresponding maximal and minimal operators of $M_{\ii\ell}$ in $L^2(\rr)$ are denoted $M_{\ii\ell}^{\max}$ and $M_{\ii\ell}^{\min}$. The domain of $M_{\ii\ell}^{\max}$ is given by
\[
	\cD(M_{\ii\ell}^{\max}) = \{ f \in L^2(\rr) \, | \, (-\partial_x^2 - \ell^2 \e^{2x}) f \in L^2(\rr) \},
\]
and $M_{\ii\ell}^{\min}$ is the closure of the restriction of \eqref{eq:neg-exp} to $C^\infty_{\mathrm{c}}(\rr)$.

In order to set b.c., we will use the Hankel functions
\[
	H_{\frac{1}{2}}^\pm(r) = \mp\ii\Big(\frac{2}{\pi r}\Big)^{\frac{1}{2}}\e^{\pm\ii r}.
\]
(We could use $H_m^\pm(r)$ with other $m$, but the  parameter $\frac{1}{2}$ gives especially simple elementary functions.)

First, we describe the two distinguished realizations:
\begin{theoreme} 
	Let $\ell>0$. Then there exists two
	closed operators in the sense of $L^2(\rr)$ that on
	$C_\mathrm{c}^\infty(]0,\infty[)$ is given by \eqref{eq:neg-exp}
	and satisfy the following b.c.\ at $+\infty$:  
	\begin{align}
		\cD(M^0_{\ii\ell}) = \Big\{ \Xi \in \cD(M_{\ii\ell}^{\max})
		\mid \lim_{x \to \infty} \cW\big(H_{\frac12}^+(\ell \e^x) \, , \,
		\Xi(x)\big) = 0 \Big\},\\
		\cD(M^\infty_{\ii\ell}) = \Big\{ \Xi \in \cD(M_{\ii\ell}^{\max}) \mid \lim_{x \to \infty} \cW\big(H_{\frac12}^-(\ell \e^x) \, , \, \Xi(x)\big) = 0 \Big\}. 
	\end{align} 
	Both $M_{\ii\ell}^0$ and $M_{\ii\ell}^\infty$ do not have point spectrum, more precisely,
	\begin{equation} 
		\sigma(M_{\ii\ell}^0) = \sigma(M_{\ii\ell}^\infty) =
                [0,\infty[\,, \quad 
                		\sigma_\mathrm{p}(M_{\ii\ell}^0) = \sigma_\mathrm{p}(M_{\ii\ell}^\infty) = \emptyset
                \; , 
	\end{equation}
	and, for $\Re(m)>0$,
	\begin{align}\label{popo1}
	(M_{\ii\ell}^0+m^2)^{-1}(x,y)
		&=\frac{\pi\i}{2}
		\begin{cases}
			J_m(\ell \e^x)\;\! H_m^+(\ell \e^y)& \hbox{ if } x < y, \\[2ex]
			J_m(\ell \e^y)\;\!H_m^+(\ell \e^x)& \hbox{ if } y < x;
		\end{cases}\\\label{popo2}
		(M_{\ii\ell}^\infty+m^2)^{-1}(x,y)
		&=-\frac{\pi\i}{2}
		\begin{cases} 
			J_m(\ell \e^x)\;\!  H_m^-(\ell \e^y)& \hbox{ if } x < y, \\[2ex]
			J_m(\ell \e^y)\;\!  H_m^-(\ell \e^x)& \hbox{ if } y < x.
		\end{cases}
	\end{align}
\end{theoreme}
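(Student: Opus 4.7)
The plan is to apply the general framework of Section \ref{sec:BasicTheory-1D-SchOp}. From the index table for the exponential potential, $\nu_{-\infty}(M_{\ii\ell})=0$ and $\nu_{+\infty}(M_{\ii\ell})=2$, so $\dim\cB=2$ and a single Wronskian b.c.\ at $+\infty$ selects a closed realization. Although $H_{\frac12}^\pm(\ell\e^x)$ are not themselves in $\cD(M_{\ii\ell}^{\max})$ (they blow up at $-\infty$), only the behavior near $+\infty$ is used, and there they are two linearly independent $L^2$ eigenfunctions. As the paper notes after \eqref{wronski}, this is permitted. The Lagrange identity \eqref{lagrange} then shows that $\Xi\mapsto\lim_{x\to\infty}\cW(H_{\frac12}^\pm(\ell\e^x),\Xi(x))$ are two linearly independent nonzero elements of $\cB$, so each functional defines a valid closed realization; the choice of order $\tfrac12$ is made only to obtain elementary defining functions.

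Fix $m$ with $\Re(m)>0$. Via the substitution $r=\ell\e^x$ and the formal identity \eqref{eq:exp-formal-id}, the eigenfunctions of $M_{\ii\ell}+m^2$ are 2d Bessel functions of argument $\ell\e^x$. The $r\to 0$ asymptotic $J_m(r)\sim (r/2)^m/\Gamma(m+1)$ shows that $J_m(\ell\e^x)$ is, up to scalar, the unique $L^2$-near-$-\infty$ solution. Near $+\infty$, $H_m^\pm(\ell\e^x)\sim\mathrm{const}\cdot(\ell\e^x)^{-1/2}\e^{\pm\ii\ell\e^x}$; a short calculation in which the $\e^{\pm\ii\ell\e^x}$ factors cancel to leading order (and the next-order correction is $O(\e^{-x})$) gives $\lim_{x\to\infty}\cW(H_{\frac12}^\pm(\ell\e^x),H_m^\pm(\ell\e^x))=0$, so $\Psi_b=H_m^\pm(\ell\e^x)$ satisfies the b.c.\ defining $M_{\ii\ell}^{0,\infty}$ respectively. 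The classical Bessel Wronskian $\cW_r(J_m,H_m^\pm)=\pm 2\ii/(\pi r)$ combined with the chain-rule identity $\cW_x(F(\ell\e^x),G(\ell\e^x))=\ell\e^x\cdot\cW_r(F,G)(\ell\e^x)$ yields $\cW(H_m^\pm(\ell\e^x),J_m(\ell\e^x))=\mp 2\ii/\pi$; substituting into \eqref{eq:k} produces exactly \eqref{popo1}--\eqref{popo2} with prefactors $\pm\pi\ii/2$.

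The main obstacle is proving boundedness of the resolvent candidates on $L^2(\rr)$, and I expect the analogue of Lemma \ref{popo3} to be the most substantive step. Following Subsection \ref{Boundedness of resolvent}, I would split each kernel with a cutoff $c\in\rr$ into $R_{\pm\pm}$. The pieces $R_{+-}$ and $R_{-+}$ are Hilbert--Schmidt by Lemma \ref{hilberschmidt}, since $J_m(\ell\e^x)$ decays like $\e^{\Re(m)x}$ at $-\infty$ and $|H_m^\pm(\ell\e^x)|\sim\e^{-x/2}$ at $+\infty$. The piece $R_{--}$ is handled by the same Schur-criterion computation as in the proof of Lemma \ref{popo3}, using the small-$r$ asymptotics of $J_m$ and $H_m^\pm$. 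For $R_{++}$ the Bessel factors oscillate rather than decay in $\e^x-\e^y$, but the pointwise bound $|R(x,y)|\leq C\e^{-(x+y)/2}$ from the large-argument asymptotics still makes $R_{++}$ Hilbert--Schmidt on $[c,\infty[^2$. Boundedness on $\{\Re(m)>0\}$ gives $\sigma\subset[0,\infty[$; the reverse inclusion follows from a standard Weyl sequence at $-\infty$ (where the potential vanishes), $\phi_n(x)=n^{-1/2}\chi(x/n)\e^{\ii\sqrt{\lambda}\,x}$ pushed deep into the negative axis, for any $\lambda\geq 0$. Finally, $\sigma_{\mathrm p}=\emptyset$: for $\Re(m)>0$ the resolvent already exists, while for $\Re(m)=0$ the candidate local solutions $\e^{\pm mx}$ near $-\infty$ have unit modulus and hence are not in $L^2$ there, so no square-integrable eigenfunction can exist at any nonnegative energy.
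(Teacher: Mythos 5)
Your proposal is correct and follows essentially the same route as the paper: verify the boundary condition via the Wronskian with $H_{\frac12}^\pm$, build the candidate resolvent from \eqref{eq:k}, and carry over the boundedness argument of Lemma \ref{popo3} (whose $R_{++}$ estimate only uses the bound $|R_{++}(x,y)|\leq C\e^{-(x+y)/2}$ and hence survives on $\Re(k)=0$). The only cosmetic differences are that the paper identifies the kernels with $R_{\mp\ii\ell}(-m^2)$ via the connection formulas between $J_m,H_m^\pm$ and $I_m,K_m$ instead of recomputing the Bessel Wronskian, and it leaves the inclusion $[0,\infty[\,\subset\sigma$ implicit where you supply an explicit Weyl sequence.
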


\begin{proof}
	First we check that 
	\begin{equation}
		\lim_{x\to\infty}\cW\big(H_{\frac12}^\pm(\ell \e^x), H_{m}^\pm(\ell\e^x)\big)
		=0.
	\end{equation}
	Therefore, the rhs of \eqref{popo1}, resp. \eqref{popo2} are good candidates for the inverses of $M_{\ii\ell}^\infty+m^2$, resp. $M_{\ii\ell}^0+m^2$. But  the rhs of \eqref{popo1}, resp. \eqref{popo2} coincide with $R_k(-m^2)$ for $k=\ii\ell$, resp. $k=-\ii\ell$, in the notation  of Lemma \ref{popo3} (see the proof of Thm \ref{thm:exponential-conv}). 
	And the proof of this lemma  applies. \end{proof}

In the following theorem, we prove that the realizations described above are limiting cases  of the basic holomorphic family from the previous subsection.

\begin{theoreme}\label{thm:exponential-conv}
	We have the following weak convergence: 
	\begin{enumerate}
		\item We have
		\begin{equation}
			\wlim_{\varepsilon\searrow0}(M_\varepsilon+m^2)^{-1}=(-\partial_x^2+m^2)^{-1}.
		\end{equation}
		\item For $\ell >0$ and $\Re(m)>0$, we have
		\begin{align}
			\wlim_{\varepsilon\searrow0}(M_{\varepsilon + \i \ell}+m^2)^{-1}=(M_{\ii\ell}^\infty+m^2)^{-1},\\
			\wlim_{\varepsilon\searrow0}(M_{\varepsilon - \i \ell}+m^2)^{-1}=(M_{\ii\ell}^0+m^2)^{-1}.
		\end{align}
	\end{enumerate}
\end{theoreme}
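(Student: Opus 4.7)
The plan is to pass to the limit $\varepsilon\searrow 0$ directly in the explicit resolvent kernel \eqref{eq:resolventofMk} from Theorem \ref{thm:exponential}, identify the limit with the target expressions \eqref{popo1}--\eqref{popo2} via the standard connection formulas between modified and Hankel/$J$ Bessel functions, and upgrade pointwise convergence of kernels to weak operator convergence using dominated convergence on $C_\mathrm{c}^\infty(\rr)$ together with a uniform operator bound.

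\textbf{Step 1: Pointwise limit of the kernel.} Since $I_m$ is entire and $K_m$ is analytic on $\cc\setminus\,]-\infty,0]$, the kernel $R_k(-m^2;x,y)=I_m(k\e^{\min(x,y)})K_m(k\e^{\max(x,y)})$ depends continuously on $k$ whenever $k\e^x$ avoids $]-\infty,0]$. For $k=\varepsilon\pm\i\ell$ with $\ell>0$, the argument stays in the upper/lower right quadrant and approaches the imaginary axis, so pointwise convergence is immediate. I then invoke the connection identities
\[
I_m(\pm\i u)=\e^{\pm\i m\pi/2}J_m(u),\quad K_m(\i u)=-\tfrac{\pi\i}{2}\e^{-\i m\pi/2}H_m^-(u),\quad K_m(-\i u)=\tfrac{\pi\i}{2}\e^{\i m\pi/2}H_m^+(u)
\]
for $u>0$. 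In the products $I_m(k\e^x)K_m(k\e^y)$, the phases $\e^{\pm\i m\pi/2}$ telescope, and the resulting expressions match exactly \eqref{popo2} for $k\to\i\ell$ and \eqref{popo1} for $k\to-\i\ell$. For part (1), the small-argument asymptotics $I_m(z)\sim (z/2)^m/\Gamma(m+1)$ and $K_m(z)\sim\tfrac{\Gamma(m)}{2}(2/z)^m$ (valid since $\Re(m)>0$) make the $\varepsilon$-dependence cancel in the product, producing the free resolvent kernel $\tfrac{1}{2m}\e^{-m|x-y|}$ of $-\partial_x^2+m^2$ on $L^2(\rr)$.

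\textbf{Step 2: From kernel convergence to weak operator convergence.} For $f,g\in C_\mathrm{c}^\infty(\rr)$, the matrix element $(f\mid R_k(-m^2)g)$ is an integral over $\mathrm{supp}(f)\times\mathrm{supp}(g)$, a compact subset of $\rr^2$. On this set the kernel is jointly continuous in $(\varepsilon,x,y)$ for $\varepsilon$ in a small neighborhood of $0$, hence uniformly bounded, and dominated convergence yields convergence of the matrix elements. To extend weak convergence from $C_\mathrm{c}^\infty(\rr)$ to all of $L^2(\rr)$, I need $\sup_{\varepsilon>0}\|(M_{\varepsilon\pm\i\ell}+m^2)^{-1}\|<\infty$. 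This comes from the splitting \eqref{split1}--\eqref{split2}: the Hilbert--Schmidt bound on $R_{++}$ in the proof of Lemma \ref{popo3} only uses $|\e^{-\Re(k)|\e^x-\e^y|}|\leq 1$, which holds for all $\Re(k)\geq 0$, while the Schur bound on $R_{--}$ is controlled by the $k$-independent leading term $\tfrac{1}{2m}\e^{m(x-y)}$ of the small-argument product $I_m(k\e^x)K_m(k\e^y)$.

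\textbf{Main obstacle.} The delicate point is ensuring that the kernel estimates from Lemma \ref{popo3} remain uniformly valid as $\Re(k)\searrow 0$, so that the intermediate region (where $k\e^x$ is of moderate size) does not contribute a diverging constant. This is why the argument works precisely for $\ell>0$: $k$ stays bounded away from $0$, the Bessel functions are continuous in that region, and the dominating functions used in the Schur/Hilbert--Schmidt bounds can be chosen independently of $\varepsilon$. Once the uniform norm bound is secured, density of $C_\mathrm{c}^\infty(\rr)$ in $L^2(\rr)$ gives the weak convergence claimed in all three statements.
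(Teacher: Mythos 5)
Your Step 1 coincides with the paper's own argument: the published proof also passes to the limit $\varepsilon\searrow0$ directly in the kernel $I_m(k\e^x)K_m(k\e^y)$ of \eqref{eq:resolventofMk} and identifies the limits with \eqref{popo1} and \eqref{popo2} through the same connection formulas, and your phase bookkeeping (including the identification of the free kernel $\tfrac{1}{2m}\e^{-m|x-y|}$ in part (1)) is correct. Your Step 2 is in fact more explicit than the paper's, which compresses the passage from pointwise kernel convergence to operator convergence into a single sentence; the scheme ``convergence of matrix elements on $C_\mathrm{c}^\infty(\rr)$, plus a uniform bound on the operator norms, plus density'' is the right way to obtain weak convergence, and for $k=\varepsilon\pm\i\ell$ your uniform bound does follow from the splitting of Lemma \ref{lem:holomorphic-Rkm2}, because the asymptotics of $I_m$ and $K_m$ are uniform on the closed sector $|\arg z|\leq\tfrac{\pi}{2}$ away from $z=0$ and $|k|\geq\ell>0$ there.

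The gap is in part (1). There $k=\varepsilon\searrow0$, and the uniform bound $\sup_{\varepsilon>0}\|(M_\varepsilon+m^2)^{-1}\|<\infty$ is \emph{not} delivered by the splitting at $c=0$: the estimate $|R_{++}(x,y)|\leq C_{m,k}\,\e^{-(x+y)/2}$ relies on the large-argument asymptotics of $I_m,K_m$ at $k\e^x$ for $x\geq0$, which fail on the growing region $0\leq x\leq\ln(1/\varepsilon)$ where $\varepsilon\e^x\leq1$, and the constant $C_{m,k}$ behaves like $|k|^{-1}$. You in effect acknowledge this yourself (the argument ``works precisely for $\ell>0$'') and then nevertheless assert the conclusion ``in all three statements''; as written, statement (1) is not proved. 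The repair is cheap: with the translation $(U_af)(x):=f(x-a)$ one has $M_\varepsilon=U_{\ln\varepsilon}^{-1}M_1U_{\ln\varepsilon}$, hence $\|(M_\varepsilon+m^2)^{-1}\|=\|(M_1+m^2)^{-1}\|$ for every $\varepsilon>0$, which is exactly the missing uniform bound (alternatively, redo your splitting with the $\varepsilon$-dependent cut $c=\ln(1/\varepsilon)$). The same unitary equivalence shows that in part (1) the convergence is genuinely only weak and not in norm, since $\|(M_\varepsilon+m^2)^{-1}-(-\partial_x^2+m^2)^{-1}\|$ is a nonzero constant independent of $\varepsilon$; so your restraint in claiming only weak convergence is correct, and the paper's closing phrase ``norm convergence'' overstates the case there.
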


\begin{proof}
	Note that 
	\[
	J_m(\ell \e^x) = \e^{\pm \i \frac{\pi}{2}m} I_m(\mp \i \ell \e^x)
	\quad\text{and}\quad
	H^{\pm}_m(\ell \e^y) = \frac{2}{\pi} \e^{\mp \i \frac{\pi}{2} (m+1)} K_m(\mp \ell \e^y).
	\]
	Then, for all $x<y$, either
	\[
	I_m\left((\varepsilon + \i \ell)\e^x\right)K_m\left((\varepsilon + \i \ell)\e^y\right)
	\to
	\frac{\pi}{2} J_m(\ell \e^x)H^{-}_m(\ell \e^y)
	=
	I_m(\i \ell \e^x) K_m(\i \ell \e^y)
	\]
	or
	\[
	I_m\left((\varepsilon + \i \ell)\e^x\right)K_m\left((\varepsilon + \i \ell)\e^y\right)
	\to
	\frac{\pi}{2} J_m(-\ell \e^x)H^{+}_m(-\ell \e^y)
	=
	I_m(-\i \ell \e^x) K_m(-\i \ell \e^y)
	\]
	as $\varepsilon \to 0$.
	Similar convergence can be obtained for $y<x$.
	Then, as $\ell\neq 0$ with uniform boundness of $I_m$ and $K_m$ and Lebesgue dominate convergence theorem gives the norm convergence.
\end{proof}

From Theorem \ref{thm:exponential-conv}, we see that by setting
\begin{align}
	M_0:=-\partial_x^2,\quad
	M_{\i\ell}:=M_{\ii\ell}^\infty,\quad
	M_{-\i\ell}:=M_{\ii\ell}^0;\quad \ell>0,
\end{align}
we extend the basic holomorphic family $\{\Re(k)>0\}\ni k\mapsto M_k$ to a
continuous family $\{\Re(k)\geq0\}\ni k\mapsto M_k$.

Before we continue, let us note that  $J_m(\ell\e^x)$ belong to
$L^2(\rr)$ for $\Re(m)>0$. Indeed,
\begin{equation}
	\int_\rr |J_{m}(\ell \e^x)|^2 \,\d x =\int_\rr \frac{|J_{m}(y)|^2}{y} \,\d y
	= \frac{1}{2m} 
        ,\label{nistbook}
\end{equation}
where the last identity is e.g. in \cite[Eq. (10.22.57)]{NIST} with $a=1$, $\mu=\nu=m$, and $\lambda = 1$.

Now, we are ready describe the remaining mixed realizations (see \cite{Stempak} for a similar result).

\begin{theoreme}
	Let $\ell>0$ and $\gamma\in\cc\setminus\{0\}$. There exists a unique closed operator in the sense of $L^2(\rr)$ that on $C_\mathrm{c}^\infty(]0,\infty[)$ is given by \eqref{eq:neg-exp} and satisfies the following b.c.\ at $+\infty$:
	\begin{equation}
		\cD(M^\gamma_{\ii\ell}) = \Big\{ \Xi \in \cD(M_{\ii\ell}^{\max}) \mid \lim_{x \to \infty} \cW\big(H_{\frac12}^+(\ell \e^x) + \gamma H_{\frac12}^-(\ell \e^x) \, , \, \Xi(x)\big) = 0 \Big\}. 
	\end{equation} 
	$\cc\backslash\{0\}\ni\gamma\mapsto M^\gamma_{\ii\ell}$ is  a holomorphic family of closed operators.  
	If $\gamma=\e^{\i\alpha}$, the eigenvalues and eigenfunctions of $M_{\ii\ell}^\gamma$ are
	\begin{align} 
		-(\alpha+n)^2,
		&\quad J_{\alpha+n}(\ell\e^x),\quad n\in 2\zz,\quad \Re(\alpha+n)>0.
	\end{align}
	its spectrum, point spectrum  and  resolvent are
	\begin{align}
		\sigma(M_{\ii\ell}^\gamma)
		&=[0,\infty[\;\cup\;
           \sigma_\mathrm{p}(M_{\ii\ell}^\gamma),\\
          \sigma_\mathrm{p}(M_{\ii\ell}^\gamma)&= \{-(n+\alpha)^2\ | \; \Re(n+\alpha)>0,\quad\ n\in2\zz\},\\
	(M_{\ii\ell}^\gamma +m^2)^{-1}(x,y)
		&=\frac{\pi\i}{2(\e^{\i m\pi}-\gamma)}
		\begin{cases} 
			J_m(\ell \e^x)\;\! \big(\e^{\i m\pi} H_m^+(\ell \e^y)+\gamma H_m^-(\ell \e^y)\big)
			& \hbox{ if } x < y, \\[2ex]
			J_m(\ell \e^y)\;\! \big(\e^{\i m\pi} H_m^+(\ell \e^x)+\gamma H_m^-(\ell \e^x)\big)
			& \hbox{ if } y < x.
		\end{cases}\label{resol}
	\end{align}
\end{theoreme}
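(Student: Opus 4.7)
My plan is to apply the four-step strategy summarized at the end of Section \ref{sec:BasicTheory-1D-SchOp}. Since $\nu_{-\infty}(M_{\ii\ell}) = 0$ and $\nu_{+\infty}(M_{\ii\ell}) = 2$, a unique closed realization is specified by a single nonzero functional in the two-dimensional boundary space at $+\infty$. The previous theorem established that $\cW(H_{\frac12}^+(\ell\e^x),\cdot)(+\infty)$ and $\cW(H_{\frac12}^-(\ell\e^x),\cdot)(+\infty)$ define two distinct closed extensions, so they span this boundary space. Consequently, for every $\gamma\in\cc\setminus\{0\}$ their combination $\cW(H_{\frac12}^+ + \gamma H_{\frac12}^-,\cdot)(+\infty)$ is nonzero and singles out a unique $M_{\ii\ell}^\gamma$; holomorphy in $\gamma$ follows from the linear dependence of the b.c.\ on $\gamma$.

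For the resolvent I would take $\Psi_a(-m^2,x) := J_m(\ell\e^x)$, which for $\Re(m)>0$ is, up to a scalar, the unique eigensolution square integrable near $-\infty$ (from $J_m(u)\sim u^m/(2^m\Gamma(m+1))$ as $u\to 0$ and $\ell\e^x\to 0$ as $x\to-\infty$). For the partner $\Psi_b(-m^2,x)$ I would seek a linear combination of $H_m^+(\ell\e^x)$ and $H_m^-(\ell\e^x)$ satisfying $\cW(H_{\frac12}^+ + \gamma H_{\frac12}^-,\Psi_b)(+\infty) = 0$. Using the large-argument asymptotics $H_m^\pm(u)\sim \sqrt{2/(\pi u)}\,\e^{\pm\i(u-m\pi/2-\pi/4)}$, the diagonal cross-Wronskians $\cW_x(H_{\frac12}^\pm,H_m^\pm)(+\infty)$ vanish (both share the same outgoing/ingoing phase, as already used in the previous theorem), while the off-diagonal ones $\cW_x(H_{\frac12}^\pm,H_m^\mp)(+\infty)$ are nonzero phase factors, from which the required ratio of coefficients is read off and gives $\Psi_b\propto \e^{\i m\pi}H_m^+ + \gamma H_m^-$ (matching the numerator in \eqref{resol}). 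Substituting into \eqref{eq:k}, with $\cW(\Psi_b,\Psi_a)$ evaluated via the standard constants $\cW_u(J_m,H_m^\pm) = \pm 2\i/(\pi u)$, produces formula \eqref{resol} with denominator proportional to $\e^{\i m\pi}-\gamma$.

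The main analytic step is to show that this candidate is a bounded operator on $L^2(\rr)$ off the discrete set $\{\e^{\i m\pi} = \gamma\}$. I would mimic the proof of Lemma \ref{popo3}, splitting the kernel as in \eqref{split1}--\eqref{split2} about $c = 0$. The piece $R_\gamma^{--}$ is controlled by Schur's criterion exactly as there, since only the small-argument behavior of the Hankel functions is relevant near $-\infty$. The piece $R_\gamma^{++}$ is the main obstacle: in contrast with the $\gamma = 0,\infty$ cases where only one of $H_m^\pm$ appeared, both now contribute, and the potential $-\ell^2\e^{2x}$ is growing, so the Hankel factors oscillate rather than decay. Exploiting the $(\ell\e^x)^{-1/2}$ envelope in their asymptotics provides an $\e^{-x/2}$ decay in each variable, enough for a Hilbert--Schmidt estimate via Lemma \ref{hilberschmidt}. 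Holomorphy in $\gamma$ of the resolvent then follows exactly as in Lemma \ref{popo3}.

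Finally, the point spectrum is read off directly. When $\e^{\i m\pi} = \gamma$ the constant Wronskian $\cW(\Psi_b,\Psi_a)$, proportional to $\gamma - \e^{\i m\pi}$, vanishes, so $\Psi_b$ becomes a scalar multiple of $\Psi_a = J_m(\ell\e^x)$; in particular $J_m(\ell\e^x)$ itself satisfies the b.c.\ at $+\infty$ and, by \eqref{nistbook}, lies in $L^2(\rr)$ precisely when $\Re(m) > 0$. Parametrizing $\gamma = \e^{\i\alpha}$ and solving $\e^{\i m\pi} = \e^{\i\alpha}$ under $\Re(m) > 0$ yields the stated list of eigenvalues $-(\alpha+n)^2$ with $n \in 2\zz$. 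The continuous part $[0,\infty[$ of the spectrum follows from a standard Weyl sequence argument supported near $-\infty$, where the potential vanishes and $M_{\ii\ell}^\gamma$ is asymptotically $-\partial_x^2$, so the b.c.\ at $+\infty$ plays no role.
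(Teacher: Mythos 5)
Your proposal follows exactly the template the paper applies to the neighbouring results in this section (the paper itself states this theorem without proof): identify $\cU_{\pm\infty}$, match the boundary condition at $+\infty$ through the limiting cross-Wronskians of Hankel functions, assemble the candidate resolvent via \eqref{eq:k}, prove boundedness by the $c=0$ splitting with a Schur estimate on $R^{--}$ and a Hilbert--Schmidt estimate on $R^{++}$ exploiting the $(\ell\e^x)^{-1/2}$ envelope, and read off the point spectrum from the vanishing of the constant Wronskian. All of these steps are sound and correctly adapted to the new feature that both $H_m^+$ and $H_m^-$ contribute at $+\infty$.

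The one step you assert instead of computing is the matching that is supposed to yield $\Psi_b\propto \e^{\i m\pi}H_m^+ + \gamma H_m^-$, and here you should actually carry out the calculation, because with the paper's literal normalizations it does not come out that way. Using $H_{\frac12}^\pm(r)=\mp\i(2/(\pi r))^{1/2}\e^{\pm\i r}$ and $H_m^\pm(u)\sim(2/(\pi u))^{1/2}\e^{\pm\i(u-m\pi/2-\pi/4)}$, the nonvanishing limits are $\cW\big(H_{\frac12}^+(\ell\e^x),H_m^-(\ell\e^x)\big)(+\infty)=-\tfrac4\pi\e^{\i\pi(\frac m2+\frac14)}$ and $\cW\big(H_{\frac12}^-(\ell\e^x),H_m^+(\ell\e^x)\big)(+\infty)=-\tfrac4\pi\e^{-\i\pi(\frac m2+\frac14)}$, so the condition $\cW(H_{\frac12}^++\gamma H_{\frac12}^-,\,c_+H_m^++c_-H_m^-)(+\infty)=0$ gives $c_-/c_+=\i\gamma\e^{-\i m\pi}$, i.e.\ $\Psi_b\propto\e^{\i m\pi}H_m^++\i\gamma H_m^-$, which differs from the numerator of \eqref{resol} by a phase on the $\gamma$ term. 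A quick consistency check confirms the mismatch: at the pole $\gamma=\e^{\i m\pi}$ of \eqref{resol} the numerator is proportional to $J_m$, yet $\cW(H_{\frac12}^++\gamma H_{\frac12}^-,J_m)(+\infty)\propto\e^{\i\pi(\frac m2+\frac14)}+\gamma\e^{-\i\pi(\frac m2+\frac14)}=\sqrt2\,\e^{\i\pi m/2}\neq0$, so $J_m$ does not satisfy the stated b.c.\ there. The theorem as printed is thus internally inconsistent by a phase between the boundary condition and the resolvent formula (and, relatedly, $\gamma=\e^{\i\alpha}$ should read $\gamma=\e^{\i\pi\alpha}$ for the eigenvalue list to solve $\e^{\i m\pi}=\gamma$). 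Your method is the right one and would detect and repair this, but as written your proposal inherits the discrepancy by declaring the match rather than deriving it; everything else (boundedness, holomorphy in $\gamma$, the Weyl-sequence argument for $[0,\infty[\;\subset\sigma$) goes through as you describe.
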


\vspace{1em}

\section{Schr\"odinger operators related  to the Whittaker equation}
\label{sec:whittaker}

\subsection{Whittaker operator}

This is another classic problem, described in many sources.
We treat \cite{DR, DFNR} as the main references for this subsection.

Let $m,\beta\in\mathbb{C}$. The \emph{Whittaker operator} is formally defined by 
\begin{align}\label{eq:fact+}
	H_{\beta,m}:=-\partial_r^2 + \big( m^2 - \tfrac{1}{4} \big) \frac{1}{r^2}-\frac{\beta}{r}. 
\end{align} It
is the radial part of the Schr\"odinger operator with the Coulomb potential, in dimension 3 used to describe the  Hydrogen atom. 
We will interpret it as a closed operator on $L^2(\rr_+)$.

We will use  1d Whittaker functions with conventions described in Appendix, Subsect. \ref{The 1d Whittaker equation}.

We first find eigenvalues and corresponding eigenfunctions of \eqref{eq:fact+}:
\[
\renewcommand{\arraystretch}{1.5}
\begin{array}{l|l|l}
	\hline
	\text{eigenvalue} &\text{parameters}& \text{eigenfunctions} \\
	\hline
	-k^2 \text{ with } k\neq0 & &\cI_{\frac{\beta}{2k},\pm m}(2kr), \quad \cK_{\frac{\beta}{2k},m}(2kr) \\
	\phantom{-}
	0  &\beta\neq0& r^{\frac{1}{4}}\cJ_{\pm
		2m}(2\sqrt{\beta r}), \quad
	r^{\frac{1}{4}}\cH_{2m}^\pm(2\sqrt{\beta
		r}) \\
	\phantom{-}  0&\beta=0,\quad m\neq0&r^{\frac12\pm m}\\
	\phantom{-}  0&\beta=0,\quad m=0&r^{\frac12},\quad r^{\frac12}\ln(r)\\
	\hline
\end{array}
\]
After checking the square integrability of these functions near the endpoints, we see that the endpoints have the following indices:
\[
\renewcommand{\arraystretch}{1.5}
\begin{array}{r|l|r}
	\hline
	\text{endpoint} & \text{parameters}
	& \text{index} \\
	\hline
	0 & |\Re(m)|<1 &  2 \\
	0 & |\Re(m)|\geq1  & 0 \\
	+\infty & &  0 \\
	\hline
\end{array}
\]

\vspace{1em}

Let us describe the basic holomorphic family of Whittaker operators:
\begin{theoreme} \label{thm:whittaker}
	For $\Re(m)\geq 1$, there exists a unique closed operator $H_{\beta,m}$ in the sense of $L^2(\rr_+)$, which on $C_\mathrm{c}^\infty(]0,\infty[)$ is given by \eqref{eq:fact+}. 
	It depends holomorphically on $\beta,m$. 
	It can be uniquely extended to a holomorphic family of closed operators on $L^2(\rr_+)$ 
	defined for $\Re(m)>-1$, $\beta\in\cc$,  $(\beta,m)\neq(0,-\frac12)$.
	Its spectrum and point spectrum are
	\begin{align}
		\sigma(H_{\beta,m}) &
		= [0,\infty[
		\; \cup \; 		\sigma_\mathrm{p}(H_{\beta,m}) \\
	\sigma_\mathrm{p}(H_{\beta,m}) &=		\Big\{
			-\frac{\beta^2}{4(n+m+\frac12)^2}\;\Big|\; n+m+\frac12\neq0,\;\Re\Big(\frac{\beta}{n+m+\frac12}\Big)>0,\; n\in\nn_0
		\Big\}.
	\end{align}
	
	Outside of the spectrum, the kernel of the resolvent of $H_{\beta,m}$ is
	\begin{equation}\label{The-resolvent}
		(H_{\beta,m}+k^2)^{-1}(x,y)
		:= \tfrac{1}{2k} \,
		\Gamma\big(\tfrac{1}{2}+m-\tfrac{\beta}{2k}\big)
		\begin{cases} \cI_{\frac{\beta}{2k},m}(2k x)\cK_{\frac{\beta}{2k},m}(2k y) & \mbox{ for }0<x<y,\\
			\cI_{\frac{\beta}{2k},m}(2k y)\cK_{\frac{\beta}{2k},m}(2k x) & \mbox{ for }0<y<x.
		\end{cases}
	\end{equation}
\end{theoreme}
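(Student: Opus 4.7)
The proof mirrors the strategy laid out in Section \ref{sec:BasicTheory-1D-SchOp} and its Bessel analogue, Theorem \ref{besseloperator}. From the endpoint-index table, for $\Re(m)\geq 1$ both endpoints have index $0$, so $L^{\min}=L^{\max}$ and the expression \eqref{eq:fact+} has a unique closed realization. For $-1<\Re(m)<1$ the endpoint $0$ has index $2$ and a boundary condition must be selected; I would choose the one set by $\Phi_a(r):=r^{\frac12+m}$, which reduces at $\beta=0$ to the Bessel b.c.\ of Theorem \ref{besseloperator} and, as noted in the discussion after \eqref{wronski}, need not itself lie in $\cD(L^{\max})$ when $\beta\neq 0$ -- this is precisely the example flagged in that passage. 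The Whittaker function $\cI_{\beta/(2k),m}(2kr)$ has leading behavior proportional to $r^{\frac12+m}$ near $0$, so it fulfills this b.c.

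For $\Re(k)>0$ the asymptotics recalled in Appendix \ref{Whittaker equation} show that $\cK_{\beta/(2k),m}(2kr)$ decays exponentially at $+\infty$ and that $\cI_{\beta/(2k),m}(2kr)$ is square-integrable at $0$. Plugging these into formula \eqref{eq:k} together with the standard Whittaker Wronskian identity
\[
	\cW\big(\cK_{\beta/(2k),m}(2k\cdot),\,\cI_{\beta/(2k),m}(2k\cdot)\big)=\frac{2k}{\Gamma(\tfrac12+m-\tfrac{\beta}{2k})}
\]
produces exactly the kernel \eqref{The-resolvent} as a candidate resolvent. Boundedness is established via the splitting \eqref{split1}--\eqref{split2}: the block $R_{++}$ is Hilbert--Schmidt thanks to the exponential decay of $\cK$, while the block $R_{--}$ is treated by Schur's criterion (Lemma \ref{lem:schur}) using the power asymptotics $\cI\sim r^{\frac12+m}$ and $\cK\sim r^{\frac12-m}$ near $0$, which give convergent integrals precisely because $|\Re(m)|<1$. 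Holomorphy in $(\beta,m)$ on the stated domain then follows from local uniformity of these bounds combined with the manifest analyticity of the kernel, and uniqueness of the extension from the region $\Re(m)\geq 1$ is a standard connectedness argument.

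The spectrum is identified as follows. A Weyl-sequence argument based on plane waves near $+\infty$ gives $[0,\infty[\,\subset\sigma(H_{\beta,m})$. Point spectrum arises from poles of the prefactor $\Gamma(\tfrac12+m-\tfrac{\beta}{2k})$ in \eqref{The-resolvent}: setting $\tfrac12+m-\tfrac{\beta}{2k}=-n$ for $n=0,1,2,\ldots$ gives $k=\beta/(2(n+m+\tfrac12))$, the constraint $\Re(k)>0$ (required for $\cK\in L^2$ at $+\infty$) becomes $\Re(\beta/(n+m+\tfrac12))>0$, and the corresponding eigenvalue is $-k^2=-\beta^2/(4(n+m+\tfrac12)^2)$. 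At such $k$ the Kummer transformation forces $\cI$ and $\cK$ to be proportional and both to lie in $L^2(\rr_+)$, so they furnish genuine square-integrable eigenfunctions. The excluded point $(\beta,m)=(0,-\tfrac12)$ is exactly where $\tfrac12+m-\tfrac{\beta}{2k}\equiv 0$ identically in $k$, so the prefactor blows up for all $k$ and no holomorphic extension is possible.

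The principal technical obstacle is the small-$r$ boundedness analysis in the strip $|\Re(m)|<1$, where both local solutions $r^{\frac12\pm m}$ are only borderline square-integrable; the Schur estimate requires a case split according to the sign of $\Re(m)$, with a logarithmic refinement on the critical line $\Re(m)=0$. A secondary subtlety is the continuous extension of the resolvent formula across the imaginary-$k$ axis, where the identification of the essential spectrum $[0,\infty[$ must be reconciled with the emergence of point spectrum from the $\Gamma$-poles so as to confirm that no further spectrum appears.
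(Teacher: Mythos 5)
Your overall architecture---unique realization for $\Re(m)\ge 1$ since both endpoints have index $0$, a boundary condition at $0$ for $-1<\Re(m)<1$, the Wronskian of $\cI$ and $\cK$ feeding \eqref{eq:k}, the $R_{--}/R_{++}$ splitting for boundedness, and the point spectrum read off from the poles of $\Gamma(\tfrac12+m-\tfrac{\beta}{2k})$ subject to $\Re(k)>0$---coincides with the paper's (which defers the detailed estimates to \cite{DR}). There is, however, one genuine error: the choice of boundary functional. The paper does \emph{not} set the b.c.\ by $r^{\frac12+m}$; it sets it by $\Phi_a(x)=x^{\frac14}\cJ_{2m}(2\sqrt{\beta x})$, and explicitly remarks after the theorem that the b.c.\ set by $x^{\frac12+m}$ works only for $\Re(m)>-\tfrac12$ and fails for $-1<\Re(m)\le-\tfrac12$. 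The reason is quantitative: near $0$,
\[
\cI_{\frac{\beta}{2k},m}(2kx)=c\,x^{\frac12+m}\Bigl(1-\tfrac{\beta x}{1+2m}+O(x^2)\Bigr),
\qquad
\cW\bigl(\cI_{\frac{\beta}{2k},m}(2k\,\cdot)\,,\,x^{\frac12+m}\bigr)(x)=c\,\tfrac{\beta}{1+2m}\,x^{1+2m}+O(x^{2+2m}),
\]
and $x^{1+2m}\to 0$ only when $\Re(m)>-\tfrac12$. For $-1<\Re(m)\le-\tfrac12$ and $\beta\neq0$ this limit is nonzero or divergent, so $\cI_{\frac{\beta}{2k},m}(2kx)$ does \emph{not} satisfy your boundary condition (worse, the functional $\Xi\mapsto\lim_{x\to0}\cW(x^{\frac12+m},\Xi)(x)$ is not even well defined on all of $\cD(L^{\max})$ there, since the correction term $x^{\frac32+m}$ is then at least as large as the second local solution $x^{\frac12-m}$). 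Consequently, on that part of the parameter range the kernel \eqref{The-resolvent} is not the resolvent of the operator you defined, and your family would not extend holomorphically across $\Re(m)=-\tfrac12$ for $\beta\neq0$.

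The fix is exactly the paper's: take $\Phi_a(x)=x^{\frac14}\cJ_{2m}(2\sqrt{\beta x})$ (a zero-energy eigenfunction), or equivalently the two-term truncation $x^{\frac12+m}\bigl(1-\tfrac{\beta x}{1+2m}\bigr)$, whose expansion matches $\cI_{\frac{\beta}{2k},m}$ to high enough order that the Wronskian limit vanishes for all $\Re(m)>-1$; this is also the ``$\Phi_a\notin\cD(L^{\max})$'' example announced after \eqref{wronski}, rather than $r^{\frac12+m}$ as you suggest. The rest of your argument (boundedness via Hilbert--Schmidt/Schur estimates with a logarithmic refinement at $\Re(m)=0$, the eigenfunctions arising because the connection formula makes $\cK$ proportional to $\cI$ at the $\Gamma$-poles, and the role of the excluded point $(0,-\tfrac12)$) is consistent with the paper.
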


\vspace{1em}

\begin{proof}[Proof of Theorem \ref{thm:whittaker}]
	We define $H_{\beta,m}$ for $\Re(m)>-1$, $(\beta,m)\neq(0,-\frac12)$ by the b.c.\ at $0$ given by $x^{\frac14}\cJ_{2m}(2\sqrt{\beta x})$. 
	We check that for $\Re(m)>-1$
	\begin{equation}
		\lim_{x\to0}  \cW\Big(\cI_{\frac{\beta}{2k},\pm m}(2kx),  x^{\frac14}\cJ_{2m}(2\sqrt{\beta x})\Big)=0.
	\end{equation}
	Moreover, $\cK_{\frac{\beta}{2k},m}(2kx)$ is square integrable near $+\infty$. We also find
	\begin{equation}
		\cW\Big(\cI_{\frac{\beta}{2k},m}(2kx),\cK_{\frac{\beta}{2k},m}(2k x)\Big)
		=\frac {2k}{\Gamma\big(\tfrac{1}{2}+m-\tfrac{\beta}{2k}\big)}.
	\end{equation}
	Next, we apply \eqref{eq:k}, which yields the kernel on the right-hand side of \eqref{reso1} as a candidate for the resolvent of \( H_{\beta,m} \).
	We check that it is bounded. Hence, it equals \( (H_{\beta,m} + k^2)^{-1} \). 
	We also verify that it depends analytically on \( \beta \) and \( m \). 
	Therefore,
	\begin{equation}
		\cc \times \{\Re(m) > -1\} \backslash (0, -\tfrac12)
		\ni (\beta, m) \mapsto H_{\beta,m}
	\end{equation}
	is an analytic family.
	For $\Re(m) \geq 1$, the b.c.\ is not needed, hence $H_{\beta,m}$ is then uniquely defined.
	
	See \cite{DR} for details.
\end{proof}

The operator $H_{\beta,m}$ is sometimes called the {\em Whittaker operator with pure b.c.}.
Note that for $m>-\frac12$, we can simplify the b.c.\ --- we can set it by $x^{\frac12+m}$. 
For $-1<\Re(m)\leq -\frac12$, interestingly, this does not work. 
Instead, we can use $x^{\frac12+m}(1-\frac{\beta}{1+2m})$ to set the b.c..
Details can be found in \cite{DR}. 

Note that $(0,-\frac{1}{2})$ is a singularity of the holomorphic function $(\beta,m) \mapsto H_{\beta,m}$. 
Additionally, we set $H_{0,-\frac{1}{2}} := H_{-\frac{1}{2}}$.
Then, the basic family of Whittaker operators extends the basic family of Bessel operators: 
\begin{equation} 
	H_{0,m}=H_m,\quad \Re(m)>-1.
\end{equation}

\begin{remark}
	For $-1<\Re(m)<1$, $\beta\in\cc$, we can also consider mixed  b.c.. We do not study them here, see e.g. \cite{DFNR}.
\end{remark}

\subsection{Morse potentials}\label{sec:Morse}

The {\em Schr\"odinger operator with the Morse potential} is formally given by 
\begin{equation}\label{eq:morse}
	M_{\beta,k}:=-\partial_x^2+k^2\e^{2x}-\beta\e^x. 
\end{equation}
We will interpret it as a closed operator on $L^2(\mathbb{R})$.
Without restricting the generality we can assume that $\Re(k)\geq0$.

For $r=\e^x$, we have the following formal identity: 
\begin{equation}
	r^2\Big(-\partial_r^2+\big(m^2-\tfrac14\big)\frac1{r^2}-\frac{\beta}{r}+k^2\Big)=\e^{\frac{x}{2}}\Big(-\partial_x^2+k^2\e^{2x}-\beta\e^x+m^2\Big)\e^{-\frac{x}{2}}.
\end{equation}
Therefore, eigenfunctions of $M_{\beta,k}$ can be expressed in terms of Whittaker functions. In this subsection instead of the standard (1-dimensional) Whittaker functions $\cI_{\beta,m},\cK_{\beta,m}$ it is more convenient to use  2d Whittaker functions $I_{\beta,m},K_{\beta,m}$, see Appendix \ref{sec:newWhittaker}.

We first find eigenvalues and corresponding eigenfunctions of \eqref{eq:isot}:
\[
\renewcommand{\arraystretch}{1.5}
\begin{array}{l|l|l}
	\hline
	\text{eigenvalue} &\text{parameters}& \text{eigenfunctions} \\
	\hline
	-m^2 &\Re(k)\geq0,\quad  k \neq 0 & I_{\frac{\beta}{2k},\pm m}(2k\e^{x}), \quad K_{\frac{\beta}{2k},m}(2k\e^{x}) \\
	-m^2 & k = 0, \quad\beta \neq 0 & I_{\pm 2m}(\sqrt{-4\beta}\, \e^{\frac{x}{2}}), \quad K_{2m}(\sqrt{-4\beta}\,  \e^{\frac{x}{2}} )\\
	-m^2 \text{ with }m\neq0&k = 0, \quad\beta=0 &  \e^{\pm  mx} \\
	\phantom{-}0&k=0,\quad\beta=0&1,\quad x\\
	\hline
\end{array}
\]
After checking the square integrability of these functions near the endpoints, we see that the endpoints have the following indices:
\[
\renewcommand{\arraystretch}{1.5}
\begin{array}{r|l|r}
	\hline
	\text{endpoint} &\text{parameters} 
	&                                                                 \text{index} \\
	\hline
	-\infty & & 0 \\
	+\infty & \Re(k)>0 & 0 \\
	+\infty & k=0, \beta\in\cc\setminus\rr_{>0}  & 0 \\
	+\infty & \Re(k)=0,k\neq0  &2 \\
	+\infty & k=0,\beta > 0  &2 \\
	\hline
\end{array}
\]

\vspace{1em}
Here is a description of the basic holomorphic family of Schr\"odinger operators with Morse potentials:
\begin{theoreme} \label{thm:morse}
	For $\beta,k\in \cc$ with $\Re(k)>0$  there exists a unique closed operator in the sense of $L^2(\mathbb{R})$, denoted $M_{\beta,k}$ which on $C_\mathrm{c}^\infty(\rr)$ it is given by \eqref{eq:morse}.
	It forms a holomorphic family of closed operators.
	
	Its  spectrum and point spectrum are
	\begin{align}
		\sigma(M_{\beta,k})&=	[0,\infty[\;\cup\;	\sigma_\mathrm{p}(M_{\beta,k}),\\
		\sigma_\mathrm{p}(M_{\beta,k})&=	\Big\{
			-m^2 \;\Big|\; m=\frac{\beta}{2k}-n-\frac{1}{2},\quad \Re(m)>0,\quad n \in\nn_0
		\Big\}
	.\label{specmorse}
	\end{align}
	Outside of the spectrum, its resolvent is given by
	\begin{equation}\label{eq:resol-Morse} 		
(M_{\beta,k}+m^{2})^{-1}(x,y)
		:= \;\Gamma\big(\tfrac{1}{2}+m-\tfrac{\beta}{2k}\big)
		\begin{cases}
			I_{\frac{\beta}{2k},m}(2k\e^{x}) K_{\frac{\beta}{2k},m}(2k\e^{y}), & \text{if }x<y,\\
			I_{\frac{\beta}{2k},m}(2k\e^{y}) K_{\frac{\beta}{2k},m}(2k\e^{x}), & \text{if }y<x.
		\end{cases}
	\end{equation}
\end{theoreme}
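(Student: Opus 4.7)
The plan is to mirror the proof strategy of Theorems \ref{thm:exponential} and \ref{thm:whittaker}. First, from the table above we see that for $\Re(k)>0$ both endpoints $-\infty$ and $+\infty$ have index $0$, so by \eqref{endpo} we have $\cD(M_{\beta,k}^{\max})=\cD(M_{\beta,k}^{\min})$; hence \eqref{eq:morse} defines a \emph{unique} closed realization $M_{\beta,k}$, and no boundary condition needs to be specified.

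Next, I would identify the distinguished eigensolutions of $(M_{\beta,k}+m^2)\Psi=0$. Using the formal identity with $r=\e^x$ noted just before the eigenfunction table, these are, after conjugation by $\e^{x/2}$, the Whittaker functions of argument $2k\e^x$. It is more convenient here to use the 2d Whittaker functions of Appendix \ref{sec:newWhittaker}: $I_{\frac{\beta}{2k},m}(2k\e^x)$ and $K_{\frac{\beta}{2k},m}(2k\e^x)$. From the small-argument asymptotics $I_{\frac{\beta}{2k},m}(z)\sim z^{\frac12+m}$, for $\Re(m)>-\tfrac12$ the function $I_{\frac{\beta}{2k},m}(2k\e^x)$ is, up to a scalar, the unique $L^2$-eigensolution at $-\infty$; in particular this holds for $\Re(m)>0$. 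From the large-argument asymptotics $K_{\frac{\beta}{2k},m}(z)\sim z^{-\frac12}\e^{-z/2}$ together with $\Re(k)>0$, $K_{\frac{\beta}{2k},m}(2k\e^x)$ is the unique $L^2$-eigensolution at $+\infty$. A direct Wronskian identity (see Appendix \ref{sec:newWhittaker}) yields
\begin{equation}
	\cW\big(I_{\frac{\beta}{2k},m}(2k\e^x),K_{\frac{\beta}{2k},m}(2k\e^x)\big)=\frac{1}{\Gamma(\tfrac12+m-\tfrac{\beta}{2k})}.
\end{equation}
Substituting into \eqref{eq:k} produces the candidate kernel \eqref{eq:resol-Morse}.

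The main obstacle is to verify that this candidate kernel defines a bounded operator depending holomorphically on $(\beta,k)$. The plan is to reproduce the argument of Lemma \ref{lem:holomorphic-Rkm2}: split the kernel via \eqref{split1}--\eqref{split2} with $c=0$. The off-diagonal pieces $R_{+-},R_{-+}$ are Hilbert--Schmidt because $I_{\frac{\beta}{2k},m}(2k\e^x)$ has $L^2$-integrable square $\sim \e^{(1+2\Re m)x}$ on $\,]-\infty,0]$ while $K_{\frac{\beta}{2k},m}(2k\e^x)$ has super-exponentially decaying square $\sim \e^{-x}\e^{-2\Re(k)\e^x}$ on $[0,\infty[$. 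For $R_{--}$, the lower-order term $-\beta\e^x$ is bounded on $\,]-\infty,0]$, and the Whittaker asymptotics reduce the bound to a Schur estimate of exactly the form used for $M_k$. For $R_{++}$, the dominant potential is still $k^2\e^{2x}$, and the large-argument Whittaker asymptotics yield $|R_{++}(x,y)|\leq C\,\e^{-\frac{x+y}{2}}\e^{-\Re(k)|\e^x-\e^y|}$, which is Hilbert--Schmidt on $[0,\infty[^2$. Holomorphy in $(\beta,k)$ of each piece then follows from the joint holomorphy of 2d Whittaker functions in their parameters and the Gamma prefactor, combined with local uniformity of these bounds.

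Finally, the spectrum is read off from the explicit kernel. The continuous part $[0,\infty[$ can be obtained either directly from the kernel (the formula ceases to be bounded on this ray because $K_{\frac{\beta}{2k},m}$ fails to be $L^2$ at $+\infty$ when $\Re(m)=0$) or, more elegantly, by transfer from Theorem \ref{thm:whittaker} via the transmutation identity relating $(M_{\beta,k}+m^2)^{-1}$ and $(H_{\beta,m}+k^2)^{-1}$. For the point spectrum, the prefactor $\Gamma(\tfrac12+m-\tfrac{\beta}{2k})$ has simple poles exactly when $\tfrac12+m-\tfrac{\beta}{2k}=-n$ with $n\in\nn_0$; at these values the Wronskian vanishes, $I_{\frac{\beta}{2k},m}$ and $K_{\frac{\beta}{2k},m}$ become proportional, and they provide a genuine $L^2$-eigenfunction for $-m^2$ provided $\Re(m)>0$, so that square integrability at $-\infty$ is preserved. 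This gives precisely \eqref{specmorse}, and no further eigenvalues arise since the Wronskian is nonzero elsewhere.
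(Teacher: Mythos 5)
Your proposal follows essentially the same route as the paper's proof: uniqueness from the index-$0$ table, the Wronskian $\cW\big(I_{\frac{\beta}{2k},m}(2k\e^x),K_{\frac{\beta}{2k},m}(2k\e^x)\big)=1/\Gamma(\tfrac12+m-\tfrac{\beta}{2k})$, the candidate kernel via \eqref{eq:k}, and boundedness by splitting at $c=0$ with a Schur estimate on $R_{--}$ and a Hilbert--Schmidt estimate on $R_{++}$, exactly as in Lemmas \ref{lem:Morse} and \ref{lem:holomorphic-Rkm2}. Three small inaccuracies are worth correcting, none fatal. First, the \emph{2d} Whittaker function satisfies $I_{\beta,m}(r)\sim C r^{m}$ as $r\to0$ (the power $r^{\frac12+m}$ belongs to the 1d normalization $\cI_{\beta,m}$), so $I_{\frac{\beta}{2k},m}(2k\e^x)$ is square integrable near $-\infty$ precisely for $\Re(m)>0$, not $\Re(m)>-\tfrac12$; this is harmless since you only invoke $\Re(m)>0$. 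Second, the large-argument asymptotics carry the powers $r^{\mp\beta/(2k)}$, so the honest bound on the $++$ piece is $C\,\e^{-\frac{x+y}{2}}\,\e^{|x-y|\Re(\beta/(2k))}\,\e^{-\Re(k)|\e^x-\e^y|}$; the extra exponential in $|x-y|$ is absorbed by the doubly exponential decay (since $\e^{\max(x,y)}-\e^{\min(x,y)}\geq \e^{|x-y|}-1$ on $[0,\infty[^2$), so the Hilbert--Schmidt conclusion survives, but the inequality as you wrote it is false when $\Re(\beta/(2k))$ is large, and this is exactly the point where the Morse case differs from the $\beta=0$ case of Lemma \ref{lem:holomorphic-Rkm2}. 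Third, the reason $[0,\infty[\;\subset\sigma(M_{\beta,k})$ is the failure of square integrability at $-\infty$ (for $\Re(m)=0$, $m\neq0$, both solutions $I_{\frac{\beta}{2k},\pm m}(2k\e^x)\sim C\e^{\pm mx}$ oscillate without decay), not at $+\infty$, where $K_{\frac{\beta}{2k},m}(2k\e^x)$ decays like $\e^{-\Re(k)\e^x}$ for every $m$. Your identification of the point spectrum with the poles of the prefactor $\Gamma(\tfrac12+m-\tfrac{\beta}{2k})$, i.e.\ the zeros of the Wronskian, subject to $\Re(m)>0$, is correct and in fact more explicit than the paper's own proof, which leaves the spectral statements implicit.
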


\begin{proof}
	The uniqueness of a closed realization of $N_{k,m}$ for $\Re(k)>0$ follows from the table above. $I_{\frac{\beta}{2k},m}(2k\e^{x})$ is then square integrable near $-\infty$ and $K_{\frac{\beta}{2k},m}(2k\e^{x})$ is square integrable near $+\infty$. We compute the Wronskian:
	\begin{equation}
		\cW(I_{\frac{\beta}{2k},m}(2k\e^{x}), K_{\frac{\beta}{2k},m}(2k\e^{x}))
		= \frac{1}{\Gamma\big(\tfrac{1}{2}+m-\tfrac{\beta}{2k}\big)}.
	\end{equation}
	
	Now, by \eqref{eq:k}  the kernel on the rhs of \eqref{eq:resol-Morse} is a candidate of the resolvent of $M_{\beta,k}$. 
	The boundedness is obtained  with the help of  Lemma \ref{lem:Morse}, similarly as in the previous (sub)section. 
\end{proof}

\begin{lemma}\label{lem:Morse}
	Let $k,\beta\in\cc$ with $\Re(k)>0$ and fix $m$ with $\Re(m)>0$ and $-m^2$ outside of \eqref{specmorse}.
	Let $R_{\beta,k}(-m^2)$ be the operator with the kernel  on the right hand side of \eqref{eq:resol-Morse}. 
	Then $R_{\beta,k}(-m^2)$ is bounded and depends analytically on $\beta,k$.
\end{lemma}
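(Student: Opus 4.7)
The plan is to mirror the strategy of Lemma \ref{lem:holomorphic-Rkm2} for the exponential potential, since the structure of the Morse kernel is essentially that of the exponential-potential kernel twisted by Whittaker data instead of Bessel data. Write $R_{\beta,k}(-m^2;x,y)$ as the kernel on the right-hand side of \eqref{eq:resol-Morse}. I would first observe that for fixed $m$ with $\Re(m)>0$ and $-m^2$ outside \eqref{specmorse}, the $\Gamma$-factor $\Gamma(\tfrac{1}{2}+m-\tfrac{\beta}{2k})$ has no pole, the $2$d Whittaker functions $I_{\frac{\beta}{2k},m}$ and $K_{\frac{\beta}{2k},m}$ are holomorphic in $\kappa=\tfrac{\beta}{2k}$, and $z=2k\e^{x}$ is holomorphic in $k$. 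So analyticity in $(\beta,k)$ will reduce, as in Lemma \ref{lem:holomorphic-Rkm2}, to a uniform local boundedness statement plus a Morera-type argument using matrix elements with $f,g\in C_{\mathrm{c}}^{\infty}(\rr)$ and the density of $C_{\mathrm{c}}^{\infty}(\rr)$ in $L^{2}(\rr)$.

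For boundedness I would apply the splitting \eqref{split1}--\eqref{split2} with $c=0$, so that Lemma \ref{lemma-bound} reduces the problem to $R_{--}$ (both $x,y<0$) and $R_{++}$ (both $x,y>0$); the off-diagonal pieces $R_{+-}$, $R_{-+}$ are Hilbert--Schmidt by Lemma \ref{hilberschmidt}, because their kernel is a product of a function square-integrable near $-\infty$ and one square-integrable near $+\infty$. For $R_{++}$, I would invoke the large-argument asymptotics of the $2$d Whittaker functions (see Appendix \ref{sec:newWhittaker}),
\[
I_{\kappa,\mu}(z)\sim\frac{\e^{z}}{\sqrt{2\pi z}}z^{-\kappa},\qquad K_{\kappa,\mu}(z)\sim\sqrt{\tfrac{\pi}{2z}}\e^{-z}z^{\kappa},\qquad|z|\to\infty,
\]
so that for $x<y$ the factor $\e^{2k(\e^{x}-\e^{y})}$ gives (since $\Re(k)>0$) exponential decay off the diagonal, and the remaining power/prefactor gives an $\e^{-(x+y)/2}$-type envelope as in the proof of Lemma \ref{lem:holomorphic-Rkm2}. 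This makes $\|R_{++}\|_{2}<\infty$.

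For $R_{--}$ the relevant regime is $2k\e^{x}\to 0$, and I would use the small-argument expansions
\[
I_{\kappa,m}(z)\sim C_{1}\,z^{\frac{1}{2}+m},\qquad K_{\kappa,m}(z)\sim C_{2}\,z^{\frac{1}{2}-m}\qquad (\Re(m)>0),
\]
which give $|I_{\frac{\beta}{2k},m}(2k\e^{x})|\lesssim \e^{(\frac{1}{2}+\Re(m))x}$ and $|K_{\frac{\beta}{2k},m}(2k\e^{x})|\lesssim \e^{(\frac{1}{2}-\Re(m))x}$ for $x\to -\infty$. Inserting these into the kernel and using Schur's criterion (Lemma \ref{lem:schur}), the relevant one-variable integrals are finite exactly as in the $R_{--}$ estimate for $M_{k}$ in Lemma \ref{lem:holomorphic-Rkm2}; the constants are locally bounded in $(\beta,k)$.

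The main obstacle I anticipate is the small-argument analysis for $R_{--}$: one has to make sure that the specific $2$d Whittaker asymptotics from Appendix \ref{sec:newWhittaker} apply uniformly on compact subsets of the parameter set, avoiding the exceptional values of $m$ at which the leading term of $K_{\kappa,m}$ degenerates or where $I_{\kappa,\pm m}$ become linearly dependent. Since the hypothesis isolates us from the point spectrum and from the integer values of $2m$ that are harmless for $\Re(m)>0$, this is a technical but routine check, modelled verbatim on the corresponding step for the exponential potential; analyticity in $(\beta,k)$ then follows from the holomorphy of the kernel and a standard dominated-convergence/Morera argument applied to $(f\,|\,R_{\beta,k}(-m^{2})g)$ for $f,g\in C_{\mathrm{c}}^{\infty}(\rr)$.
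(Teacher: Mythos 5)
Your proposal is correct and follows essentially the same route as the paper, whose proof of this lemma consists precisely of invoking the method of Section \ref{Boundedness of resolvent}, splitting at $c=0$ as in \eqref{split1}--\eqref{split2}, and bounding $R_{--}$ and $R_{++}$ in parallel with Lemma \ref{lem:holomorphic-Rkm2}. The only (harmless) slips are normalization details: with the paper's conventions the large-argument asymptotics of $I_{\kappa,\mu}$, $K_{\kappa,\mu}$ carry $\e^{\pm z/2}$ rather than $\e^{\pm z}$, and the small-argument powers $z^{\frac12\pm m}$ you quote belong to the 1d functions $\cI$, $\cK$ (the 2d ones behave like $z^{\pm m}$), neither of which affects the convergence of the Schur and Hilbert--Schmidt estimates.
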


\begin{proof}
	The proof is very similar to that of Lemma \ref{lem:holomorphic-Rkm2}.
	We use the method of Section \ref{Boundedness of resolvent}. 
	We split the resolvent as 	in \eqref{split1} and \eqref{split2} with $c=0$. 
	Then we prove the boundedness of $R_{--}$ and $R_{++}$.
\end{proof}

\vspace{1em}
\begin{remark}
	Clearly, the family of Morse potentials extends the family of exponential potentials:
	\begin{equation}
		M_{0,k}=M_k ,
	\end{equation}
	i.e., if $\beta = 0$, then the Morse potential is the exponential potential covered in Section \ref{sec:exp}.
\end{remark}
\begin{remark}
	If $k=0$ and $\beta \in \cc \setminus \rr_+$, then after scaling, the Morse potential is the exponential potential covered in Section \ref{sec:exp}.
	This case is covered in Figure \ref{fig:exp_plot_c} with $\beta=-c$.
	If $k=0$ and $\beta>0$, then after scaling, Morse potentials is the negative exponential  covered in Section \ref{sec:negative-exp}.
	
	Let $A=\frac{xp+px}{2}$ where $p=-\i\partial_x$. 
	Let $U_\tau:=\e^{\i\tau A}$ be the {\em dilation operator}. It acts on functions as follows:
	\[
	(U_\tau f)(x) = \e^{\frac{\tau}{2}} f(\e^{\tau} x).
	\]
	Then with $U_\tau$, we have the following identity: 
	\begin{equation}
		U_{\ln 2}^{-1}\; M_{\beta,0}\; U_{\ln 2}
		= U_{\ln 2}^{-1} \Big( -\partial_x^2-\beta\e^x \Big) U_{\ln 2}
		= \frac14 \Big( -\partial_x^2 + (-4\beta) \e^{2x} \Big)
		= \frac14 M_{\sqrt{-4\beta}} \; .
	\end{equation}
\end{remark}

\begin{remark}
	We have not analyzed the case $\Re(k)=0$, except for $k=0$, which is discussed above. We leave it for future research.
\end{remark}

\subsection{Isotonic oscillator}\label{sec:isot}

The {\em isotonic harmonic oscillator} is formally defined by 
\begin{equation}\label{eq:isot}
	N_{k,m}:=-\partial_v^2+\big(m^2-\tfrac14\big)\frac{1}{v^2}+
	k^2 v^2. 
\end{equation} 
It appears in physics as the radial part of the radially symmetric harmonic oscillator in any dimension $>1$.
The name ``isotonic'' indicates that the frequencies in all directions are the same \cite{WeissmannJortner}.
We will interpret \eqref{eq:isot} as a closed operator on $L^2(\mathbb{R}_+)$.
Without restricting the generality, we can assume that $\Re(k)\geq0$.

Consider the change of variables $r = \frac{v^2}{2}$. We have the formal identity,
which connects the Whittaker operator with the isotonic oscillator:
\begin{equation} \label{eq:isot-guaging0}
	-\partial_r^2 
	+ \big(\tfrac{m^2}{4} - \tfrac{1}{4}\big) \frac{1}{r^2}
	-\frac{\beta}{r} 
	+ k^2
	=
	v^{-\frac{3}{2}}\Big(-\partial_v^2 + \big(m^2 - \tfrac{1}{4} \big)\frac{1}{v^2} +
	k^2 v^2 - 2\beta\Big)v^{-\frac{1}{2}}.
\end{equation} 
Therefore, eigenfunctions of the isotonic oscillator can be expressed in terms of the Whittaker functions. 
The details are described in Appendix \ref{Eigenequation of isotonic oscillator}, where the functions $\isoI$ and $\isoK$ are introduced. 
We will use them in the following table describing eigenvalues and corresponding eigenfunctions of \eqref{eq:isot}:
\[
\renewcommand{\arraystretch}{1.5}
\begin{array}{l|l|l}
	\hline
	\text{eigenvalue} &\text{parameters}& \text{eigenfunctions} \\
	\hline
	2\beta&\Re(k)\geq0,\quad  k\neq0 & \isoI_{\frac{\beta}{k},\pm m}(\sqrt{k} v),\quad \isoK_{\frac{\beta}{k},m}(\sqrt{k} v)\, \\
	-p^2 \text{ with } p \neq 0&k=0 & \cI_{\pm m}(pv), \quad \cK_m(pv) \\
	\phantom{-}0 &k=0,\quad m\neq0& v^{\frac{1}{2} \pm m}\\
	\phantom{-}0 &k=0,\quad m=0& v^{\frac12},\quad
	v^{\frac12}\ln v\\
	\hline
\end{array}
\]
After checking the square integrability of these functions near the endpoints, we see that the endpoints have the following indices:
\[
\renewcommand{\arraystretch}{1.5}
\begin{array}{r|l|r}
	\hline
	\text{endpoint} &\text{parameters} 
	& \text{index} \\
	\hline
	0 & |\Re(m)|<1            & 2 \\
	0 & |\Re(m)|\geq 1  & 0\\
	+\infty & &  0\\
	\hline
\end{array}
\]

\vspace{1em}
Let us describe the basic family of isotonic harmonic oscillators:
\begin{theoreme} \label{thm:isotonic} 
For $\Re(k)\geq0$ and $\Re(m)\geq1$  there exists a unique closed operator $N_{k,m}$ in the sense of $L^2(\rr_+)$ given on $C_\mathrm{c}^\infty(]0,\infty[)$ by \eqref{eq:isot}. 
It uniquely extends by analyticity in $m$ to $\Re(k)\geq0$ and $\Re(m)>-1$.
\begin{enumerate} 
	\item For $\Re(k)>0$, $\Re(m)>-1$ we have a holomorphic family with the spectrum
	\begin{alignat}{2}
		\sigma(N_{k,m})=\sigma_\mathrm{p}(N_{k,m}) &= \{ 2\beta=2k\big(2n+m+1\big)\;|\;
		n\in\nn_0\}.\quad &&
			\end{alignat}
	Outside of its spectrum, its resolvent is given by
	\begin{equation}
			(N_{k,m}-2\beta)^{-1}(u,v) 
			:=\;\tfrac{1}{2}\;\Gamma\Big(\tfrac12+\tfrac m2-\tfrac{\beta}{2k}\Big)
			\displaystyle
			\begin{cases}
				\isoI_{\frac{\beta}{k},m}(\sqrt{k} u)\;
				\isoK_{\frac{\beta}{k},m}(\sqrt{k} v)\,
				& \text{if }0<u<v,\\
				\isoI_{\frac{\beta}{k},m}(\sqrt{k} v)\;
				\isoK_{\frac{\beta}{k},m}(\sqrt{k} u)\,
				& \text{if }0<v<u.
			\end{cases}
			\label{eq:resol-isotonic}
	\end{equation}
		
	\item  Let $\Re(k)=0$, $k\neq0$ with $k=\ii\ell$, $\ell>0$. Then, we have $\sigma(N_{ k ,m})=\mathbb{R}$,  $\sigma_\mathrm{p}(N_{ k ,m})=\emptyset$ and, for $\pm\Im2\beta>0$
	\begin{equation}
		(N_{k,m}-2\beta)^{-1}(u,v) 
		:=\;\tfrac{1}{2}\;\Gamma\Big(\tfrac12+\tfrac m2-\tfrac{\beta}{\mp2 k }\Big)
		\displaystyle
		\begin{cases}
			\isoI_{\frac{\beta}{\mp k },m}(\sqrt{\mp k } u)\;
			\isoK_{\frac{\beta}{\mp k },m}(\sqrt{\mp k } v)\,
			& \text{if }0<u<v,\\
			\isoI_{\frac{\beta}{\mp k },m}(\sqrt{\mp k } v)\;
			\isoK_{\frac{\beta}{\mp k },m}(\sqrt{\mp k } u)\,
			& \text{if }0<v<u.
		\end{cases}
		\label{eq:resol-isotonic-}
	\end{equation}
		For $\Re(k)=0$, $k\neq0$ with $k=-\ii\ell$, $\ell>0$, we set  $N_{k,m}=N_{- k ,m}$.
	\item  The case $k=0$ coincides with the   Bessel operators: $N_{0,m} = H_m$. 
\end{enumerate}
\end{theoreme}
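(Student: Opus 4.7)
The plan is to follow the five-step strategy laid out in Section \ref{sec:BasicTheory-1D-SchOp}, exploiting the gauge identity \eqref{eq:isot-guaging0} to translate everything from Theorem \ref{thm:whittaker} (Whittaker operator) into the isotonic setting, while being careful about the change of variable $r=v^{2}/2$ and the two different regimes $\Re(k)>0$ and $\Re(k)=0$.

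First, I would define the basic family. The eigenfunction and endpoint tables above show that for $\Re(m)\ge 1$ neither endpoint requires a boundary condition, so $N_{k,m}$ is determined uniquely on $C_\mathrm{c}^\infty(]0,\infty[)$. For $-1<\Re(m)<1$, the endpoint $0$ has index $2$, and I would fix the b.c.\ at $0$ by the function $\isoI_{\beta/k,m}(\sqrt k v)$ (equivalently, by $v^{1/2+m}$ up to a subdominant correction, exactly as in \cite{DR} for the Whittaker operator), which singles out the ``pure'' b.c.\ and makes the family holomorphic in $m$ across the line $\Re(m)=0$.

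For part (1), with $\Re(k)>0$, I would proceed as in Theorems \ref{besseloperator}, \ref{thm:whittaker}, \ref{thm:morse}. One verifies, using the asymptotics of the isotonic Whittaker functions recalled in Appendix \ref{Eigenequation of isotonic oscillator}, that $\isoI_{\beta/k,m}(\sqrt k v)$ matches the b.c.\ at $0$ and that $\isoK_{\beta/k,m}(\sqrt k v)$ decays like $\e^{-k v^{2}/2}v^{\beta/k-1/2}$ near $+\infty$, hence is square integrable. A direct computation of the Wronskian (which reduces, via \eqref{eq:isot-guaging0}, to the one already obtained in the proof of Theorem \ref{thm:whittaker}) gives
\begin{equation}
\cW\bigl(\isoI_{\beta/k,m}(\sqrt k \cdot),\isoK_{\beta/k,m}(\sqrt k \cdot)\bigr)=\frac{2}{\Gamma\bigl(\tfrac12+\tfrac m2-\tfrac{\beta}{2k}\bigr)}.
\end{equation}
Then \eqref{eq:k} yields \eqref{eq:resol-isotonic} as a candidate. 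Boundedness and analyticity in $(\beta,k,m)$ follow by the same Schur/Hilbert--Schmidt splitting used in Lemmas \ref{lem:holomorphic-Rkm2} and \ref{lem:Morse}, with $c=1$ separating the $v\to 0$ region (controlled by the $v^{1/2\pm m}$ behaviour of $\isoI,\isoK$) from the $v\to\infty$ region (controlled by the Gaussian decay of $\isoK$ and the Gaussian growth of $\isoI$ which cancel in the kernel). The spectrum is purely discrete because $\isoK_{\beta/k,m}$ is square integrable at $\infty$ for \emph{every} $\beta$, so a $z$ belongs to the spectrum iff $\isoI_{\beta/k,m}$ and $\isoK_{\beta/k,m}$ are proportional, which occurs exactly at the poles of the $\Gamma$-factor, giving the quantization $\tfrac12+\tfrac m2-\tfrac{\beta}{2k}=-n$, i.e.\ $2\beta=2k(2n+m+1)$.

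For part (2), with $k=\ii\ell$, $\ell>0$, the function $\isoK_{\beta/k,m}(\sqrt k v)$ no longer decays but oscillates, so $+\infty$ becomes a ``radiating'' endpoint. I would follow the limiting-absorption strategy already employed in Theorem \ref{thm:exponential-conv} for the negative exponential potential: approach $k=\ii\ell$ from $\Re(k)>0$ via $k=\pm\ii\ell+\varepsilon$ and identify the two boundary values of the resolvent with the formulas \eqref{eq:resol-isotonic-}, corresponding to the sign of $\Im(2\beta)$. The choice of $\mp k$ inside $\isoI,\isoK$ simply selects the branch of $\sqrt k$ for which $\isoK$ decays in the appropriate complex direction. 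The absence of point spectrum and $\sigma(N_{k,m})=\rr$ are then consequences of the fact that neither $\isoI$ nor $\isoK$ is square integrable at $+\infty$ on the real axis. Part (3), the case $k=0$, reduces immediately to Theorem \ref{besseloperator} since then \eqref{eq:isot} is literally \eqref{fact++}.

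The main obstacle I anticipate is part (2): verifying that the two candidate kernels \eqref{eq:resol-isotonic-} really are the resolvents of $N_{\ii\ell,m}$ as defined by the b.c.\ from part (1), and that they arise as weak limits from $\Re(k)>0$. This requires tracking the branches of $\sqrt k$ carefully and controlling the oscillatory asymptotics of $\isoI,\isoK$ uniformly in $\varepsilon\searrow 0$, analogously to (but more delicate than) the Bessel/Hankel analysis in Theorem \ref{thm:exponential-conv}. Everything else is a mechanical transcription, via \eqref{eq:isot-guaging0}, of arguments already carried out for the Whittaker operator.
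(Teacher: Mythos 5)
Your treatment of parts (1) and (3) matches the paper's proof in all essentials: the b.c.\ at $0$ is set by $v^{\frac12+m}$ (the paper uses this directly for all $\Re(m)>-1$; the ``subdominant correction'' you hedge about is needed only for the Whittaker operator, since here the error in \eqref{poro} is $O(v^2)$ and the Wronskian with $v^{\frac12+m}$ already vanishes), the Wronskian equals $2/\Gamma(\tfrac12+\tfrac m2-\tfrac{\beta}{2k})$, the candidate kernel comes from \eqref{eq:k}, boundedness follows from the splitting of Section \ref{Boundedness of resolvent}, and the spectrum is read off from the poles of the $\Gamma$-factor via \eqref{connec1}.

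Part (2) is where you diverge, and your route rests on a misconception about the endpoint structure. You treat $+\infty$ as a ``radiating'' endpoint requiring a limiting-absorption analysis analogous to Theorem \ref{thm:exponential-conv}. But for the isotonic oscillator the endpoint $+\infty$ has index $0$ for \emph{all} $\Re(k)\geq0$ (see the endpoint table in Section \ref{sec:isot}), unlike the exponential potential, where $\Re(k)=0$ forces index $2$ and a genuine choice of b.c. The point is that $\isoK_{\frac{\beta}{\mp k},m}(\sqrt{\mp k}\,v)$ is \emph{not} merely oscillatory: although $|\e^{\mp kv^2/2}|=1$ on the imaginary axis, the prefactor gives $|\isoK_{\frac{\beta}{\mp k},m}(\sqrt{\mp k}\,v)|\leq Cv^{-\frac12+\Re\frac{\beta}{\mp k}}$ with $\Re\big(\tfrac{\beta}{\mp k}\big)=\mp\tfrac{\Im\beta}{\ell}<0$ exactly when $\pm\Im(2\beta)>0$. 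So for each spectral parameter off $\rr$ precisely one solution is square integrable at $+\infty$, the operator $N_{\ii\ell,m}$ is already uniquely determined by the b.c.\ at $0$ alone, and \eqref{eq:k} applies directly; the two formulas in \eqref{eq:resol-isotonic-} are the resolvent of the \emph{same} operator in the two half-planes $\pm\Im(2\beta)>0$, not boundary values selecting different realizations (your parenthetical that ``neither $\isoI$ nor $\isoK$ is square integrable at $+\infty$'' is true only for $2\beta\in\rr$ and contradicts your own ``purely oscillatory'' picture off the real axis). This is the paper's argument, and it also supplies the one analytic step your proposal leaves genuinely open: the boundedness of the candidate kernel at $\Re(k)=0$ does not follow from the $\Re(k)>0$ estimates, since the Gaussian decay of $R^{++}$ is lost. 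The paper proves it separately (Lemma \ref{lemma}, Case 2) via $|R^{++}_{k,m}(-2\beta;u,v)|\leq Cu^{-\frac12+b}v^{-\frac12-b}$ for $u<v$ with $b=-\Re\tfrac{\beta}{k}>0$, closed by the substitution $u=\e^t$, $v=\e^s$ and a Schur-type bound on $\e^{-b|t-s|}$. A limiting-absorption argument could be forced through, but it would still require this uniform bound, so it buys nothing here.
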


\begin{proof}
For $\Re(k)\geq0$ and $\Re(m)>-1$, we define $N_{k,m}$ by setting the b.c.\ at zero with $v^{\frac12+m}$. 
We check that
\begin{equation}
	\cW\big(v^{\frac12+m},\isoI_{\frac{\beta}{k},m}(\sqrt{k} v)\big)=0
\end{equation}
and $\isoI_{\frac{\beta}{k},m}(\sqrt{k} v)$ is square integrable near $0$.

Let us now consider Cases 1,2,3 separately.

For $\Re(k)>0$, $\isoK_{\frac{\beta}{k},m}(\sqrt{k} v)$ is square integrable near $+\infty$.
We check that
\begin{equation}
	\cW\big(\isoI_{\frac{\beta}{k},m}(\sqrt{k} v), \isoK_{\frac{\beta}{k},m}(\sqrt{k} v)\big)
	= \frac{2}{\Gamma\big(\frac12+\frac m2-\frac{\beta}{2k}\big)}.
\end{equation}
Now, \eqref{eq:k} yields the kernel on the rhs of \eqref{eq:resol-isotonic} as a candidate of the resolvent of $N_{k,m}$.
In Lemma \ref{lemma}.1., we check that it is bounded and depends analytically on parameters $k,m$. This proves Case 1.

For $k=\ii \ell$, $\ell>0$, and $\pm\Im(2\beta)>0$, we have 
\begin{equation}
	\Re\Big(\frac\beta{\mp k }\Big)=\mp\frac{\Im\beta}{\ell}<0,
\end{equation} 
thus, $\isoK_{\frac{\beta}{\mp k }}(\sqrt{\mp k } v)$, which can be estimated by $Cv^{-\frac12+\Re\frac{\beta}{\mp k}}$, is square integrable near $+\infty$. 
Therefore, using \eqref{eq:k}  the rhs of \eqref{eq:resol-isotonic-} is a candidate of the kernel of the resolvent of  $N_{ k ,m}$.
We check that $\pm\Im\beta<0$ and $\Re(m)>-1$, $\ell>0$ implies that there are no solutions of
\begin{equation}
	2\beta=2(\mp\ii\ell)(2n+m+1).
\end{equation}
Therefore, besides $\rr$ there is no spectrum of $N_{ k ,m}$.
In Lemma \ref{lemma}.2, we check that \eqref{eq:resol-isotonic-} is bounded and depends analytically on $m$. This proves Case 2.

Finally, Case 3 was treated before.
\end{proof}

\vspace{1em}

\begin{lemma} 
For $k\neq0$, let $R_{k,m}(2\beta)$ denote the operator with the integral kernel \eqref{eq:resol-isotonic}. Let  $\Re(m)>-1$.
\begin{enumerate}
	\item If $\Re(k)>0$ and $2\beta$ is outside of the spectrum, then $R_{k,m}(2\beta)$ is bounded and depends analytically on $k,m$.
	\item If $k=\mp\ii \ell$, $\ell>0$, and $\pm\Im(2\beta)>0$, then $R_{k,m}(2\beta)$ is bounded and depends analytically on $m$.
\end{enumerate}\label{lemma}
\end{lemma}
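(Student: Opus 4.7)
The plan is to follow the strategy of Section \ref{Boundedness of resolvent}, just as in the proofs of Lemmas \ref{lem:holomorphic-Rkm2} and \ref{lem:Morse}. Fix a cutoff $c>0$ and split $R_{k,m}(2\beta)=R_{--}+R_{-+}+R_{+-}+R_{++}$ as in \eqref{split1}--\eqref{split2}. The off-diagonal pieces $R_{-+}$ and $R_{+-}$ are immediately Hilbert--Schmidt, since their squared Hilbert--Schmidt norms factor as $\int_0^c|\isoI_{\beta/k,m}(\sqrt{k}u)|^2\,\d u\cdot\int_c^\infty|\isoK_{\beta/k,m}(\sqrt{k}v)|^2\,\d v$ (and analogously with $\mp k$ in Case 2), both of which are finite by the square integrability already established in the proof of Theorem \ref{thm:isotonic}. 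So the substantive work is controlling $R_{--}$ near $0$ and $R_{++}$ near $+\infty$, together with verifying analyticity.

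For $R_{--}$, the argument is the same in both cases. As $v\to 0$, the relationship \eqref{eq:isot-guaging0} forces $\isoI_{\cdot,m}(\sqrt{k}v)\sim v^{\frac12+m}$ and $\isoK_{\cdot,m}(\sqrt{k}v)\sim v^{\frac12-m}$ (with a logarithmic correction at $m=0$), reducing the local behavior to that of the Bessel kernel \eqref{reso1}. A direct Schur estimate of the form
\begin{equation}
|R_{--}(u,v)|\leq C_{k,m,\beta}\Big(u^{\frac12-\Re m}v^{\frac12+\Re m}\mathds{1}_{v<u<c}+u^{\frac12+\Re m}v^{\frac12-\Re m}\mathds{1}_{u<v<c}\Big),
\end{equation}
together with the condition $\Re(m)>-1$, gives boundedness of $R_{--}$, exactly as in the Bessel case. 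The constant $C_{k,m,\beta}$ depends locally uniformly on parameters, which will yield analyticity.

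For $R_{++}$ in Case 1 ($\Re(k)>0$), the function $\isoK_{\beta/k,m}(\sqrt{k}v)$ inherits from the parabolic cylinder asymptotics a Gaussian decay $\e^{-\frac{k v^2}{2}}v^{m-\frac12+\frac{\beta}{k}}$, and $\isoI$ grows only with $\e^{+\frac{kv^2}2}$ times a power. Hence $|R_{++}(u,v)|$ is dominated by $C\e^{-\frac{\Re(k)}{2}|u^2-v^2|}$ times polynomials in $u,v$, which is plainly Hilbert--Schmidt and locally uniformly bounded in $(k,m)$ on the domain under consideration. Analyticity in $(k,m)$ then follows from analyticity of the integral kernel pointwise in $(u,v)$, local boundedness of the operator norm, and density of $C_\c^\infty(\rr_+)$.

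The main obstacle is Case 2, where $k=\mp\ii\ell$ and there is no Gaussian decay at $+\infty$. Here one must use the fact, already observed in the proof of Theorem \ref{thm:isotonic}, that the hypothesis $\pm\Im(2\beta)>0$ is exactly what guarantees $\Re\bigl(\frac{\beta}{\mp k}\bigr)=\mp\frac{\Im\beta}{\ell}<0$. Combined with the asymptotic expansion of the Whittaker $K$-function at infinity this translates into $|\isoK_{\beta/(\mp k),m}(\sqrt{\mp k}v)|\leq C v^{-\frac12+\Re(\beta/(\mp k))}$ and $|\isoI_{\beta/(\mp k),m}(\sqrt{\mp k}u)|\leq C u^{-\frac12-\Re(\beta/(\mp k))}$ times oscillating factors. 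With $\alpha:=\tfrac12-\Re(\beta/(\mp k))>\tfrac12$, the kernel $R_{++}$ is dominated by $C(uv)^{-1/2}(u/v)^{\pm\Re(\beta/(\mp k))}$ on $\{c<u<v\}$ and symmetrically on $\{c<v<u\}$, and a Schur estimate with weights $u^{1/2}$, $v^{1/2}$ (exploiting the strict inequality $\alpha>\tfrac12$) yields boundedness. Analyticity in $m$ on the prescribed domain again follows from pointwise analyticity of the kernel and local uniform boundedness. Since $k=\mp\ii\ell$ is fixed in this case, analyticity is only asserted in $m$, which matches the statement of the lemma.
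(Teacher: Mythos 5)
Your overall strategy coincides with the paper's: the same splitting via Lemma \ref{lemma-bound}, the same power-law bounds near $0$, the Gaussian decay of $\isoK$ for $\Re(k)>0$, and the observation that $\pm\Im(2\beta)>0$ forces $\Re\big(\tfrac{\beta}{\mp k}\big)<0$, which supplies the integrable power-law decay of the kernel at $+\infty$ in Case 2 (your ``weighted Schur with weight $u^{-1/2}$'' is literally the paper's substitution $u=\e^t$, $v=\e^s$ followed by Young's inequality for the convolution kernel $\e^{-b|t-s|}$). However, two of your intermediate steps do not close as stated. First, for $R_{--}$ you invoke ``a direct Schur estimate'' on the bound $u^{\frac12+\Re m}v^{\frac12-\Re m}\mathds{1}_{u<v<c}+(u\leftrightarrow v)$. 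The unweighted Schur test (Lemma \ref{lem:schur}) fails on part of the claimed range: for $-1<\Re(m)<-\tfrac12$ one has $\sup_u\int_u^c u^{\frac12+\Re m}v^{\frac12-\Re m}\,\d v=\infty$, since $u^{\frac12+\Re m}\to\infty$ as $u\to0$ while $\int_u^c v^{\frac12-\Re m}\d v$ stays bounded away from $0$. The paper instead uses the Hilbert--Schmidt norm (Lemma \ref{hilberschmidt}), which does work for all $\Re(m)>-1$ because $\int_0^v u^{1+2\Re m}\d u$ converges exactly when $\Re(m)>-1$ and the remaining $v$-integral is harmless; you should replace the Schur test by this (or by a weighted Schur test).

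Second, in Case 1 your claim that ``$\e^{-\frac{\Re(k)}{2}|u^2-v^2|}$ times polynomials in $u,v$ is plainly Hilbert--Schmidt'' is false as a general statement: for fixed $u$ one only gains $\int\e^{-a|u^2-v^2|}\d v\sim C/u$, so e.g.\ the kernel $uv\,\e^{-a|u^2-v^2|}$ is not Hilbert--Schmidt on $[c,\infty[^2$. The argument is rescued only because the actual prefactor $u^{-\frac12-\Re\frac{\beta}{k}}v^{-\frac12+\Re\frac{\beta}{k}}$ has total degree $-1$; you need to say this (or, as the paper does in \eqref{esto1}, absorb the powers into the exponential, bound the exponent from below by $\epsilon|u-v|$ outside a compact set, and conclude by Young's inequality or Schur's criterion). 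Both issues are repairable without changing your plan, but as written these two steps would not survive scrutiny on the full parameter range.
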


\begin{proof}
We use Lemma \ref{lemma-bound} with, say, $c=\frac{1}{\sqrt{|k|}}$, to split $R_{k,m}$.
We have the estimates
\begin{align}\label{esti2}
	|R_{k,m}^{--}(-2\beta;u,v)|		
	& \quad\leq
	\begin{cases} 
		Cu^{\frac12+\Re(m)}v^{\frac12-\Re(m)} & \text{if }0<u<v<c,\\
		Cv^{\frac12+\Re(m)}u^{\frac12-\Re(m)} & \text{if }0<v<u<c,
	\end{cases}\quad m\neq0;\\
	\label{esti3}
	|R_{k,0}^{--}(-2\beta;u,v)|		
	& \quad\leq
	\begin{cases}
		Cu^{\frac12}v^{\frac12}|\ln(v)| & \text{if }0<u<v<c,\\
		Cv^{\frac12}u^{\frac12}|\ln(u)| & \text{if }0<v<u<c,
	\end{cases}\\
	\label{esti1}
	|R_{k,m}^{++}(-2\beta;u,v)|		
	& \quad\leq
	\begin{cases}
		Cu^{-\frac12-\Re\frac\beta{k}}v^{-\frac12+\Re\frac\beta{k}}\e^{(u^2-v^2)\Re(k)} & \text{if }c<u<v,\\
		Cv^{-\frac12-\Re\frac\beta{k}}u^{-\frac12+\Re\frac\beta{k}}\e^{(v^2-u^2)\Re(k)} & \text{if }c<v<u.
	\end{cases}
\end{align}
Now, we use \eqref{esti2}, \eqref{esti3} and the Hilbert-Schmidt estimate to prove the boundedness of $R_{k,m}^{--}(2\beta)$, both in Case 1 and 2.
Then, we use \eqref{esti1} to prove the boundedness of $R_{k,m}^{++}(2\beta)$. 
We treat separately for Case 1 and Case 2.

Let $\Re(k)>0$. Set $b:=-\Re(\tfrac\beta{k})$. Using by  \eqref{esti1} we see that that
\begin{equation}\label{esto1}
	R_{k,m}^{++}(-2\beta;u,v)
	\leq C_1\e^{-|u-v|(\frac{(\ln u-\ln v)}{(u-v)}b+(u+v)\Re(k))}.
\end{equation}
The function 
\[
[c, \infty[ \times [c, \infty[ \;\ni (u, v) \mapsto \frac{\ln u - \ln v}{u - v}b + (u + v)\Re(k)
\]
is continuous and goes to \( +\infty \) as \( u \to +\infty \) or \( v \to +\infty \). 
Therefore, away from a bounded set, it can be bounded from below by some \( \epsilon > 0 \). 
Hence, \eqref{esto1} can be estimated from above by \( \leq C_1 \e^{-|u - v| \epsilon} \). 
Now \eqref{esto1} is the kernel of a bounded operator by Young's inequality or Schur's criterion. 
This proves Case 1.

Now assume \( \Re(k) = 0 \). Let \( b := -\Re\left(\frac{\beta}{k}\right) \), as before. We now have \( b > 0 \).
We have
\begin{align}
	|R_{k,m}^{++}(-2\beta;u,v)|		
	& \quad\leq
	\begin{cases}
		Cu^{-\frac12+b}v^{-\frac12-b} & \text{if }c<u<v,\\
		Cv^{-\frac12+b}u^{-\frac12-b} & \text{if }c<v<u.
	\end{cases}
\end{align}

Without loss of generality, one may let $c=0$.	
Making substitutions $u=\e^t$, $v=\e^s$, we obtain
\begin{align}
	(f|R_{k,m}^{++}(-2\beta;u,v)g)
	&=\int_0^\infty \int_u^\infty|f(u)||g(v)|u^{-\frac12+b}v^{-\frac12-b}+(u\leftrightarrow v) \,\d v \d u \\
	\notag
	&=\int_{-\infty}^{\infty} \int_{t}^{\infty} |f(\e^t)||g(\e^s)|\e^{\frac{t}{2}+\frac{s}{2}-b|t-s|}+(s \leftrightarrow t)\, \d s \d t \\
	\notag
	&\leq\frac{2}{b}\Big(\int |f(\e^t)|^2\e^t\,\d t\Big)^{\frac12} \Big(\int |g(\e^s)|^2\e^s\,\d s\Big)^{\frac12}\\
	&=\frac2b \Big(\int |f(u)|^2\,\d u\Big)^{\frac12} \Big(\int |g(v)|^2\,\d v\Big)^{\frac12}.
\end{align}
This shows Case 2.
\end{proof}

\vspace{2em}

\begin{remark} 
The operators $N_{k,\mp\frac12}$ are the harmonic oscillators on $\mathbb{R}_+$ with the Neumann and Dirichlet conditions, respectively.
We can denote them as follows:
\begin{equation} \label{remarkdn}
	N_{k,-\frac12}=:N_k^\mathrm{N},
	\quad
	N_{k,\frac12}=:N_k^\mathrm{D}.
\end{equation}
Here are their spectra:
\begin{align}\label{eq:NDNN}
	\sigma(N_{k}^\mathrm{N}) &= \{k\big(4n+1\big)\;|\; n\in\nn_0\},\\
	\sigma(N_{k}^\mathrm{D}) &= \{k\big(4n+3\big)\;|\; n\in\nn_0\}.
\end{align}
In the next section we will consider them again.\label{remarkdn1}
\end{remark}

\begin{remark}For $-1<\Re(m)<1$, we can also consider mixed 
b.c.. They can be analyzed similarly as for Bessel operators.
\end{remark}

\section{Harmonic Oscillator}\label{sec:harmonic}

This section is devoted to the \textit{harmonic oscillator}, formally defined on $L^2(\rr)$ by
\begin{equation}\label{eq:harmonic}
	N_k := -\partial_v^2 + k^2 v^2.
\end{equation}
We will consider its closed realizations on $L^2(\rr)$.
Without loss of generality, we assume that $\Re(k) \geq 0$.

For real $k$, the self-adjoint realization of $N_k$ is one of the best-known operators in Quantum
Mechanics. For complex $k$, its closed realization is an interesting example of an
operator with sometimes surprising properties, and has been studied, e.g., in \cite{Davis,Pravda-Starov}.

To describe the eigenfunctions of \eqref{eq:isot}, we will use Weber functions, defined in Appendix \ref{sec:weber}.
\[
\renewcommand{\arraystretch}{1.5}
\begin{array}{l|l|l}
	\hline
	\text{eigenvalue} &\text{parameters}& \text{eigenfunctions} \\
	\hline
	2\beta &\Re(k)\geq0,\quad k\neq0 &
	\isoI_{\frac{\beta}{k},\pm}(\sqrt{k} v), \quad\isoK_{\frac{\beta}{k}}(\pm\sqrt{k} v) \\
	-p^2\text{ with }p\neq0& k = 0  & \e^{pv},\quad \e^{-pv}\\
	\phantom{-}0 & k = 0  & v,\quad 1\\
	\hline
\end{array}
\]
The endpoints have always index $0$:
\[
\renewcommand{\arraystretch}{1.5}
\begin{array}{r|l|r}
	\hline
	\text{endpoint} &\text{parameters}
	& \text{index} \\
	\hline
	-\infty & \Re(k)\geq0 & 0 \\
	+\infty & \Re(k)\geq0  & 0\\
	\hline
\end{array}
\]

\vspace{1em}
For the harmonic oscillator, we  have a unique closed realization for all parameters:
\begin{theoreme}\label{thm:harmonic}
Let $\Re(k)\geq0$. There exist a unique closed operator in the sense of $L^2(\rr),$ which on
$C_\mathrm{c}^\infty(\mathbb{R)}$ coincides with \eqref{eq:harmonic}.
It will be denoted  $N_k$.
\begin{enumerate}
	\item $\{\Re(k)>0\}\ni k \mapsto N_k$ is a holomorphic family of closed operators with  the following spectrum:
	\begin{align}
		\sigma(N_k ) =		\sigma_\mathrm{p}(N_k)&=\{k(2n+1)\ |\ n\in\mathbb{N}_0\}.
	\end{align}
	For  $2\beta$ away of its spectrum, its resolvent is given by
	\begin{equation}\label{eq:resol-harmonic1}
	(N_k - 2\beta)^{-1}(u,v) = 
		\tfrac{1}{2\sqrt{2\pi}}\,2^{\frac{\beta}{k}}\,\Gamma(\tfrac{1}{2}-\tfrac{\beta}{k})
		\displaystyle
		\begin{cases}
			\isoK_{\frac{\beta}{k}}(-\sqrt{k} u)\;
			\isoK_{\frac{\beta}{k}}(\sqrt{k} v)\,
			& \text{if } u<v,\\
			\isoK_{\frac{\beta}{k}}(-\sqrt{k} v)\;
			\isoK_{\frac{\beta}{k}}(\sqrt{k} u)\,
			& \text{if } v<u.
		\end{cases}
	\end{equation}
	
	\item Let $\Re(k)=0$, $k\neq0$ with $k=\ii\ell$, $\ell>0$. Then, we have $\sigma(N_{ k })=\mathbb{R}$,  $\sigma_\mathrm{p}(N_0)=\emptyset$  and, for $\pm\Im2\beta>0$,
	\begin{equation}\label{eq:resol-harmonic2}
		(N_{ k } - 2\beta)^{-1}(u,v) = 
		\tfrac{1}{2\sqrt{2\pi}}\,2^{\frac{\beta}{\mp k }}\,\Gamma(\tfrac{1}{2}-\tfrac{\beta}{\mp\i\ell})
		\displaystyle
		\begin{cases}
			\isoK_{\frac{\beta}{\mp k }}(-\sqrt{\mp k } u)\;
			\isoK_{\frac{\beta}{\mp k }}(\sqrt{\mp k } v)\,
			& \text{if } u<v,\\
			\isoK_{\frac{\beta}{\mp k }}(-\sqrt{\mp k } v)\;
			\isoK_{\frac{\beta}{\mp k }}(\sqrt{\mp k } u)\,
			& \text{if } v<u.
		\end{cases}
	\end{equation}
		For $\Re(k)=0$, $k\neq0$ with $k=-\ii\ell$, $\ell>0$, we set  $N_{ k }=N_{- k }$.
              \item Finally, $\sigma( N_0)=[0,\infty[\,,$  $\sigma_\mathrm{p}(N_0)=\emptyset$ and, for $\Re(p)>0$,
	\begin{equation}
		(N_0+p^2)^{-1}(u,v)=\frac{\e^{-p|u-v|}}{2p}.
	\end{equation}
\end{enumerate}
\end{theoreme}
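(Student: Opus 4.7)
The plan is to follow closely the strategy used for the isotonic oscillator (Theorem \ref{thm:isotonic}), exploiting the table immediately above the statement: at both $\pm\infty$ the index is $0$, regardless of $k$, so there is no boundary condition to fix and uniqueness of the closed realization is automatic. This reduces each case to constructing the resolvent via the general formula \eqref{eq:k} and verifying its boundedness.

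For Case 1 ($\Re(k)>0$), I would pick as distinguished solutions
\[
\Psi_a(v)=\isoK_{\frac{\beta}{k}}(-\sqrt{k}\,v),\qquad \Psi_b(v)=\isoK_{\frac{\beta}{k}}(\sqrt{k}\,v),
\]
which, by the $\cc\setminus\rr_-$ asymptotics of the Weber $\isoK$-function recalled in Appendix \ref{sec:weber}, decay like $\e^{-\frac{\sqrt{k}}{2}v^2}$ at $+\infty$ and $\e^{-\frac{\sqrt{k}}{2}v^2}$ at $-\infty$ respectively (since $\Re(\sqrt{k})>0$). A standard Wronskian computation (using the known connection formulas for Weber functions) gives the prefactor $\tfrac{1}{2\sqrt{2\pi}}\,2^{\beta/k}\,\Gamma(\tfrac12-\tfrac\beta k)$, matching \eqref{eq:resol-harmonic1}. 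Plugging this into \eqref{eq:k} produces the candidate kernel, and boundedness is checked by the splitting \eqref{split1}-\eqref{split2} with $c=0$: on the outer pieces $R_{++}$ and $R_{--}$, the Gaussian factor dominates any polynomial, so a Hilbert--Schmidt estimate works; $R_{+-}$ and $R_{-+}$ are handled by Lemma \ref{lemma-bound}. The point spectrum appears exactly when the Wronskian vanishes, i.e.\ at the poles of $\Gamma(\tfrac12-\tfrac\beta k)$, giving $2\beta=k(2n+1)$, $n\in\nn_0$; at these points the two $\isoK$ functions become proportional and produce an $L^2$-eigenfunction (a Hermite-type function). Holomorphy in $k$ on $\{\Re(k)>0\}$ follows from the joint holomorphy of $\isoK_\lambda(z)$ in $(\lambda,z)$ and a locally uniform version of the above bounds.

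For Case 2 ($k=\ii\ell$, $\ell>0$) the key observation is that $\mp\ii\ell$ lies on the negative/positive imaginary axis, so one takes the branch $\sqrt{\mp k}$ with $\Re(\sqrt{\mp k})>0$; then the asymptotic analysis of Case 1 goes through verbatim once the condition $\pm\Im(2\beta)>0$ is imposed, because this forces $\Re(\tfrac\beta{\mp k})$ to have the sign ensuring that $\isoK_{\frac{\beta}{\mp k}}(\pm\sqrt{\mp k}\,v)$ is $L^2$ at the respective endpoint (the Gaussian in the asymptotics no longer decays, so one must rely on the polynomial factor $v^{-\frac12+\Re(\beta/(\mp k))}$, exactly as in the argument of Lemma \ref{lemma}.2). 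The absence of eigenvalues is checked directly: $2\beta=\mp\ii\ell(2n+1)$ with $n\in\nn_0$ would force $\Im(2\beta)$ to have the wrong sign. To identify the spectrum with $\rr$, I would combine the resolvent bound (which shows $\cc\setminus\rr\subset\cc\setminus\sigma(N_{\ii\ell})$) with self-adjointness (since $V=-\ell^2 v^2$ is real and $C_\mathrm{c}^\infty(\rr)$ is a core by the index-0 property) together with non-emptiness of the essential spectrum at every real point, which can be seen by a standard Weyl sequence $v\mapsto \chi(v-n)\e^{\ii\xi v}$ translated to infinity where the potential $-\ell^2 v^2$ dominates. Case 3 is classical: $N_0=-\partial_v^2$ is essentially self-adjoint on $C_\mathrm{c}^\infty(\rr)$ with the stated resolvent and spectrum.

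The main obstacle is Case 2: one must be careful about which branch of $\sqrt{\mp k}$ is used and verify that the $\isoK$ function genuinely lies in $L^2$ near the correct endpoint despite the lack of Gaussian decay, while simultaneously ruling out eigenvalues. Everything else is either a direct translation of the isotonic oscillator proof or a textbook computation.
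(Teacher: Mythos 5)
Your proposal follows essentially the same route as the paper: uniqueness from the index-$0$ table, the same pair of solutions $\isoK_{\beta/k}(\mp\sqrt{k}\,v)$, the Wronskian via the Weber connection formulas (the paper uses \eqref{conne}, \eqref{wron} and Legendre duplication to get the stated prefactor), the candidate kernel from \eqref{eq:k}, and boundedness by the same splitting argument already carried out for the isotonic oscillator; your Case 2 treatment via the polynomial factor $v^{-\frac12+\Re(\beta/(\mp k))}$ is exactly the paper's. Only a cosmetic slip: the Gaussian factor in the asymptotics of $\isoK_{\beta/k}(\sqrt{k}\,v)$ is $\e^{-\frac{k}{2}v^2}$ (governed by $\Re(k)>0$), not $\e^{-\frac{\sqrt{k}}{2}v^2}$, and it is $R_{--},R_{++}$ (not the cross terms) that Lemma \ref{lemma-bound} reduces the problem to — neither affects the argument.
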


\begin{proof} 
The uniqueness of a closed realization follows immediately from the table.

For $\Re(k)>0$, $\isoK_{\frac\beta{k}}(-\sqrt{k}v)$ is square integrable near $-\infty$ and $\isoK_{\frac\beta{k}}(\sqrt{k}v)$ is square integrable near $-\infty$. 
Using \eqref{conne} and \eqref{wron}, and then Legendre's duplication formula, we get the Wronskian:
\begin{equation}
	\cW(\isoK_{\frac{\beta}{k}}(-\sqrt{k}u),\isoK_{\frac{\beta}{k}}(\sqrt{k}u))
	=\frac{4\pi}{\Gamma(\tfrac{1}{4}-\tfrac{\beta}{2k})\,\Gamma(\tfrac{3}{4}-\tfrac{\beta}{2k})}
	=\frac{2\sqrt{2\pi}}{2^{\tfrac{\beta}{k}}\,\Gamma(\tfrac{1}{2}-\tfrac{\beta}{k})}.
\end{equation}
Now, we apply \eqref{eq:k} to have the kernel on the rhs of \eqref{eq:resol-harmonic1} as a candidate of the resolvent of $N_{k}$, denoting it by $R_{k}(-2\beta)$.
The boundedness of $R_{k}(-2\beta)$ is immediate from the subsection on the isotonic oscillator.

For $k=\ii \ell$, $\ell>0$, and $\pm\Im(2\beta)>0$, we have
\begin{equation}
	\Re\Big(\frac\beta{\mp\ii\ell}\Big)=\mp\frac{\Im\beta}{\ell}<0,
\end{equation}
thus, $\isoK_{\frac{\beta}{\mp\ii\ell}}(\sqrt{\mp\ii\ell} v)$ is square integrable near $+\infty$ and $\isoK_{\frac{\beta}{\mp\ii\ell}}(-\sqrt{\mp\ii\ell} v)$ near $-\infty$.
The boundedness of $R_{k}(-2\beta)$ is  proven as for the isotonic oscillator.

The case $k=0$ is just the well-known free 1d Laplacian.
\end{proof}

As is well-known, the propagator for $N_k$ can be expressed in elementary functions. It is given by the so-called {\em Mehler's formula}
\begin{equation}
	\e^{-\i t \frac12 N_1} (u,v) = \frac{1}{\sqrt{\pi}}\sqrt{\frac{\rho}{1-\rho^2}} \exp\left(-\frac{(1+\rho^2)(u^2+v^2)-2\rho uv}{2(1-\rho^2)}\right),
\end{equation}
where $\rho = \e^{-\i t}$.

\begin{remark}
Recall that in \eqref{remarkdn} we introduced the harmonic oscillators with the Neumann and Dirichlet boundary conditions $N_k^{\mathrm{N}/\mathrm{D}}=N_{k,\mp\frac12}$ on $\rr_+$, as special cases of the isotonic harmonic oscillator.
They are closely related to the harmonic oscillator $N_k$ on $\rr$.

Let $L_\pm^2(\rr)$ denote the subspace of $L^2(\rr)$ consisting of even, resp. odd functions. 
Let us define the unitary operators
\begin{align}
	U_\pm:L^2(\rr_+) &\to L_\pm^2(\rr),\\
	(U_+ g)(v) &:=\frac{1}{\sqrt2}g(|v|),\quad v\in\rr;\\
	(U_- g)(v) &:=\frac{\sgn(v)}{\sqrt2}g(|v|),\quad v\in\rr;\\
	\intertext{so that }  
	(U_\pm^*f)(v) &=\sqrt2f(v),\quad v\in\rr_+.
\end{align}
Then
\begin{equation}
	N_k=U_+N_k^\mathrm{N} U_+^*+U_-N_k^\mathrm{D} U_-^*,
\end{equation}
where ``$+$'' can be replaced by ``$\oplus$'' in the sense of the direct sum $L^2(\rr)=L_+^2(\rr)\oplus L_-^2(\rr)$. 
Consequently,
\begin{equation}
	\sigma(N_k) = \sigma(N_k^\rD) \cup \sigma(N_k^\rN)
\end{equation}
where the spectra of $N_k^\rD$ and $N_k^\rN$ were computed in \eqref{eq:NDNN}.
\end{remark}

\vspace{2em}

\appendix

\section{Bessel equation}
\label{Bessel equation}

There are several kinds of Bessel equations, all equivalent to one another. Their main application is the Helmholtz equation in $d$ dimensions:
\begin{equation}\label{helmo}
	(-\Delta_d + E)f = 0.
\end{equation}
If $E > 0$, \eqref{helmo} can be simplified to $E = 1$, a case often referred to as \emph{hyperbolic}. Conversely, if $E < 0$, which can similarly be simplified to $E = -1$, the case is sometimes known as \emph{trigonometric}. The radial part of \eqref{helmo} on spherical harmonics of order $\ell$ is
\begin{equation}\label{helmo1}
	\Big(-\partial_r^2 - \frac{(d-1)}{r}\partial_r
	+ \Big(\big(\ell + \tfrac{d}{2} - 1\big)^2 - \big(\tfrac{d}{2} - 1\big)^2\Big)\frac{1}{r^2} + E\Big)f = 0.
\end{equation}
The versions of \eqref{helmo1} for various dimensions are equivalent by gauging (or conjugating) the operator with a power of $r$.\footnote{By gauging (or conjugating) the operator $A$ with a function $f(r)$, we mean replacing it with $f(r)Af(r)^{-1}$. See, for example, \eqref{gas}, \eqref{eq:0F1-Hyperbolic2dBessel}, \eqref{eq:Hyperbolic1dBesselequation}, \eqref{eq:1F1}, or \eqref{eq:1dWhittaker-gauage} for gauging with a power of $r$, and \eqref{gas} and \eqref{eq:1F1} for gauging with an exponential.}

The standard Bessel equation corresponds to the 2d trigonometric case, while the so-called modified Bessel equation corresponds to the 2d hyperbolic case. However, sometimes it is convenient to use versions of the Bessel equation for other dimensions. In our paper, in some cases 1d Bessel functions are more convenient; in others, 2d Bessel functions. Therefore, we will discuss both.

All forms of the Bessel equation are equivalent to the so-called ${}_0F_1$ equation, which is not as well-known. In fact, one could argue that the ${}_0F_1$ equation and its standard solutions ${\bf F}_\alpha$ and $U_\alpha$ have a simpler theory than the usual Bessel equation and functions.

In this section, we first discuss the ${}_0F_1$ equation and its solutions. Then we describe hyperbolic and trigonometric 1d and 2d Bessel equations and functions.

In the whole appendix, the variables $w$, $z$, $v$, and $r$ are complex, although elsewhere in this paper we usually restrict them to $]0,\infty[$.

\subsection{${}_0F_1$ equation}

Let $c\in\cc$. The standard solution of the {\em the ${}_0F_1$ equation }
\begin{equation}\label{equa}
	(w\,\p_w^2+c\,\p_w-1)f(w)=0
\end{equation}
is the {\em hypergeometric ${}_0F_1$ function}
\begin{equation}
	{}_0F_1(c;w):=\sum_{n=0}^\infty\frac{w^n}{(c)_nn!}
      \end{equation}
    where for $k\in\nn_0$
\[
	(c)_k = {\Gamma(c+k)\over\Gamma(c)} = 
	\begin{cases} 
		c(c+1)(c+2)\cdots(c+k-1),	& \mbox{if $k\ge 1$\,;}	\\ 
		1, 							& \mbox{if $k=0$}\,.	\\ 
	\end{cases}
\]
If $c\neq0,-1,-2,\dots$, then it is the only solution of the ${}_0F_1$ equation $\sim1$ at 0.
It is convenient to normalize it differently:
\begin{equation}
	{}_0\mathbf{F}_1(c;w):=\frac{{}_0F_1(c;w)}{\Gamma(c)}=\sum_{n=0}^\infty\frac{w^n}{n!\Gamma(c+n)}, 
\end{equation}
so that it is defined for all $c$.

The ${}_0F_1$ equation can be reduced to a special class of the confluent equation by the so-called {\em Kummer's 2nd transformation}:
\begin{equation}\label{gas}
	z\,\p_z^2+c\,\p_z-1
	=\frac{4}{w}\e^{-\frac{w}{2}}\Big(z\,\partial_w^2+(2c-1-w)\,\partial_w-c+\frac12\Big)\e^{\frac{w}{2}},
\end{equation}
where $w=\pm 4\sqrt{z}$, $z=\frac{1}{16}w^2$.
Using this, we can derive an expression for the ${}_0F_1$ function in terms of the confluent function \eqref{eq:kummer-function}:
\begin{align*}
	{}_0F_1(c;z)&=\e^{\mp2\sqrt{z}}{}_1F_1\Big(\frac{2c-1}{2}; 2c-1;\pm4\sqrt{z}\Big). 
\end{align*}

Instead of $c$, it is often more natural to use $\alpha := c - 1$, and rewrite \eqref{equa} as
\begin{equation}\label{equa1}
	(z\,\p_z^2+(\alpha +1)\,\p_z-1)v(z)=0, 
\end{equation}
and set
\begin{equation}
	F_\alpha (z):={}_0F_1(\alpha+1;z),\quad {\bf F}  _\alpha (z):= {}_0{\bf F}_1 (\alpha+1;z) .
\end{equation}

The following function is also a solution of the ${}_0F_1$ equation \eqref{equa1}:
\begin{align*}
	U_\alpha (z)&:=\e^{-2\sqrt z} z^{-\frac{\alpha}{2} -\frac14}
	{}_2 F_0\Big(\frac12+\alpha,\frac12-\alpha;-;-\frac{1}{4\sqrt z}\Big),
\end{align*}
where we used the ${}_2F_0$ function \eqref{2f0}.
Obviously,
\[
	U_\alpha (z)=z^{-\alpha }U_{-\alpha }(z).
\]

As $|z|\to\infty$ and $|\arg z|<\frac{\pi}{2}-\varepsilon$, we have 
\begin{equation}\label{saddle1}
	U_\alpha (z)\sim{\rm exp}(- 2z^{\12}) z^{-\frac{\alpha }2-\frac14}.
\end{equation}
$U_\alpha$  is a unique solution of  \eqref{equa1}  with this property.

We can express $U_\alpha$ in terms of the solutions of with a simple behavior at zero 
\begin{align}\label{conni}
	U_\alpha (z)
	&=\frac{\sqrt\pi}{\sin\pi (-\alpha )} {\bf F}  _\alpha (z)
	+\frac{\sqrt \pi}{\sin\pi \alpha }
	z^{-\alpha } {\bf F}  _{-\alpha }(z).
\end{align}

\subsection{Hyperbolic 2d Bessel equation}
\label{Hyperbolic 2d Bessel equation}

The usual {\em modified Bessel} equation has the form
\begin{equation}\label{eq:whi00+}
	\Big ( -\partial_{r}^2 -\frac1r\partial_r+ \frac{m^2}{r^2}  +1\Big)g=0. 
\end{equation}
We  use the name the {\em hyperbolic 2d Bessel equation} for \eqref{eq:whi00+}.
It is equivalent to the ${}_0F_1$ equation:
\begin{equation}\label{eq:0F1-Hyperbolic2dBessel}
	w^{\frac{ m }{2}} \big(w\,\p_w^2+(1+m)\,\p_w-1\big) w^{-\frac{m}{2}}
	= \partial_r^2+\frac{1}{r}\partial_r-1-\frac{m^2}{r^2},
\end{equation}
where $w=\frac{r^2}{4}$, $r=\pm 2\sqrt w$.

The {\em hyperbolic 2d Bessel function} $I_{m}$ is defined by
\begin{equation}\label{eq:relationship cI and I}
	I_m(r):=\Big(\frac{r}2\Big)^{m}{}_0\mathbf{F}_1\Big(m+1;\frac{r^2}{4}\Big).
\end{equation}
We have the Wronskian
\begin{equation} 
	\cW(I_m,I_{-m})=-\frac{2\sin(\pi m)}{\pi r},
\end{equation}
and for $m\in\mathbb{Z}$
\begin{equation}\label{refle}
	I_m(r)=I_{-m}(r). 
\end{equation}

The {\em 2d Macdonald function} $K_m$ is defined by 
\begin{align}
	K_{-m}(r)=  K_m(r)
	&:=\frac{\sqrt\pi}{2}\Big(\frac{r}{2}\Big)^{m}U_m\Big(\frac{r^2}{4}\Big) \\
	&=\frac{\pi}{2\sin(\pi m)}\big(I_{-m}(r)-I_{m}(r)\big). 
\end{align}

\subsection{Hyperbolic 1d Bessel equation}
\label{Hyperbolic 1d Bessel equation}

The {\em hyperbolic 1d Bessel equation}
\begin{equation}\label{eq:whi0}
	\Big ( -\partial_{r}^2 + \Big ( m^2 - \frac14 \Big ) \frac{1}{r^2}  +1\Big)g=0. 
\end{equation}
is  equivalent to the hyperbolic 2d Bessel equation by a simple gauge
transformation:
\begin{equation}\label{eq:Hyperbolic1dBesselequation}
	r^{-\frac12}	\Big ( -\partial_{r}^2 + \Big ( m^2 - \frac14 \Big ) \frac{1}{r^2} + 1 \Big) r^{\frac12}
	= -\partial_{r}^2 -\frac1r\partial_r+ \frac{m^2}{r^2}  + 1 .
\end{equation}
The {\em hyperbolic 1d Bessel function} $\mathcal{I}_{m}$ is defined by
\begin{equation}\label{eq:relationship cI and I2}
	\cI_m(r):=\sqrt\pi\Big(\frac{r}2\Big)^{\frac12+m}{}_0\mathbf{F}_1\Big(m+1;\frac{r^2}{4}\Big)
	=\sqrt{\frac{\pi r}{2}} I_m(r).
\end{equation}
We have the Wronskian: 
\begin{equation}
	\cW(\cI_m,\cI_{-m})=-\sin(\pi m).
\end{equation}
and for $m\in\mathbb{Z}$, 
\begin{equation}\label{refle2}
	\cI_m(r)=\cI_{-m}(r).
\end{equation}

The {\em 1d Macdonald function} $\cK_m$ is defined by 
\begin{align}
	\cK_{-m}(r):=  \cK_m(r)
	&=\Big(\frac{r}{2}\Big)^{\frac12+m}U_m\Big(\frac{r^2}{4}\Big) 
	=\sqrt{\frac{2 r}{\pi}} K_m(r)\label{eq:relationshop cK and K}\\
	&=\frac{1}{\sin(\pi m)}\big(\cI_{-m}(r)-\cI_{m}(r)\big). 
\end{align}

\subsection{Trigonometric 2d Bessel equation}\label{sec:Standard (trigonometric) Bessel equation}

The usual {\em Bessel equation}, which can be called the {\em trigonometric 2d Bessel equation}, has the form
\begin{equation}\label{eq:whi00}
	\Big(-\partial_{r}^2 - \frac{1}{r}\partial_r + \frac{m^2}{r^2} - 1\Big)g = 0. 
\end{equation}
We can pass from the hyperbolic 2d to the trigonometric 2d Bessel equation by the substitution $r \to \i r$.

The (usual) {\em Bessel function} (or the {\em trigonometric 2d Bessel function}) is
\begin{equation}\label{eq:J and I}
	J_m(r) = \e^{\pm\i\frac{\pi}{2}m} I_m(\e^{\mp\i\frac{\pi}{2}} r).   
\end{equation}
We also have two {\em Hankel functions}:
\begin{equation}\label{eq:H and K}
	H_m^\pm(r) = \frac{2}{\pi} \e^{\mp\i\frac{\pi}{2}(m+1)} K_m(\e^{\mp\i\frac{\pi}{2}} r).
\end{equation}

Note that the traditional notation for  $H_m^\pm$ is $H_m^{(1)}$ and $H_m^{(2)}$. The authors believe that their notation is more handy.

\subsection{Trigonometric 1d Bessel equation}
\label{Trigonometric 1d Bessel equation}

We also have the trigonometric 1d Bessel equation:
\begin{equation}\label{eq:whi0-2}
	\Big ( -\partial_{r}^2 + \Big ( m^2 - \frac14 \Big ) \frac{1}{r^2}  -1\Big)g=0. 
\end{equation}
We can pass from  hyperbolic 1d to trigonometric Bessel equations by the substitution $r\to \i r$.

We can introduce various  kinds of solutions of the 1d  trigonometric Bessel equation: the \emph{1d Bessel function} 
\begin{equation}\label{eq: cJ_m(r) and cI_m}
	\cJ_m(r)
	:= \e^{\pm\i\frac{\pi}{2}(m+\frac12)}\cI_m(\e^{\mp\i\frac{\pi}{2}}r)
	=\sqrt{\frac{\pi r}{2}} J_m(r),
\end{equation}
and the \emph{1d Hankel function of the 1st/2nd kind}
\begin{equation}\label{eq: cH_m(r) and cK_m}
	\cH_m^\pm(r)
	:= \e^{\mp\i\frac{\pi}{2}(m+\frac12)} \cK_m(\e^{\mp\i\frac{\pi}{2}} r)
	=\sqrt{\frac{\pi r}{2}} H^\pm_m(r).
\end{equation}

\section{Whittaker equation}
\label{Whittaker equation}

The ${}_1F_1$ equation, the ${}_2F_0$ equation, and the Whittaker equation are equivalent to one another by certain substitutions and gauge transformations. 
In this section, we briefly describe conventions and properties of solutions to these equations.

Note that our definitions of Whittaker functions differ slightly from some of the literature, e.g., \cite{NIST}. 
In particular, we use what is sometimes called {\em Olver's normalization}, which is advantageous because it avoids singularities for the parameters under consideration. We follow the conventions of \cite{Derezinski2014hyper,DR}.

The Whittaker equation is equivalent to the radial part of the Schr\"odinger equation with the Coulomb potential in any dimension:
\begin{equation}
	\Big(-\Delta_d - \frac{\beta}{r} + \frac{1}{4}\Big)f = 0.
\end{equation}
Therefore, there exists a variant of the Whittaker equation for any dimension. The standard one corresponds to $d = 1$. 
We will also find it convenient to consider the Whittaker equation for $d = 2$.

In the following subsections, we review several equations, equivalent to one another: the ${}_1F_1$ equation, the ${}_2F_0$ equation, the 1d Whittaker equation, the 2d Whittaker equation, and the eigenequation of the isotonic oscillator.

\subsection{${}_1F_1$ equation}

The {\em ${}_1F_1$ hypergeometric equation}, also called the {\em confluent equation} has the form
\begin{equation}\label{eq:confluent-equation}
	\big(r\,\p_r^2+(c-r)\,\p_r-a\big)f(r)=0.
\end{equation}
Its standard solution is {\em Kummer's confluent hypergeometric function} ${}_1F_1(a;c;\cdot)$ 
\begin{equation}\label{eq:kummer-function}
	{}_1F_1(a;c;r): = \suma{k=0}{\infty}\frac{(a)_k}{(c)_k}\frac{r^k}{k!}.
\end{equation}
It is the only solution of \eqref{eq:confluent-equation} behaving as $1$ in the vicinity of $r=0$. It is often convenient to normalize it differently:
\begin{equation}\label{eq:whi}
	{}_1\mathbf{F}_1(a;c;r): = \suma{k=0}{\infty}\frac{(a)_k}{\Gamma(c+k)}\frac{r^k}{k!} = \frac{{}_1F_1(a;c;r)}{\Gamma(c)}.
\end{equation}
It satisfies 1st Kummer's identity
\begin{equation}
	{}_1F_1(a;\,c;\,r) = \e^r {}_1F_1(c-a;\,c;\,-r). 
\end{equation}

\subsection{${}_2F_0$ equation}
The {\em ${}_2F_0$ hypergeometric equation}  has the form
\begin{equation}
	\big(w^2\partial_w^2+(-1+(1+a+b)w)\,\partial_w + ab\big)v(w)=0.
\end{equation}
The ${}_2F_0$ equation has a distinguished solution, which can be expressed as a limit of the Gauss hypergeometric function:
\begin{equation}\label{2f0}
	{}_2F_0(a,b;-;w):=\lim_{c\to\infty}{}_2F_1(a,b;c;cw),
\end{equation}
where we take the limit over $|\arg(c)-\pi|<\pi-\varepsilon$ with $\varepsilon>0$, and the above definition is valid for $w\in\cc \backslash[0,+\infty[$.
Obviously one has
\begin{equation}\label{eq:obvio}
	{}_2 F_0(a,b;-;w)={}_2F_0(b,a;-;w).
\end{equation}
The function extends to an analytic function on the universal cover of $\cc\backslash\{0\}$ with a branch point of an infinite order at 0, and the following asymptotic expansion holds:
\begin{equation*}
	{}_2F_0(a,b;-;w)\sim\sum_{n=0}^\infty\frac{(a)_n(b)_n}{n!}w^n,\quad |\arg(w)|<\pi-\varepsilon.
\end{equation*}

\subsection{The 1d Whittaker equation}
\label{The 1d Whittaker equation}

The usual \emph{Whittaker equation} corresponds to dimension 1 and has the following form:
\begin{equation}\label{eq:Whittaker-hyper}
	\Big(-\partial_r^2 +\big(m^2 - \frac{1}{4}\big)\frac{1}{r^2} - \frac{\beta}{r}+\frac{1}{4}\Big)v(r) = 0.
\end{equation}
It can be reduced to the ${}_1F_1$-equation,
\begin{equation}\label{eq:1F1}
	-r^{\frac{1}{2}\mp m}\e^{\frac{r}{2}}\Big(-\partial_r^2 + \big(m^2 - \frac{1}{4}\big)\frac{1}{r^2} - \frac{\beta}{r} + \frac{1}{4}\Big)r^{\frac{1}{2}\pm m}\e^{-\frac{r}{2}}
	\;=\; r\partial_r^2 + (c-r)\partial_r - a
\end{equation}
for the parameters $c = 1 \pm 2m$ and $a = \frac{1}{2} \pm m-\beta$.
Here the sign $\pm$ has to be understood as two possible choices.
The following function solves the  Whittaker equation \eqref{eq:Whittaker-hyper}:
\begin{equation}\label{eq:serie_I}
	\cI_{\beta,m}(r) 
	:= r^{\frac{1}{2}+m}\e^{\mp \frac{r}{2}} {}_1\mathbf{F}_1\Big(\frac{1}{2}+m\mp\beta;\,1+2m;\,\pm r\Big).
\end{equation}
Note that the sign independence comes from the $1^{\mathrm{st}}$ Kummer's identity. We have the Wronskian
\begin{equation}
	\cW(\cI_{\beta,m},\cI_{\beta,-m})=-\frac{\sin(2\pi m)}{\pi}.
\end{equation}

The Whittaker equation is also equivalent to the ${}_2F_0$ equation.
Indeed by setting $w=-r^{-1}$ we obtain
\begin{align*}
	&-r^{2-\beta}
	\e^{\frac{r}{2}}\Big(-\partial_r^2 + \big(m^2 - \frac{1}{4}\big)\frac{1}{r^2} - \frac{\beta}{r} + \frac{1}{4}\Big)r^{\beta}\e^{-\frac{r}{2}}\\
	&= w^2\partial_w^2+(-1+(1+a+b)w)\,\partial_w + ab.
\end{align*}
for the parameters $a = \frac{1}{2} +m -\beta$ and $b = \frac{1}{2} -m -\beta$.

We  define
\begin{equation*}
	\cK_{\beta,m}(r) 
	:=r^\beta\e^{-\frac{r}{2}} {}_2F_0\Big(\frac12+m-\beta,\frac12-m-\beta;-;-r^{-1}\Big),
\end{equation*}
which is thus a solution of the  Whittaker equation \eqref{eq:Whittaker-hyper}.
The symmetry relation \eqref{eq:obvio} implies that
\begin{equation}\label{eq_sym}
	\cK_{\beta,m}(r) = \cK_{\beta,-m}(r).
\end{equation}
The following connection formulas hold for $2m\notin\zz$\;\!:
\begin{align}\label{connec}
	\cK_{\beta,m}(r)& = \frac{\pi}{\sin(2\pi m)}\Big(-\frac{\cI_{\beta,m}(r)}{\Gamma(\frac{1}{2}-m-\beta)}
	+ \frac{\cI_{\beta,-m}(r)}{\Gamma(\frac{1}{2}+m-\beta)}\Big),\\
	\cI_{\beta,m}(r)& = \frac{\Gamma(\frac{1}{2}-m+\beta)}{2\pi}\Big(\e^{\i\pi m}\cK_{-\beta,m}(\e^{\i\pi}r) + \e^{-\i\pi m}\cK_{-\beta,m}(\e^{-\i\pi}r)\Big).\label{conni2}
\end{align}

Here is an estimate for small $r$:
\begin{align}
	\cI_{\beta,m}(r)=\frac{r^{\frac12+m}}{\Gamma(1+2m)}(1+O(r)).
\end{align}
If $\Re(m)>0$, then
\begin{equation}
	\cK_{\beta,m}(r) = \frac{\Gamma(2m)}{\Gamma(\frac12 + m - \beta)}\Big( r^{\frac12-m} + o(|r|^{\frac12 - m})\Big).
\end{equation}

For large $r$, if $\varepsilon>0$, then for $|\arg r|<\pi-\varepsilon$,
\begin{align}
	\cK_{\beta,m}(r)=r^\beta\e^{-\frac{r}{2}}\big(1+O(r^{-1})\big).
\end{align}
This together with \eqref{conni2} implies the estimates
\begin{align}
	\cI_{\beta,m}(r)&=\frac{1}{\Gamma(\frac12 + m - \beta)}
	r^{-\beta} \e^{\frac{r}{2}} (1+ O(r^{-1})),\quad |\arg r| <
	\frac{\pi}{2}-\epsilon;\\
	|	\cI_{\beta,m}(r)|&\leq\frac{1}{|\Gamma(\frac12 + m - \beta)|}
	r^{-\Re(\beta)}  (1+ O(r^{-1})),\quad |\arg r|=\frac{\pi}{2}.
\end{align}

The relation of the functions $\cI_{0,m}$ and  $\cK_{0,m}$ with $\cI_m$ and $\cK_m$ reads
\begin{equation}\label{eq:K_{0,m}}
	\cI_{0,m}(r)=\frac{2}{\Gamma(\frac12+m)}\cI_m\Big(\frac{r}{2}\Big),\quad
	\cK_{0,m}(r) 
	=\cK_m\Big(\frac{r}{2}\Big).
\end{equation}

\subsection{The 2d Whittaker equation}\label{sec:newWhittaker}

The 2d Whittaker equation has the form
\begin{equation}\label{eq:Whittaker-hyper2}
	\Big(-\partial_r^2 -\frac1r\partial_r+\frac{m^2}{r^2} - \frac{\beta}{r}+\frac{1}{4}\Big)v = 0.
\end{equation}
It is equivalent to the 1d Whittaker equation by a simple gauge transformation
\begin{equation}\label{eq:1dWhittaker-gauage}
	r^{-\frac12}	\Big ( -\partial_{r}^2 + \Big ( m^2 - \frac14 \Big ) \frac{1}{r^2} - \frac{\beta}{r}+\frac{1}{4}\Big) r^{\frac12}
	= -\partial_{r}^2 -\frac1r\partial_r+ \frac{m^2}{r^2}  - \frac{\beta}{r}+\frac{1}{4}.
\end{equation}

In order to make our presentation more transparent, similarly as in the case of Bessel functions,  beside the usual (1d) Whittaker functions,
we define \emph{2d Whittaker functions}, which solve the 2d Whittaker equation:
\begin{align}
	I_{\beta,m}(r)&:=\sqrt{\frac{2}{\pi r}}\, \cI_{\beta,m}(r)
	\qquad	K_{\beta,m}(r):=\sqrt{\frac{\pi}{2r}}\, \cK_{\beta,m}(r).
\end{align}

\subsection{Eigenequation of isotonic oscillator}
\label{Eigenequation of isotonic oscillator}

The eigenequation of the isotonic harmonic oscillator \eqref{eq:isot} (with $k=1$) has the form
\begin{equation}\label{isoto}
	\Big(-\partial_v^2+\big(m^2-\tfrac14\big)\frac{1}{v^2}+v^2-2\beta\Big)f(v) = 0. 
\end{equation}

Let us recall the following identity \eqref{eq:isot-guaging0} involving the change of variables $r=\frac{v^2}{2}$:
\begin{equation} \label{eq:isot-guaging}
	-\partial_r^2 
	+ \big(\tfrac{ m^2}{4} - \tfrac{1}{4}\big) \frac{1}{r^2}
	-\frac{\beta}{r} 
	+ k^2
	=
	v^{-\frac32}\Big(-\partial_v^2 +\Big(m^2-  \tfrac{1}{4} \Big)\frac{1}{v^2}+
	k^2 v^2-2\beta\Big)v^{-\frac12}.
\end{equation} 
$\cI_{\frac\beta{2k},\frac{m}{2}}(2r)$, $\cK_{\frac\beta{2k},\frac{m}{2}}(2r)$ are annihilated by the left hand side of \eqref{eq:isot-guaging}.
Therefore,  \eqref{isoto} is  solved in terms of the following functions:
\begin{align}\label{iso1}
	\isoI_{\beta,m}(v)
	&:=v^{\frac12+m}\e^{\mp\frac{1}{2}v^{2}}{}_{1}\mathbf{F}_{1}\Big(\frac{1+m\mp\beta}{2};\,1+m;\,\pm
	v^{2}\Big)\\\label{iso2}
	&=v^{-\frac12}\cI_{\frac\beta2,\frac{m}2}(v^2) ,
	\\
	\isoK_{\beta,m}(v)
	&:=v^{\beta-\frac12}\e^{-\frac{1}{2}v^{2}}{}_{2}F_{0}\Big(\frac{1+m-\beta}{2},\frac{1-m-\beta}{2};-;-v^{-2}\Big)\\
	&=v^{-\frac12}\cK_{\frac\beta2,\frac{m}2}(v^2) 
	.
\end{align}
Note from \eqref{eq_sym} that
$\isoK_{\beta,m}(r)=\isoK_{\beta,-m}(r)$,
\begin{align}
	\cW(  \isoI_{\beta,m}, \isoI_{\beta,-m})&=-\frac{2\sin\pi m}{\pi},\\
	\label{connec1}
	\isoK_{\beta,m}(v)& = \frac{\pi}{\sin(\pi m)}\Big(-\frac{\isoI_{\beta,m}(v)}{\Gamma(\frac{1-m-\beta}{2})}
	+ \frac{\isoI_{\beta,-m}(v)}{\Gamma(\frac{1+m-\beta}{2})}\Big),\\
	\isoI_{\beta,m}(v)& = \frac{\Gamma(\frac{1-m+\beta}{2})}{2\pi}\Big(\e^{\i\frac{\pi}{2} m}\isoK_{-\beta,m}(\e^{\i\frac{\pi}{2}}v) + \e^{-\i\frac{\pi}{2} m}\isoK_{-\beta,m}(\e^{-\i\frac{\pi}{2}}v)\Big).\label{poroq}
\end{align}

For small $v$, we have
\begin{align}\label{poro}  
	\isoI_{\beta,m}(v)&=\frac{v^{\frac12+m}}{\Gamma(1+m)}(1+O(v^2)).
\end{align}
Hence,
\begin{equation}
	\isoK_{\beta,m}(v) = \frac{\Gamma(m)}{\Gamma(\frac{1 + m - \beta}{2})}\Big( v^{\frac12-m} + o(|v|^{\frac12 - m})\Big).
\end{equation}

For large $v$ we have the asymptotics
\begin{align}\label{poro2}
	\isoK_{\beta,m}(v)
	&=v^{\beta-\frac12}\e^{-\frac{v^2}{2}}(1+O(v^{-2})),\quad
	|\arg v|<\frac\pi2-\epsilon,\quad\epsilon >0.
\end{align}
Together with \eqref{poroq}, this implies
\begin{align}
	\isoI_{\beta,m}(v)
	&=\frac{1}{\Gamma(\frac{1 + m - \beta}2)}
	v^{-\beta-\frac12} \e^{\frac{v^2}{2}} (1+ O(v^{-2})),
	\quad |\arg v| < \frac{\pi}{4}-\epsilon;\\
	|\isoI_{\beta,m}(v)|
	&\leq\frac{1}{|\Gamma(\frac{1 + m - \beta}{2})|}
	v^{-\Re(\beta)-\frac12}  (1+ O(v^{-2})),
	\quad |\arg v|=\frac{\pi}{4}.
\end{align}

\section{Weber equation}\label{sec:weber}

The {\em Weber equation} (also called the {\em parabolic cylinder equation}) has the form
\begin{equation}\label{harmo}
	\left(-\partial_v^2 + v^2 - 2\beta\right)f(v) = 0.
\end{equation}
It is the eigenequation of the harmonic oscillator and a special case of \eqref{isoto}, the eigenequation of the isotonic oscillator, with $m = \pm\frac{1}{2}$.

Note that \eqref{harmo}, unlike \eqref{isoto}, does not have a singularity at $v = 0$. Therefore, its solutions are analytic at $v = 0$.

Let us introduce notation for distinguished solutions of \eqref{harmo}:
\begin{align}
	\label{harmo1}	\isoI_{\beta,\pm}(v) &:= \isoI_{\beta,\mp\frac{1}{2}}(v),\\
	\label{harmo2}	\isoK_{\beta}(v) &:= \isoK_{\beta,\pm\frac{1}{2}}(v).
\end{align}
(In the literature, they are called {\em Weber(-Hermite) functions} or {\em parabolic cylinder functions}).
Note that \eqref{harmo1} and \eqref{harmo2} should be understood as follows: we first define the functions on $[0,\infty[$ as in \eqref{iso1} and \eqref{iso2}; then we extend them analytically to the whole complex plane.

The equation \eqref{harmo} is invariant with respect to the mirror symmetry. Therefore, it is spanned by even and odd solutions.
It is easy to see that $\isoI_{\beta,+}$ is even and $\isoI_{\beta,-}$ is odd:
\begin{equation} 
	\isoI_{\beta,\pm}(-v) = \pm \isoI_{\beta,\pm}(v).
\end{equation}

The function $\isoK_\beta(v)$ has the decaying asymptotics \eqref{poro2} in the positive direction. 
In the negative direction, it usually blows up. $\isoK_\beta(-v)$ is also a solution of the Weber equation. 
It decays in the negative direction.

We have
\begin{align}\label{conne}
	\isoK_\beta(\pm v) &= \pm\frac{\isoI_{\beta,-}(v)}{\Gamma(\frac{3}{4} - \frac{\beta}{2})} -
	\frac{\isoI_{\beta,+}(v)}{\Gamma(\frac{1}{4} - \frac{\beta}{2})},\\
	\cW(\isoI_{\beta,+}, \isoI_{\beta,-}) &= \frac{2}{\pi}.\label{wron}
\end{align}

\vspace{2em}
{\bf Acknowledgement.}
J.D. was supported by National Science Center (Poland) under the grant UMO-2019/35/B/ST1/01651.
J. L. was supported by
the Swiss National Science Foundation through the NCCR SwissMAP,
the SNSF Eccellenza project PCEFP2\_181153, 
by the Swiss State Secretariat for Research and Innovation through the project P.530.1016 (AEQUA), and
Basic Science Research Program through the National Research Foundation of Korea (NRF) funded by the Ministry of Education (RS-2024-00411072).

\end{document}